\definecolor{mylinkcolor}{RGB}{0,0,140}
\definecolor{mylinkcolor}{RGB}{0,0,140}
\DeclarePairedDelimiter\ceil{\lceil}{\rceil}
\DeclarePairedDelimiter\floor{\lfloor}{\rfloor}
\newcommand{\cut}{\textbf{cut}}
\providecommand{\mat}[1]{\boldsymbol{\mathrm{#1}}}%
\renewcommand{\vec}[1]{\boldsymbol{\mathrm{#1}}}
\DeclareMathOperator*{\minimize}{minimize}
\DeclareMathOperator{\argmin}{argmin}
\providecommand{\mA}{\ensuremath{\mat{A}}}
\providecommand{\mD}{\ensuremath{\mat{D}}}
\providecommand{\mL}{\ensuremath{\mat{L}}}
\providecommand{\mM}{\ensuremath{\mat{M}}}
\providecommand{\ve}{\ensuremath{\vec{e}}}
\providecommand{\vw}{\ensuremath{\vec{w}}}
\providecommand{\vx}{\ensuremath{\vec{x}}}
\newtheorem{problem}{Problem}
\begin{document}
	
\title{Cut-matching Games for Generalized Hypergraph Ratio Cuts}
	
	
	%
	
\begin{abstract}
	Hypergraph clustering is a basic algorithmic primitive for analyzing complex datasets and systems characterized by multiway interactions, such as group email conversations, groups of co-purchased retail products, and co-authorship data. This paper presents a practical $O(\log n)$-approximation algorithm for a broad class of hypergraph \emph{ratio cut} clustering objectives. This includes objectives involving generalized hypergraph cut functions, which allow a user to penalize cut hyperedges differently depending on the number of nodes in each cluster. Our method is a generalization of the cut-matching framework for graph ratio cuts, and relies only on solving maximum s-t flow problems in a special reduced graph. It is significantly faster than existing hypergraph ratio cut algorithms, while also solving a more general problem. In numerical experiments on various types of hypergraphs, we show that it quickly finds ratio cut solutions within a small factor of optimality.
\end{abstract}

	\author{Nate Veldt}
	\affiliation{
		\institution{Texas A\&M University}
		\department{Department of Computer Science and Engineering}
	}
	\email{nveldt@tamu.edu}

	\maketitle
	
	\section{Introduction}
Graphs are a popular way to model social networks and other datasets characterized by pairwise interactions, but there has been a growing realization that many complex systems of interactions are characterized by \emph{multiway} relationships that cannot be directly encoded using graph edges. For example, users on social media join interest groups and participate in group events, online shoppers purchase multiple products at once, discussions in Q\&A forums typically involve multiple participants, and email communication often involves multiple receivers. These higher-order interactions can be modeled by a \emph{hypergraph}: a set of nodes that share multiway relationships called hyperedges. A standard primitive for analyzing group interaction data is to perform clustering, i.e., identifying groups of related nodes in a hypergraph. This is a direct generalization of the well-studied \emph{graph} clustering problem. Hypergraph clustering methods have been used for many different data analysis tasks, such as detecting different types of Amazon products based on groups of co-reviewed products~\cite{veldt2020hyperlocal}, identifying related vacation rentals on Trivago from user browsing behavior~\cite{chodrow2021generative}, clustering restaurants based on Yelp reviews~\cite{liu2021strongly}, and finding related posts in online Q\&A forums from group discussion data~\cite{veldt2021approximate}.

A ``good'' cluster in a (hyper)graph is a set of nodes with many internal (hyper)edges, but few (hyper)edges connecting different clusters. One of the most standard ways to formalize this goal is to minimize a {ratio cut} objective, which measures the ratio between the cluster's \emph{cut} and some notion of the cluster's size. In the graph setting, the {cut} simply counts the number of edges across the cluster boundary. The standard hypergraph cut function similarly counts the number of hyperedges on the boundary.  However, unlike the graph setting, there is more than one way to partition a hyperedge across two clusters. This has led to the notion of a generalized hypergraph cut function, which assigns different penalties to different ways of splitting a hyperedge. The performance of downstream hypergraph clustering applications can strongly depend on the type of generalized cut penalty that is used~\cite{liu2021strongly,fountoulakis2021local,veldt2020hyperlocal,veldt2021approximate,veldt2022hypercuts,panli2017inhomogeneous,panli_submodular}. 

\paragraph{Previous work and existing limitations.} Common ratio cut objectives include conductance, normalized cut, and expansion, which have been defined both for graphs and hypergraphs (see Section~\ref{sec:prelims}).  These problems are NP-hard even in the graph case, but approximate solutions can be obtained using spectral methods, which often come with so-called \emph{Cheeger} inequality guarantees~\cite{chung1997spectral}, or by using various types of multicommodity-flow and expander embedding methods~\cite{arora2009expander,khandekar2009graph,leighton1999multicommodity,orecchia2008partitioning}. Several existing techniques for approximating graph ratio cuts have already been extended to hypergraphs, but there are many new challenges in this setting. First of all, many generalized notions of graph Laplacian operators are nonlinear, which leads to additional challenges in computing or approximating eigenvectors and eigenvalues to use for clustering~\cite{chan2018spectral,panli_submodular,yoshida2019cheeger}. Secondly, there are many challenges associated with obtaining guarantees for generalized hypergraph cut functions, which can be much more general than graph cut functions. 

Given the above challenges, there is currently a wide gap between theoretical algorithms and practical techniques for minimizing hypergraph ratio cut objectives. Many theoretical approximation algorithms rely on expensive convex relaxations and complicated rounding techniques that are not implemented in practice~\cite{chan2018spectral,panli_submodular,yoshida2019cheeger,louis2016approximation,kapralov2020towards}. This includes memory-intensive LP and SDP relaxations with $O(n^3)$ constraints~\cite{louis2016approximation,kapralov2020towards} (where $n$ is the number of nodes), other SDP relaxations with very large constraint sets involving Lov\'{a}sz extensions of submodular functions~\cite{panli_submodular,yoshida2019cheeger}, and rounding techniques whose implementation details rely on black-box subroutines with unspecified constants~\cite{louis2016approximation}. 
While these methods are interesting from a theoretical perspective, they are far from practical. 
Another downside specifically for spectral techniques~\cite{panli2017inhomogeneous,BensonGleichLeskovec2016,panli_submodular,chan2018spectral,yoshida2019cheeger,ikeda2022finding} is that their Cheeger-like approximation guarantees are $O(n)$ in the worst case, which is true even in the graph setting. Spectral techniques also often come with approximation guarantees that get increasingly worse as the maximum hyperedge size grows~\cite{BensonGleichLeskovec2016,panli2017inhomogeneous,chan2018spectral}. A limitation in another direction is that many existing hypergraph ratio cut algorithms are designed only for the simplest notion of a hypergraph cut function rather than generalized cuts~\cite{chan2018spectral,BensonGleichLeskovec2016,Zhou2006learning,kapralov2020towards,louis2016approximation,ikeda2022finding}. Finally, although a number of recent techniques apply to generalized hypergraph cut functions and come with practical implementations, these do not provide theoretical approximation guarantees for global ratio cut objectives because they are either heuristics~\cite{panli_submodular} or because they focus on minimizing localized ratio cut objectives that are biased to a specific region of a large hypergraph~\cite{veldt2020hyperlocal,liu2021strongly,ibrahim2020local,fountoulakis2021local,veldt2021approximate}.

\paragraph{The present work: practical and general approximation algorithms for hypergraph ratio cuts.} 
\begin{table*}[t]
	\caption{A summary of algorithms and guarantees for minimizing  hypergraph ratio cuts, where $n$ is the number of nodes. 
		There are differences among different \emph{Cheeger}-style approximation guarantees (e.g., how the approximation scales with the largest hyperedge size), but none are better than $O(n)$ in the worst case. 
		Submodular cardinality-based cut functions are significantly more general than the most basic \emph{all-or-nothing} function, and are arguably the most common submodular hypergraph cut functions used in practice. The previous methods listed here were developed for a specific node weight function corresponding to a specific ratio cut problem (e.g., expansion or conductance), whereas our approach is designed for arbitrary node weights.
		The last column indicates whether an implementation was provided when the method was first presented. Methods without an implementation come with various practical challenges and were designed primarily as theoretical results.
	}
	\label{tab:comp}
	\centering
	\begin{tabular}{l l l l l}
		\toprule 
		Method & Approx Guarantee & Hypergraph cut function & Main Technique & Implemented    \\
		\midrule
		CLTZ~\cite{chan2018spectral} & Cheeger & All-or-nothing & SDP relaxation & No \\
		Li-Milenkovic CE~\cite{panli2017inhomogeneous} & Cheeger  & Submodular & Clique expansion + graph spectral & Yes \\
		Li-Milenkovic IPM~\cite{panli_submodular} & N/A & Submodular & Inverse power method & Yes \\
		Li-Milenkovic SDP~\cite{panli_submodular} & Cheeger & Submodular & SDP relaxation & No \\
		IMTY HeatEQ~\cite{ikeda2022finding} & Cheeger & All-or-nothing & Solving hypergraph heat equation & No \\
		Louis-Makarychev~\cite{louis2016approximation} & $O(\sqrt{\log n})$ & All-or-nothing & SDP relaxation & No \\
		KKTY~\cite{kapralov2020towards} & $O(\log n)$ & All-or-nothing & LP relaxation & No \\
		{\bf This paper} & \textbf{$O(\log n)$} & \textbf{Submodular, cardinality-based} & Cut-matching (repeated max-flow) & \textbf{Yes }\\
		\bottomrule 
	\end{tabular}
	\end{table*}

This paper presents the first algorithm for hypergraph ratio cuts 
that comes with a nontrivial approximation guarantee (i.e., better than $O(n)$ in the worst case) and also applies to generalized hypergraph cuts. Additionally, compared with approximation algorithms applying only to the standard hypergraph cut function, our method comes with a substantially faster runtime guarantee and a practical implementation.
In more detail, our algorithm has a worst-case $O(\log n)$ approximation guarantee that is independent of the maximum hyperedge size. It applies to any cardinality-based submodular hypergraph cut function, which is a general and flexible hypergraph cut function that captures many popular hypergraph cut functions as special cases~\cite{veldt2022hypercuts}. It relies only on solving a sequence of maximum $s$-$t$ flow problems in a special type of reduced graph, which can be made efficient using an off-the-shelf max-flow subroutine. Table~\ref{tab:comp} provides a comparison of our method against the best previous algorithms for hypergraph ratio cuts in terms of approximation guarantees, the type of hypergraph cut functions they apply to, and whether or not a practical implementation exists.

Our method generalizes the \emph{cut-matching} framework for graph ratio cuts~\cite{khandekar2009graph,orecchia2008partitioning}, which alternates between solving max-flows and finding bisections in order to embed an expander into the graph. Although maximum flow problems and expander embeddings are well-understood in the graph setting, these concepts are more nuanced and not as well understood in hypergraphs, especially in the case of generalized hypergraph cuts. We overcome these technical challenges by providing a new approach for embedding an expander \emph{graph} inside a hypergraph, which is accomplished by solving maximum flow problems in a reduced directed graph that models the generalized cut properties of the hypergraph. To summarize: 
\begin{itemize}[itemsep =0pt,leftmargin = 10pt]
	\item We present a new framework for embedding expander graphs into hypergraphs in order to obtain lower bounds for NP-hard hypergraph ratio cut objectives (Lemma~\ref{lem:hypbound}).
	\item We present an $O(\log n)$-approximation algorithm that applies to hypergraph ratio cut objectives with any submodular cardinality-based cut function and any positive node weight function, and relies simply on solving a sequence of graph max-flow problems.
	\item We provide additional practical techniques for making our method more efficient in practice and for obtaining improved lower bounds and a posteriori approximation guarantees (Theorem~\ref{thm:bounds}).
	\item We implement our method and show that it is orders of magnitude more scalable than previous approximation algorithms and allows us to detect high quality ratio cuts in large hypergraphs from numerous application domains.
\end{itemize}

	\section{Preliminaries and Related Work}
\label{sec:prelims}
Let $G = (V,E)$ denote a graph with node set $V$ and edge set $E$. We denote edges in undirected graphs using bracket notation $e = \{u,v\} \in E$; an edge may additionally be associated with a nonnegative edge weight $w_G(e) = w_G(u,v) \geq 0$. Given a subset of nodes $S \subseteq V$, let $\bar{S} = V\backslash{S}$ denote the complement set. The boundary of $S$ is given by
\begin{equation}
	\label{eq:boundarys}
	\partial_G(S) = \partial_G (\bar{S}) = \{ e \in E \colon |e \cap S| = 1 \text{ and } |e \cap \bar{S}| = 1\}.
\end{equation}
This is the set of edges with one endpoint in $S$ and the other endpoint in $\bar{S}$. The \emph{cut} of $S$ in $G$ is the weight of edges crossing the boundary:
\begin{equation}
	\label{eq:cut}
	\cut_G(S) = \cut_G(\bar{S}) = \sum_{e \in \partial_G(S)} w_G(e).
\end{equation}
Given a positive node weight function $\pi \colon V \rightarrow \mathbb{R}_{> 0}$, the ratio cut objective with respect to $\pi$ for a set $S$ is denoted by
\begin{equation}
	\label{eq:expansion}
	\phi_G(S,\pi) = \frac{\cut_G(S)}{\min \{\pi(S) , \pi(\bar{S}) \}},
\end{equation}
where $\pi(S) = \sum_{u \in S} \pi(u)$. This captures the well-known special case of \emph{conductance} (when $\pi(u)$ equals node degree $d_u$), and \emph{expansion} (when $\pi(u)  = 1$). We will refer to $\phi_G(S,\pi)$ as the $\pi$-expansion of $S$. The $\pi$-expansion of a graph $G$ is then 
\begin{equation}
	\label{eq:minepx}
	\phi_{G,\pi} = \min_{S \subset V} \phi_G(S,\pi).
\end{equation}
We will drop $G$ and $\pi$ from subscripts when they are clear from context. A graph is known as an \emph{expander} graph if its expansion is at least a constant. Formally, for a given node weight function $\pi$, we say that $G = (V,E)$ is a $\pi$-expander if there exists a constant $c > 0$ such that for every $S \subseteq V$, $\phi(S) > c$. 



\textbf{Directed graphs.} Our results rely on notions of cuts in directed graphs. If $G = (V,E)$ is directed, we use parenthesis notation to denote edge directions, i.e. $(u,v) \in E$ indicates an edge from node $u$ to node $v$. For a set $S \subseteq V$, the directed cut function is 
\begin{equation*}
	{\cut}_G(S) = \sum_{(u,v)\in E \colon u \in S, v \in \bar{S}} w_G(u,v).
\end{equation*}
Expansion in a directed graph is defined by using a directed cut function in~\eqref{eq:expansion}. We can treat an undirected graph $G$ as a directed graph by replacing each undirected edge  with two directed edges, in which case the two notions of $\pi$-expansion match.
\subsection{Graph flows}
A flow function $f \colon E \rightarrow \mathbb{R}_{\geq 0}$ on a directed graph $G = (V,E)$ assigns a nonnegative flow value $f_{ij} \geq 0$ to each edge $(i,j) \in E$. If $f_{ij} \leq w_G(i,j)$ for each $(i,j) \in E$, then $f$ satisfies \emph{capacity constraints}. In general, the \emph{congestion} of $f$ is the maximum capacity violation:
\begin{equation}
	\label{eq:congestion-def}
	\textbf{congestion}(f) = \max_{(i,j) \in E} \; \frac{f_{ij}}{w_G(i,j)}.
\end{equation}
We say that $f$ satisfies flow constraints at a node $v \in V$ if the flow into $v$ equals the flow out of $v$:
\begin{equation}
	\sum_{i \colon (i,v) \in E} f_{iv} = \sum_{j \colon (v,j) \in E} f_{vj}.
\end{equation}
If the flow into $v$ is greater than the flow out of $v$ then $v$ is an \emph{excess} node. If the flow out of $v$ is more than the flow into $v$ then $v$ is a \emph{deficit} node.
Given two flow functions $f^{(1)}$ and $f^{(2)}$, the sum $f' = f^{(1)}  + f^{(2)}$ is the flow obtaining by defining $f'_{ij} = f^{(1)}_{ij} + f^{(2)}_{ij}$ for each $(i,j) \in E$. This flow satisfies $\textbf{congestion}(f') \leq \textbf{congestion}(f^{(1)}) + \textbf{congestion}(f^{(2)})$. If $f^{(1)}$ and $f^{(2)}$ both satisfy flow constraints at a node $v$, then $f'$ will as well. We will consider two special types of flows.
\begin{definition} [$s$-$t$ flow]
	Given $\{s,t\} \subseteq V$, an $s$-$t$ flow on $G$ is a flow that satisfies
	flow constraints on each $v \in V- \{s,t\}$.
	We say that $f$ routes flow from $s$ to $t$, and has a flow value
	\begin{equation}
		|f| = \sum_{i \colon (s,i) \in E} f_{si} - \sum_{j \colon (j,s) \in E} f_{js}.
	\end{equation}
\end{definition}

\begin{definition} [multicommodity flow]
	A multicommodity flow problem in $G$ is defined by a set $\mathcal{D}$ of demand pairs $(u,v) \in V \times V$ and corresponding weights $w_{uv} \geq 0$. The flow $f$ is a feasible multicommodity flow for $\mathcal{D}$ if it can be written as $f = \sum_{(u,v) \in \mathcal{D}} f^{(uv)}$ where $f^{(uv)}$ is a $u$-$v$ flow that routes $w_{uv}$ units of flow from $u$ to $v$.
	%
\end{definition}

We will use the following standard flow decomposition lemma. 
\begin{lemma}{(Theorem 3.5 in~\cite{ahuja1988network} or Lemma 2.20 in~\cite{williamson2019network}.)}
	\label{lem:flowdeomp}
	Let $f$ be a flow function on a graph $G = (V,E)$ with $n$ nodes and $m$ edges. The flow can be decomposed as $f = \sum_{i = 1}^\ell f_i$
	where $\ell \leq m + n$, and where for each $i$, the edges with positive flow in $f_i$ either form a simple directed path from a deficit node to an excess node or form a cycle. 
\end{lemma}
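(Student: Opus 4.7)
The plan is to prove the decomposition by an iterative extraction algorithm: at each step I identify either a simple directed path from a deficit node to an excess node, or a simple directed cycle, on which the current flow is strictly positive; I subtract as much flow as possible along that structure; and record this as one of the summands $f_i$. The proof then reduces to two points, namely (i) that such a path or cycle always exists as long as $f \not\equiv 0$, and (ii) that the number of iterations is at most $m+n$ via a potential argument.

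To find a path or cycle, suppose first that $f$ has a deficit node $v_0$. I trace positive-flow edges forward: pick any $(v_0,v_1)$ with $f_{v_0 v_1} > 0$, which exists because $v_0$ has strictly positive net outflow, and extend by picking $(v_i,v_{i+1})$ with $f_{v_i v_{i+1}} > 0$. At any intermediate balanced vertex $v_i$, flow conservation combined with the positive incoming flow on $(v_{i-1},v_i)$ guarantees a positive-flow outgoing edge, so extension can only ever be blocked at an excess vertex. Since $V$ is finite, the trace either reaches an excess node, yielding a simple path $P$ from a deficit to an excess node, or it revisits a previously seen vertex, in which case an appropriate suffix forms a simple directed cycle $C$. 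If instead $f$ has no deficit node, then by summing the conservation relation over $V$ it also has no excess node, and the same forward tracing started at an endpoint of any positive-flow edge must revisit a vertex and produce a simple cycle.

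Given such a $P$ or $C$, let $\alpha$ be the minimum of the positive flow values along its edges, together with the deficit at $v_0$ and the excess at the terminal vertex in the path case. Define $f_i$ to equal $\alpha$ on every edge of $P$ (resp.\ $C$) and $0$ elsewhere. Then $f - f_i$ still satisfies flow conservation at every vertex where $f$ did, and by the choice of $\alpha$ at least one of the following holds: some previously positive-flow edge now has zero flow, the deficit at $v_0$ becomes zero, or the excess at the terminal vertex becomes zero. Hence the potential
\[
\Phi(f) \,=\, |\{e \in E : f_e > 0\}| \,+\, |\{v \in V : v \text{ is imbalanced under } f\}|
\]
strictly decreases by at least one per extraction. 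Since initially $\Phi(f) \leq m + n$ and $\Phi \geq 0$ throughout, the procedure terminates after at most $\ell \leq m+n$ steps, giving the required decomposition.

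The only genuinely delicate point is the forward-tracing argument: one must verify that the trace cannot get stuck at a balanced interior vertex, which is exactly where flow conservation is used, and that when the trace first revisits a vertex, the suffix between the two visits is a simple cycle (a direct pigeonhole observation). Everything else, including checking that each extracted $f_i$ is itself a path- or cycle-supported flow and that the running decomposition and residual partition the edge-flows additively, is routine bookkeeping.
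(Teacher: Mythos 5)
Your proof is correct and is exactly the standard path/cycle extraction argument with the $\Phi = \#\{\text{positive edges}\} + \#\{\text{imbalanced nodes}\}$ potential; the paper does not prove this lemma itself but cites it from Ahuja et al.\ and Williamson, where this same argument appears. The two delicate points you identify (non-stuckness at balanced interior vertices via conservation, and capping $\alpha$ by the endpoint imbalances so no vertex flips from deficit to excess) are handled correctly, so there is nothing to add.
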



\subsection{General hypergraphs cuts and expansion }
A hypergraph $\mathcal{H} = (V,\mathcal{E})$ is a generalization of a graph where $V$ denotes a node set and $\mathcal{E}$ is a \emph{hyperedge} set where each $e \in \mathcal{E}$ is a subset of nodes in $V$ of arbitrary size. The \emph{order} of a hyperedge $e \in \mathcal{E}$ is the number of nodes it contains, $|e|$. The degree of a node $v$ is denoted by $d_v$. The boundary of $S\subseteq V$ is the set of hyperedges crossing a bipartition $\{S, \bar{S}\}$:
\begin{equation}
	\label{eq:hypbound}
	\partial_\mathcal{H}(S) = \partial_\mathcal{H} (\bar{S}) = \{ e \in \mathcal{E} \colon |e \cap S| > 0 \text{ and } |e \cap \bar{S}| > 0\}.
\end{equation}
If each hyperedge $e \in \mathcal{E}$ has a nonnegative scalar weight $w_\mathcal{H}(e)$, the standard \emph{all-or-nothing} hypergraph cut function is given by
\begin{equation}
	\label{eq:aon}
	\textstyle{\cut_\mathcal{H}(S) = \sum_{e \in \partial_\mathcal{H}(S)} w_\mathcal{H}(e)}.
\end{equation}
It is useful in many settings to assign different cut penalties for separating the nodes of a hyperedge in different ways. This has led to the concept of generalized hypergraph cut functions~\cite{yoshida2019cheeger,panli2017inhomogeneous,panli_submodular,veldt2022hypercuts,veldt2020hyperlocal,veldt2021approximate,fountoulakis2021local,liu2021strongly,zhu2022hypergraph}. Formally, each hyperedge $e \in \mathcal{E}$ is associated with a \emph{splitting function} $\vw_e \colon A \subseteq e \rightarrow \mathbb{R}$ that assigns a penalty for each way of separating the nodes of a hyperedge. The generalized hypergraph cut is then given by
\begin{equation}
	\label{eq:gencut}
	\cut_\mathcal{H}(S) = \sum_{e \in \partial_\mathcal{H}(S)} \vw_e(S \cap e).
\end{equation}
Following recent work~\cite{veldt2022hypercuts,veldt2020hyperlocal,veldt2021approximate,panli_submodular}, we focus on splitting functions satisfying the following properties for all $A,B \subseteq e$:
\begin{align*}
	(\text{nonnegative) } & \vw_e(A) \geq 0 \\
	(\text{uncut-ignoring) } & \vw_e(\emptyset) = 0   \\
	(\text{symmetric) } 	& \vw_e(A) = \vw(e \backslash A) \\
	(\text{submodular) }  & \vw_e(A) + \vw_e(B) \geq \vw_e(A \cap B) + \vw_e(A \cup B) \\
	(\text{cardinality-based) }  & \vw_e(A) = \vw_e(B) \text{ if $|A| = |B|$}.
\end{align*}
%
When these conditions are satisfied, $\cut_\mathcal{H}$ is a \emph{cardinality-based submodular hypergraph cut function}. To simplify terminology we will refer to it simply as a \emph{generalized hypergraph cut function}.

\paragraph{Generalized hypergraph cut expansion}
We are now ready to formalize the generalized ratio cut problem we are interested in solving over a hypergraph $\mathcal{H} = (V,\mathcal{E})$.
	Given a positive node weight function $\pi \colon V \rightarrow \mathbb{R}_{> 0}$ and a generalized hypergraph cut function $\cut_\mathcal{H}$, the hypergraph $\pi$-expansion of a set $S \subseteq V$ is defined to be.
	\begin{equation}
		\label{eq:hyperexpansion}
		\phi_\mathcal{H} (S,\pi) = \frac{\cut_\mathcal{H}(S)}{\min \{\pi(S) , \pi(\bar{S}) \}}.
	\end{equation}
	\begin{tcolorbox}
	\begin{problem}
	The hypergraph $\pi$-expansion problem is to solve
	\begin{equation*}
		\phi_{\mathcal{H}, \pi} = \min_{S \subseteq V} \frac{\cut_\mathcal{H}(S)}{\min \{\pi(S) , \pi(\bar{S}) \}}
	\end{equation*}
where $\cut_\mathcal{H}$ is a generalized hypergraph cut function and $\pi$ is a positive node weight function. 
	\end{problem}
\end{tcolorbox}
	Our goal is to develop an approximation algorithm for this objective that applies to submodular cardinality-based hypergraph cut functions and arbitrary node weight functions. For runtime analyses, we assume that when hyperedge cut penalties and node weights are scaled to be at least one, then all other node weights and cut penalties are polynomially bounded in terms of $|V|$ and $|\mathcal{E}|$.
	


\subsection{Related work}
Expansion ($\pi(v) = 1$) and conductance ($\pi(v) = d_v$) are two of the most widely-studied graph ratio cut objectives. 
For many years, the best approximation for graph expansion and conductance was $O(\log n)$, based on solving a multicommodity flow problem~\cite{leighton1999multicommodity} that can also be viewed as a linear programming relaxation. The current best approximation factor is $O(\sqrt{\log n})$, based on a semidefinite programming relaxation~\cite{arora2009expander}. Khandekar et al.~\cite{khandekar2009graph} later introduced the \emph{cut-matching} framework, providing an $O(\log^2 n)$ approximation based on faster maximum $s$-$t$ flow computations, rather than multicommodity flows. The same framework was later used to design improved $O(\log n)$-approximation algorithms~\cite{orecchia2008partitioning,orecchia2011fast}.

Many variants of the hypergraph ratio-cut objective~\eqref{eq:hyperexpansion} have previously been considered~\cite{louis2015hypergraph,veldt2022hypercuts,Zhou2006learning,panli2017inhomogeneous,yoshida2019cheeger,panli_submodular,liu2021strongly}, which consider different types of cut functions and node weight functions $\pi$. A number of Cheeger-style approximation guarantees have been developed for these types of objectives based on eigendecompositions of nonlinear hypergraph Laplacians~\cite{panli_submodular,yoshida2019cheeger,chan2018spectral}, but in the worst case these yield $O(n)$ approximation factors. Other techniques rely on clique expansions of hypergraphs~\cite{BensonGleichLeskovec2016,Zhou2006learning,panli2017inhomogeneous}, which also have approximation factors that are $O(n)$ in the worst case
and additionally scale poorly with the maximum hyperedge size. The $O(\log n)$ multicommodity flow algorithm~\cite{leighton1999multicommodity} and the SDP-based $O(\sqrt{\log n})$ approximation for expansion~\cite{arora2009expander} have both been generalized to the hypergraph setting~\cite{louis2016approximation,kapralov2020towards}. However, these only apply to the standard {all-or-nothing} hypergraph cut function~\eqref{eq:aon}, and focus exclusively on either the expansion node weight choice ($\pi(v) = 1$)~\cite{louis2016approximation} or the conductance node weight choice ($\pi(v) = d_v$)~\cite{kapralov2020towards}. Finally, there are numerous results on minimizing locally-biased ratio cut objectives in graphs~\cite{Andersen:2008:AIG:1347082.1347154,AndersenChungLang2006,Veldt2019flow,veldt16simple,orecchia2014flow,kloster2014heat,mahoney2012local} and hypergraphs~\cite{veldt2020hyperlocal,veldt2021approximate,liu2021strongly,fountoulakis2021local,ibrahim2020local}, but these do not provide guarantees for global ratio cut objectives.

\textbf{Concurrent work on hypergraph cut-matching.} In independent and concurrent work, Ameranis et al.~\cite{ameranis2023efficient} developed an $O(\log n)$-approximation algorithm for hypergraph ratio cuts with generalized cut functions, also by extending the cut-matching framework of Khandekar et al.~\cite{khandekar2009graph}. 
At a high-level, their approach for obtaining an $O(\log n)$-approximation is analogous to ours in that it relies on reducing the hypergraph to a graph and then solving maximum flow problems to embed a multicommodity flow problem in the hypergraph. On a more technical level, the strategies differ in terms of the graph reduction technique that is used and the maximum flow problem that needs to be solved. Ameranis et al.~\cite{ameranis2023efficient} use a bipartite graph expansion of the hypergraph, which replaces each hyperedge with a new vertex and adds undirected edges between original vertices and hyperedge vertices. They then solve maximum flow problems in \emph{undirected} flow graphs with vertex capacity constraints, in order to satisfy a notion of hypergraph flow they define. We instead use an augmented cut preserver (see Definition~\ref{def:augpres}) when reducing the hypergraph to a graph, and then repeatedly solve directed maximum flow problems \emph{without} vertex capacity constraints. 

In terms of theory, the algorithm designed by Ameranis et al.~\cite{ameranis2023efficient} applies to a new notion of a monotone submodular cut function, which is more restrictive than a general submodular cut function but generalizes the submodular cardinality-based setting we consider. Although their approach shows promise for practical implementations, the authors focused on theoretical contributions, and therefore did not implement the method or show numerical results. In contrast, one of the main contributions of our work is to provide a practical implementation for hypergraph cut-matching games and show detailed comparisons against other algorithms for hypergraph ratio cuts. Combining these strategies for hypergraph cut-matching in order to develop even better theoretical results and implementations is a promising direction for future research.

	\section{Hypergraph Expander Embeddings}
Expander embeddings are common techniques for lower bounding expansion in undirected graphs~\cite{leighton1999multicommodity,sherman2009breaking,arora2009expander,khandekar2009graph}. We generalize this basic approach in order to develop a strategy for lower bounding hypergraph ratio cut objectives, by embedding an expander {graph} into a special type of directed graph that models a generalized hypergraph cut function. 

\subsection{Hypergraph cut preservers}
Many hypergraph clustering methods rely on reducing a hypergraph to a graph and then applying an existing graph technique~\cite{panli2017inhomogeneous,veldt2022hypercuts,ihler1993modeling,BensonGleichLeskovec2016,yin2017local}. We use a precise previous notion of hypergraph reduction for modeling generalized hypergraph cuts~\cite{veldt2022hypercuts,benson2020augmented}.
%
\begin{definition}[augmented cut preserver]
	\label{def:augpres}
	Let $\mathcal{H} = (V,\mathcal{E})$ be a hypergraph with generalized cut function. The directed graph $G(\mathcal{H}) = (\hat{V},\hat{E})$ is an \emph{augmented cut preserver} for $\mathcal{H}$ if $\hat{V} = V \cup \mathcal{A}$ where $\mathcal{A}$ is an auxiliary node set and if $G(\mathcal{H})$ preserves cuts in $\mathcal{H}$ in the sense that for every $S \subseteq V$:
		\begin{equation}
			\label{eq:cutpreserve}
			\cut_\mathcal{H}(S) = \min_{U \subseteq \mathcal{A}} \cut_{G(\mathcal{H})} (S\cup U).
		\end{equation}
\end{definition}
The cut preserving property in~\eqref{eq:cutpreserve} essentially says that for any fixed $S \subseteq V$, we can arrange nodes from $\mathcal{A}$ into two sides of a cut in a way that minimizes the overall (directed) cut penalty in $G(\mathcal{H})$. There is more than one way to construct an augmented cut preserver $G(\mathcal{H})$ for a cardinality-based submodular hypergraph cut function~\cite{veldt2022hypercuts,veldt2021approximate,kohli2009robust}.
\begin{figure}
	\centering
	\includegraphics[width=.6\linewidth]{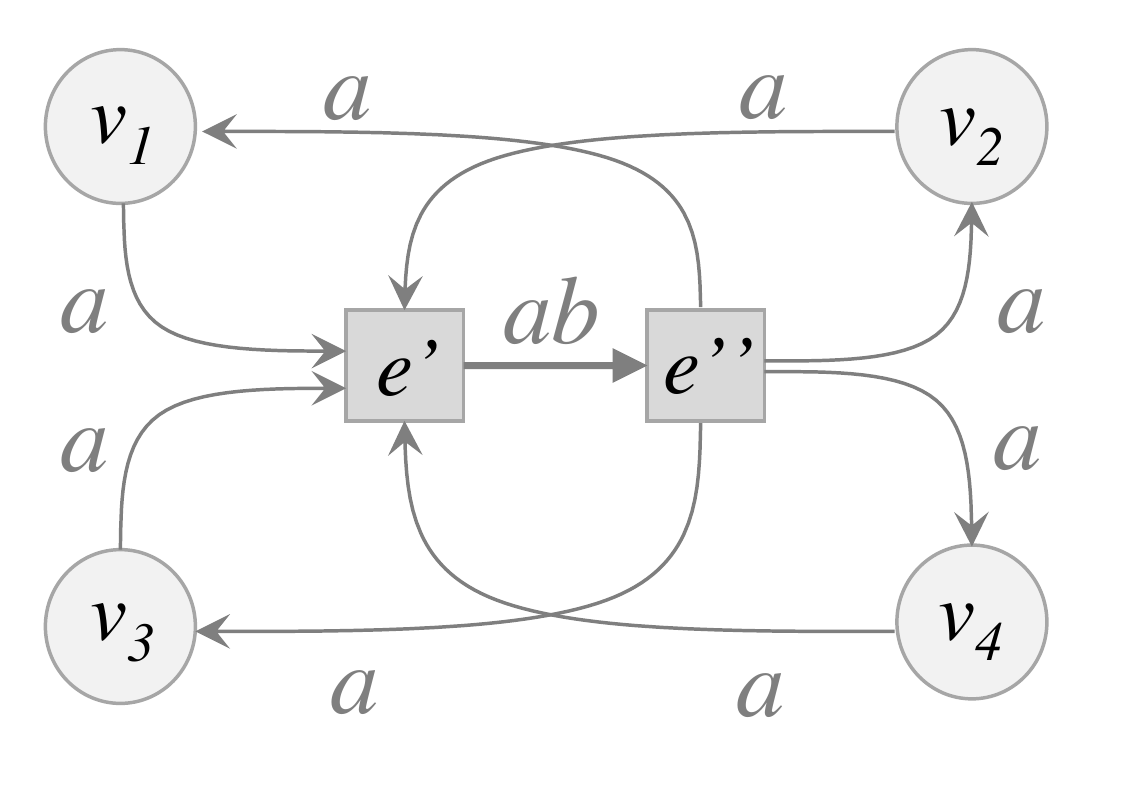}
	\caption{CB-gadget parameterized by weights $a$ and $b$ for a four-node hyperedge $e = \{v_1, v_2, v_3, v_4\}$.}
	\label{fig:gadget}
\end{figure} 
The general strategy is to replace each hyperedge $e \in \mathcal{E}$ with a small \emph{gadget} involving directed weighted edges and auxiliary nodes, in a way that models the splitting function of the hyperedge. We specifically use reductions based on symmetric cardinality-based gadgets (CB-gadgets)~\cite{veldt2022hypercuts,veldt2021approximate} (Figure~\ref{fig:gadget}). For a hyperedge $e \in \mathcal{E}$ with $k$ nodes, a CB-gadget introduces two new nodes $e'$ and $e''$, which are added to the augmented node set $\mathcal{A}$. For each $v \in e$, the gadget has two directed edges $(v,e')$ and $(e'',v)$ that are both given weight $a$, and a directed edge $(e',e'')$ with weight $ab$, where $a$ and $b$ are nonnegative parameters for the CB-gadget. 

Figure~\ref{fig:gh} is an illustration of an augmented cut preserver for a small hypergraph, where each hyperedge is replaced by a single CB-gadget and edge-weights are omitted. Each hyperedge could also be replaced by a combination of CB-gadgets with different weights, in order to model more complicated hypergraph cut functions. It is possible to model any cardinality-based submodular splitting function using a combination of $\floor{|e|/2}$ CB-gadgets, by carefully choosing parameters $(a,b)$ for each gadget~\cite{veldt2022hypercuts}. In cases where it is enough to approximately model the hypergraph cut function, one can instead use an augmented sparsifier~\cite{benson2020augmented} to approximate the cut function while using fewer CB-gadgets. 

\begin{figure*}[t]
	\centering
	\subfigure[$\mathcal{H}$\label{fig:H} ]
	{\includegraphics[width=.245\linewidth]{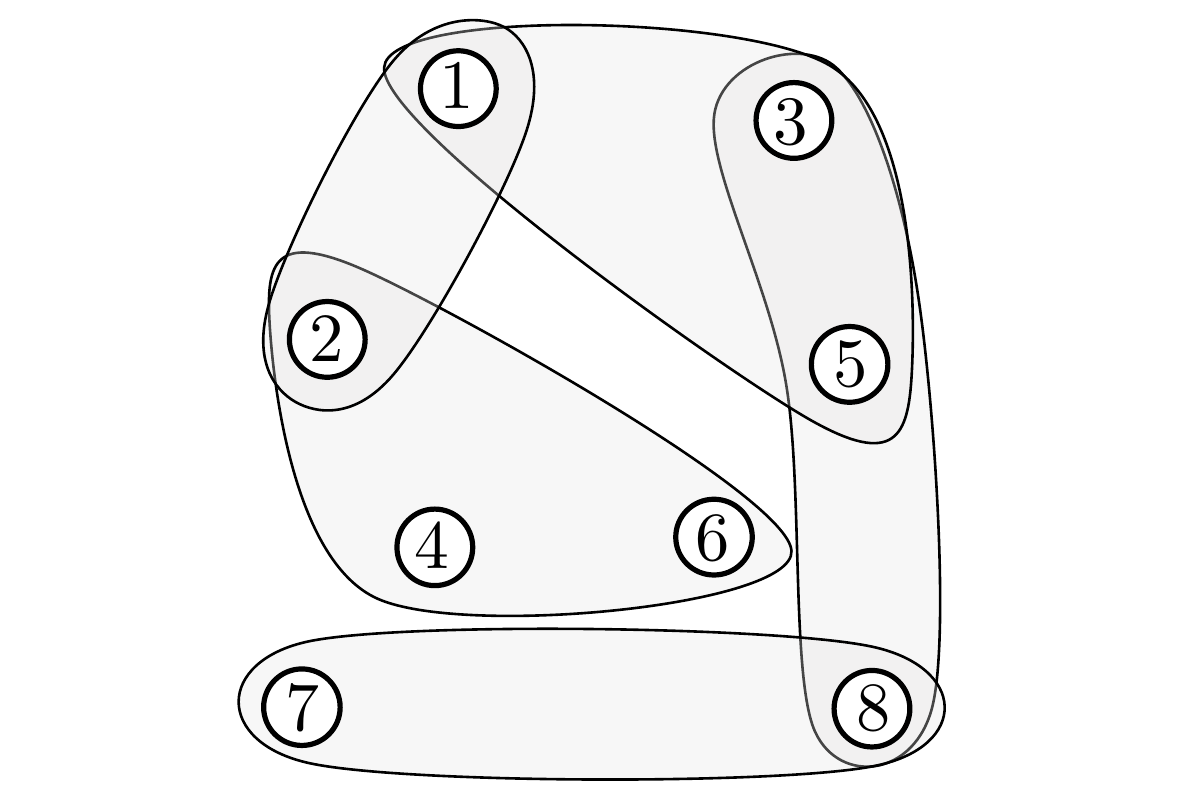}}
	\subfigure[$G(\mathcal{H})$\label{fig:gh} ]
	{\includegraphics[width=.245\linewidth]{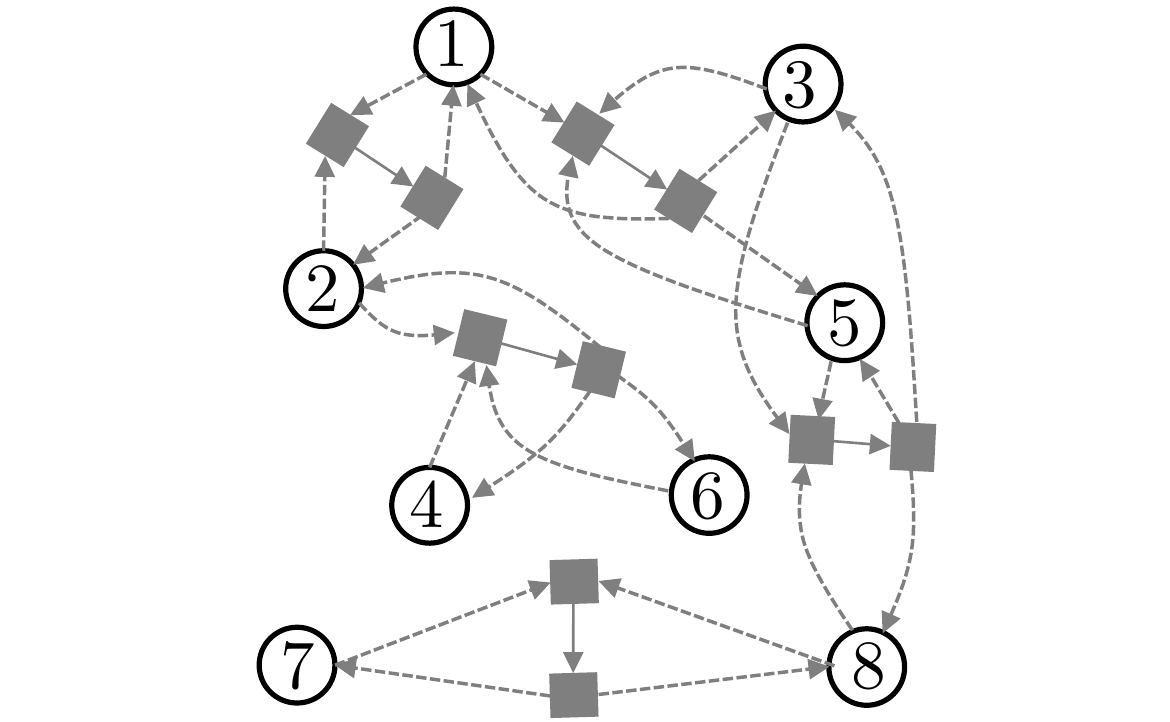}}
	\subfigure[$G(\mathcal{H},R,\alpha)$\label{fig:ghr} ]
	{\includegraphics[width=.245\linewidth]{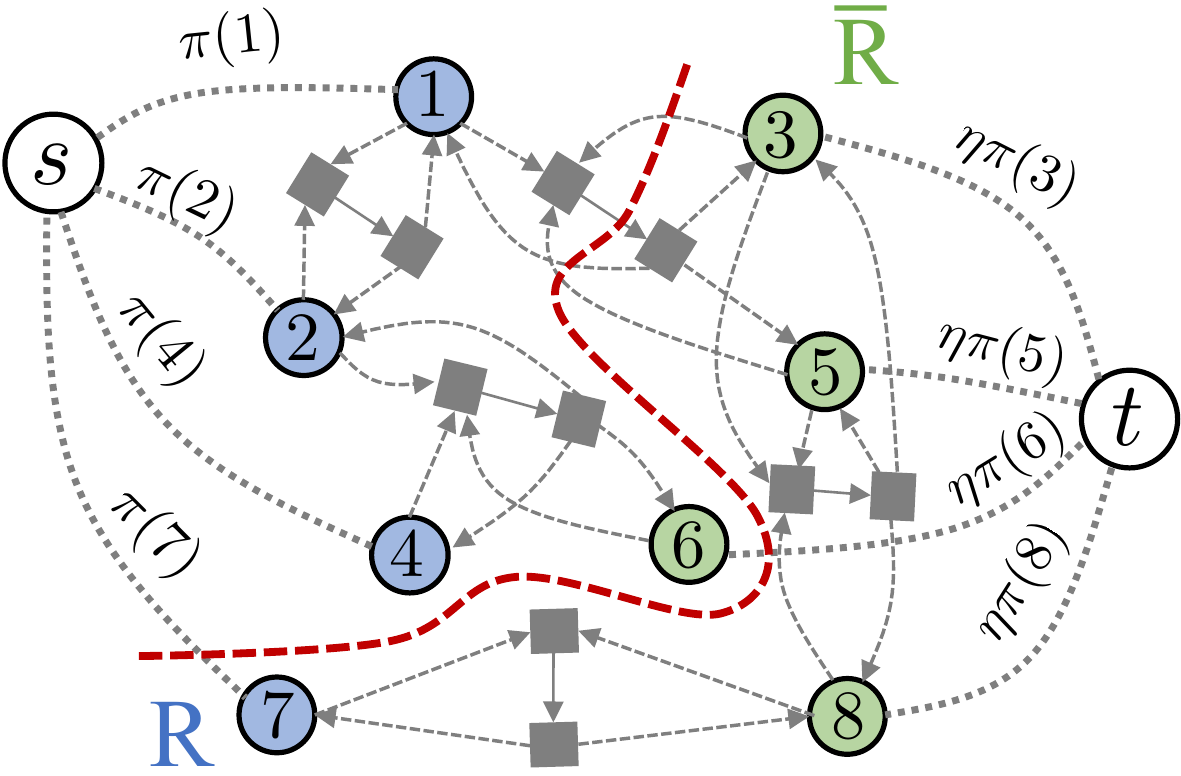}}
	\subfigure[$M_R$\label{fig:mr} ]
	{\includegraphics[width=.245\linewidth]{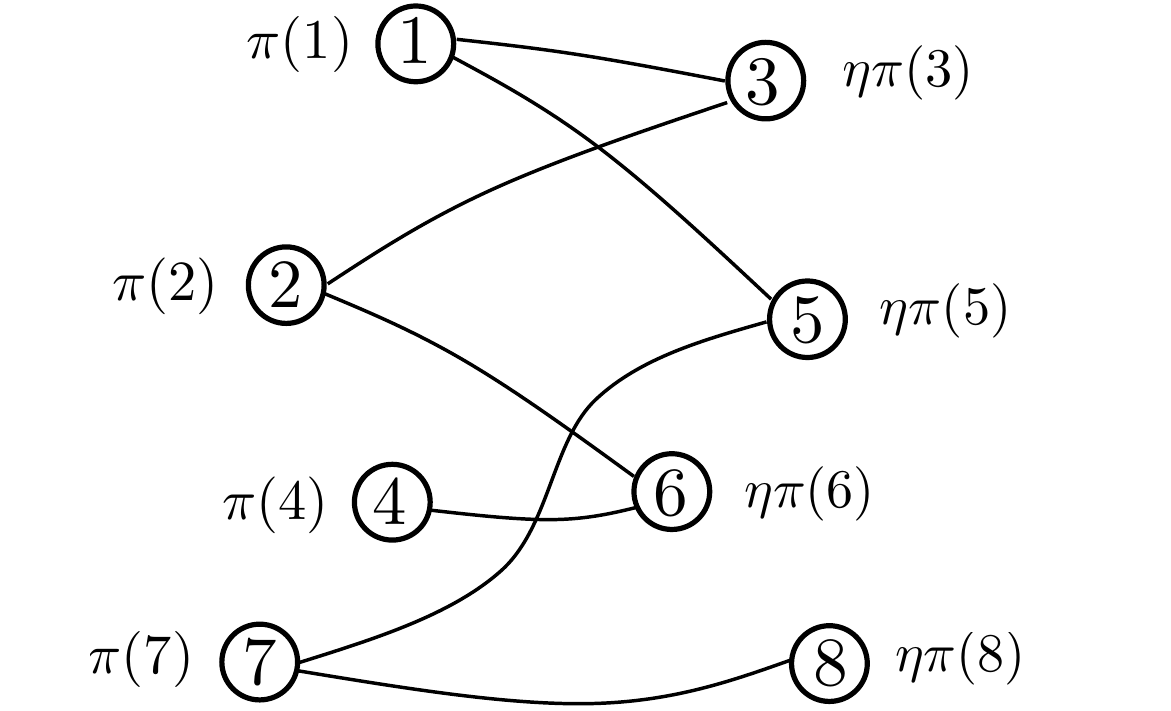}}
	\vspace{-1.0\baselineskip}
	\caption{(a) An example hypergraph $\mathcal{H} = (V,\mathcal{E})$ with 8 nodes and 5 hyperedges. (b) An \emph{augmented cut preserver} $G(\mathcal{H})$ (Def.~\ref{def:augpres}) models the generalized hypergraph cut function of $\mathcal{H}$ using the cut properties in a directed graph with an augmented node set. (c) For a given partition $\{R, \bar{R}\}$ and a parameter $\alpha$, the auxiliary graph $G(\mathcal{H},R,\alpha)$ involves source and sink nodes $s$ and $t$ and additional weighted edges. Solving a max-flow problem in this graph will either produce small cut (dashed red line) and a set of nodes $S \subseteq V$ with $\pi$-expansion less than $\alpha$, or a $\pi$-regular bipartite graph $M_R$ (pictured in (d)) that can be embedded in $G(\mathcal{H})$ with congestion $1/\alpha$. Repeatedly solving max-flow problems for many (carefully chosen) sets $R$ produces many $\pi$-regular bipartite graphs. The union of these bipartite graphs is an expander graph that can be embedded in $G(\mathcal{H})$ with a congestion that is bounded in terms of the $\pi$-expansion of the best set $S^*$ found along the way. This leads to an $O(\log n)$ approximation.}
	\label{fig:examples}
	\vspace{-.95\baselineskip} 
\end{figure*}

\subsection{Expander embeddings in directed graphs}
 Let $G = (V,E_G)$ be a directed graph with edge weight $w_G(u,v)$ for each $(u,v) \in E_G$, and let $H = (V,E_H)$ be an undirected graph on the same set of nodes with weight $w_H(i,j) \geq 0$ for edge $\{i,j\} \in E_H$. 
 For every set $S \subseteq V$, let $\mathcal{D}(S) = \{ (u,v) \in S \times \bar{S} \colon \{u,v\} \in E_H\}$ denote a set of directed pairs of nodes that share an edge in $H$. 
\begin{definition}[{embedding $H$ in $G$}]
	\label{def:embed} 
 Graph $H$ can be embedded in $G$ with congestion $\gamma$ if for each bisection $\{S, \bar{S}\}$:
	\begin{itemize}
		\item For each $(u,v) \in \mathcal{D}(S)$ there is a $u$-$v$ flow function $f^{(uv)}$ that routes $w_{H}(u,v)$ units of flow from $u$ to $v$ via edges in $G$.
		\item The flow $f = \sum_{(u,v) \in \mathcal{D}(S)} f^{(uv)}$ can be routed through $G$ with congestion at most $\gamma$, i.e., for each $(i,j) \in E_G$,
		\begin{equation}
			\label{eq:congestion}
			\sum_{(u,v) \in \mathcal{D}(S)} f_{ij}^{(uv)} \leq \gamma w_G(i,j).
		\end{equation}
	\end{itemize}
\end{definition}

\begin{lemma}
	\label{lem:expbound}
	If $H$ is embedded in $G$ with congestion $\gamma$, then for every node weight function $\pi$, $\phi_{G,\pi} \geq \frac{1}{\gamma} \phi_{H,\pi}$.
\end{lemma}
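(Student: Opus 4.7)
\textbf{Proof plan for Lemma~\ref{lem:expbound}.} Since both $\phi_G(S,\pi)$ and $\phi_H(S,\pi)$ share the same denominator $\min\{\pi(S),\pi(\bar S)\}$, the desired bound $\phi_{G,\pi} \geq \tfrac{1}{\gamma} \phi_{H,\pi}$ will follow once I prove the edge-weight inequality
\begin{equation*}
\cut_G(S) \;\geq\; \tfrac{1}{\gamma}\, \cut_H(S) \qquad \text{for every } S \subseteq V,
\end{equation*}
and then take the minimum over $S$. So my plan is to fix an arbitrary bisection $\{S,\bar S\}$ and verify this single cut inequality using the embedding hypothesis.

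Fix such an $S$ and invoke Definition~\ref{def:embed}. For each pair $(u,v)\in \mathcal{D}(S)$, we get a $u$-$v$ flow $f^{(uv)}$ routing $w_H(u,v)$ units in $G$, and their sum $f = \sum_{(u,v)\in \mathcal{D}(S)} f^{(uv)}$ has congestion at most $\gamma$ in $G$. I will argue the cut inequality by bounding the total net flow that $f$ sends from $S$ to $\bar S$ in two ways. First, a \emph{lower bound} from flow conservation: each $f^{(uv)}$ satisfies flow conservation at every node other than $u\in S$ and $v\in \bar S$, so summing the net outflow across $S$ gives exactly $w_H(u,v)$; aggregating over $(u,v)\in \mathcal{D}(S)$, the net flow from $S$ to $\bar S$ under $f$ equals $\sum_{(u,v)\in \mathcal{D}(S)} w_H(u,v) = \cut_H(S)$, where the last equality uses that $H$ is undirected and each cut edge of $H$ contributes exactly one directed pair in $\mathcal{D}(S)$. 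In particular, the \emph{gross} flow on forward-crossing edges satisfies
\begin{equation*}
\sum_{\substack{(i,j)\in E_G\\ i\in S,\; j\in \bar S}} f_{ij} \;\geq\; \cut_H(S).
\end{equation*}
Second, an \emph{upper bound} from the congestion constraint~\eqref{eq:congestion}: for each forward-crossing edge, $\sum_{(u,v)\in \mathcal{D}(S)} f^{(uv)}_{ij} = f_{ij} \leq \gamma\, w_G(i,j)$, so summing over all such edges gives $\sum f_{ij} \leq \gamma\, \cut_G(S)$. Combining the two inequalities yields $\cut_H(S)\leq \gamma\, \cut_G(S)$.

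The only subtle step — and the one I expect to need the most care — is the flow-conservation argument that pins the net $S \to \bar S$ flow of $f$ at exactly $\cut_H(S)$. I would justify it cleanly by writing, for each $f^{(uv)}$, the identity
\begin{equation*}
\sum_{\substack{i\in S,\; j\in \bar S\\ (i,j)\in E_G}} f^{(uv)}_{ij} \;-\; \sum_{\substack{i\in \bar S,\; j\in S\\ (i,j)\in E_G}} f^{(uv)}_{ij} \;=\; w_H(u,v),
\end{equation*}
which follows by summing the node-wise conservation equations over all nodes in $S$ (the only unbalanced nodes inside $S$ are sources $u$ of the routed pairs). Summing this identity over $(u,v)\in \mathcal{D}(S)$ and dropping the nonnegative reverse-crossing term produces the gross-flow lower bound used above. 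Dividing through by $\min\{\pi(S),\pi(\bar S)\}$ gives $\phi_G(S,\pi)\geq \tfrac{1}{\gamma}\phi_H(S,\pi)\geq \tfrac{1}{\gamma}\phi_{H,\pi}$, and then taking the minimum over $S$ completes the proof.
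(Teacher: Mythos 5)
Your proof is correct and follows essentially the same route as the paper: bound $\cut_H(S)$ by the flow forced across the directed cut edges from $S$ to $\bar S$, then apply the congestion bound edge by edge to get $\cut_H(S) \leq \gamma\, \cut_G(S)$, and divide by the common denominator. The only difference is that you make fully explicit (via summing node-wise conservation over $S$) the step the paper states informally as ``all of these flows must pass through the cut edges,'' which is a welcome bit of extra rigor but not a different argument.
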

\begin{proof}
Let $S \subseteq V$ be an arbitrary set of nodes satisfying $\pi(S) \leq \pi(\bar{S})$. For each pair $(u,v) \in \mathcal{D}(S)$ there is a flow $f^{(uv)}$ over edges in $E_G$ with flow value $w_H(u,v)$, and these flows can simultaneously be routed in $E_G$ with congestion $\gamma$. All of these flows must pass through the cut edges in $E_G$, since each such pair $(u,v)$ crosses the bipartition $\{S, \bar{S}\}$. Thus, we have
\begin{align*}
	\cut_H(S) &= \sum_{(u,v) \in \mathcal{D}(S)} w_{H}(u,v) \leq \sum_{(u,v) \in \mathcal{D}(S)} \sum_{(i,j) \in \partial_GS} f_{ij}^{(uv)}\\
	& \leq \gamma \sum_{(i,j) \in \partial_GS} w_{G}(i,j) = \gamma \cut_G(S).
\end{align*}	
Therefore, $\phi_G(S) = \frac{\cut_G(S)}{\pi(S)} \geq \frac{1}{\gamma} \frac{\cut_H(S)}{\pi(S)} = \frac{1}{\gamma} \phi_H(S)$.\\
\end{proof}
The immediate implication of Lemma~\ref{lem:expbound} is that if $H$ is an expander, then the expansion of $G$ is $\Omega(\frac{1}{\gamma})$. 
Definition~\ref{def:embed} differs slightly from other notions of embeddings that are often used when approximating graph ratio cuts.
This is because of the nuances that arise when generalizing these expander embedding techniques to the hypergraph setting, especially when considering generalized hypergraph cut functions. In particular, approximating ratio cuts in an \emph{undirected} hypergraph involves considering flows in a \emph{directed} graph, and care must be taken when considering the direction of flow paths for an arbitrary set $S \subseteq V$ and a pair of nodes $\{u,v\}$ on opposite sides of a partition $\{S,\bar{S}\}$. Definition~\ref{def:embed} provides a formal notion of expander embedding that will be particularly convenient when we are dealing with directed flow problems in Section~\ref{sec:flowmatch}.

\subsection{Hypergraph expander embeddings}

Combining existing notions of hypergraph cut preservers~\cite{veldt2022hypercuts,veldt2021approximate} with Definition~\ref{def:embed} provides a new strategy for bounding expansion for generalized hypergraph cuts. 
\begin{lemma}
	\label{lem:hypbound}
	Let $\mathcal{H} = (V,\mathcal{E})$ be a hypergraph with a generalized hypergraph cut function $\cut_\mathcal{H}$ and let $G(\mathcal{H}) = (\hat{V} = V \cup \mathcal{A}, \hat{E})$ be an augmented cut preserver for $\mathcal{H}$. If graph $H = (V,E_H)$ can be embedded in $G(\mathcal{H})$ with congestion $\gamma$, then for every node weight function $\pi$, $\phi_{\mathcal{H},\pi} \geq \frac{1}{\gamma} \phi_{H,\pi}$.
\end{lemma}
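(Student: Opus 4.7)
The goal is to mimic the proof of Lemma~\ref{lem:expbound}, but with one extra layer: the embedding is into the augmented graph $G(\mathcal{H})$ on node set $\hat V = V \cup \mathcal{A}$, while the quantities $\cut_\mathcal{H}(S)$ and $\phi_H(S)$ are defined on bisections of $V$. The bridge between the two is the cut preserver identity~\eqref{eq:cutpreserve}, which lets us turn any bisection $\{S,\bar S\}$ of $V$ into a specific bisection of $\hat V$ whose $G(\mathcal{H})$-cut value equals $\cut_\mathcal{H}(S)$.

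\textbf{Setup of the argument.} Fix an arbitrary $S\subseteq V$ with $\pi(S)\leq \pi(\bar S)$. By~\eqref{eq:cutpreserve}, pick $U^\star\subseteq\mathcal{A}$ attaining $\cut_\mathcal{H}(S) = \cut_{G(\mathcal{H})}(S\cup U^\star)$, and let $T = S\cup U^\star$, $\bar T = \hat V\setminus T = \bar S \cup (\mathcal{A}\setminus U^\star)$. Since $V_H := V(H)\subseteq V$, the auxiliary nodes $\mathcal{A}$ are isolated in $H$, so the set of $H$-edges crossing the $\hat V$-bisection $\{T,\bar T\}$ is exactly $\partial_H(S)$; equivalently the directed demand set $\mathcal{D}(T)$ consists of pairs $(u,v)\in S\times\bar S$ with $\{u,v\}\in E_H$, and $\sum_{(u,v)\in\mathcal{D}(T)} w_H(u,v) = \cut_H(S)$.

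\textbf{Core inequality.} Because $H$ is embedded in $G(\mathcal{H})$ with congestion $\gamma$, applied to the bisection $\{T,\bar T\}$ we obtain flows $f^{(uv)}$ on $G(\mathcal{H})$, one per $(u,v)\in \mathcal{D}(T)$, each routing $w_H(u,v)$ units from $u$ to $v$, with aggregate congestion at most $\gamma$. Every such flow must cross the cut $\partial_{G(\mathcal{H})}(T)$, so chaining the two inequalities exactly as in Lemma~\ref{lem:expbound} gives
\begin{align*}
\cut_H(S) &= \sum_{(u,v)\in\mathcal{D}(T)} w_H(u,v) \leq \sum_{(u,v)\in\mathcal{D}(T)}\sum_{(i,j)\in\partial_{G(\mathcal{H})}(T)} f^{(uv)}_{ij} \\
&\leq \gamma\sum_{(i,j)\in\partial_{G(\mathcal{H})}(T)} w_{G(\mathcal{H})}(i,j) = \gamma\,\cut_{G(\mathcal{H})}(T) = \gamma\,\cut_\mathcal{H}(S),
\end{align*}
where the last equality uses the choice of $U^\star$.

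\textbf{Conclusion.} Dividing by $\min\{\pi(S),\pi(\bar S)\}$ yields $\phi_\mathcal{H}(S,\pi)\geq \tfrac{1}{\gamma}\phi_H(S,\pi)\geq \tfrac{1}{\gamma}\phi_{H,\pi}$, and minimizing the left side over $S\subseteq V$ gives the claim. The only nontrivial step is the one calling out $U^\star$: without the cut preserver, the embedding would bound $\cut_H(S)$ by $\gamma\cdot\cut_{G(\mathcal{H})}(S\cup U)$ for \emph{any} choice of $U$, but only the minimizing $U^\star$ lets us collapse that back to $\cut_\mathcal{H}(S)$. Once that alignment is in place the proof is a direct lift of Lemma~\ref{lem:expbound}.
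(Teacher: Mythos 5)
Your proof is correct and follows essentially the same route as the paper's: fix $S\subseteq V$, choose the minimizing auxiliary set $U^\star$ from the cut-preserver identity to form $\hat S = S\cup U^\star$, and then run the flow-crossing argument of Lemma~\ref{lem:expbound} against the cut $\partial_{G(\mathcal{H})}(\hat S)$, collapsing $\cut_{G(\mathcal{H})}(\hat S)$ back to $\cut_\mathcal{H}(S)$. Your explicit remark that the auxiliary nodes are isolated in $H$, so the demand set for the bisection of $\hat V$ coincides with $\partial_H(S)$, is a welcome clarification of a step the paper leaves implicit, but it is the same argument.
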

\begin{proof}
Let $S \subseteq V$ be an arbitrary set of nodes satisfying $\pi(S) \leq \pi(\bar{S})$. Define $\hat{S} = S \cup U$ where $U \subseteq \mathcal{A}$ is chosen in such a way that $\cut_\mathcal{H}(S) = \cut_{G(\mathcal{H})}(\hat{S})$.
Recall that $\mathcal{D}(S)$ denotes the set of pairs $(u,v)$ where $\{u,v\} \in E_H$ with $u \in S$ and $v \in V - S$. Since $H$ is embedded in $G(\mathcal{H})$ with congestion $\gamma$, for each pair $(u,v) \in \mathcal{D}(S)$ we have routed $w_H(u,v)$ flow from $u$ to $v$. Summing up all of the flows crossing the cut gives:
\begin{align*}
	\cut_H(S) &= \sum_{(u,v) \in \mathcal{D}(S)} w_H(u,v) \leq \gamma \sum_{(i,j) \in \partial_{G(\mathcal{H})}\hat{S}} w_{G(\mathcal{H})}(i,j)\\
	& = \gamma \cut_{G(\mathcal{H})}(\hat{S}) = \gamma \cut_\mathcal{H}(S).
\end{align*}
Thus, we have $\phi_\mathcal{H}(S) = \frac{\cut_\mathcal{H}(S)}{\pi(S)} \geq \frac{1}{\gamma} \frac{\cut_H(S)}{\pi(S)} = \frac{1}{\gamma} \phi_H(S)$. 
\end{proof}
This lemma assumes a fixed node weight function $\pi$ and therefore does not directly relate conductance in $\mathcal{H}$ to conductance in $H$ (as conductance depends on node degrees, which are not necessarily preserved between $H$ and $\mathcal{H}$). However, as we shall see, this still provides an important step in lower bounding arbitrary ratio-cut objectives in $\mathcal{H}$, including conductance.


\section{Hypergraph Flow Embedding}
\label{sec:flowmatch}
To apply Lemma~\ref{lem:hypbound}, we would like to embed an expander into $G(\mathcal{H})$ with a small congestion, and then find a set $S$ whose expansion is not too far from the lower bound implied by the lemma. As a key step in this direction, in this section we will show how to embed a special type of bipartite graph into $\mathcal{H}$ whose congestion is related to a small expansion set. Figure~\ref{fig:examples} provides an illustrated overview of the flow embedding techniques in this section. Given a partition $\{R, \bar{R}\}$ of the node set $V$, we will design a procedure that takes in a parameter $\alpha >  0$ and uses a single maximum $s$-$t$ flow computation to either (1) return a set $S \subseteq V$ such that $\phi_\mathcal{H}(S,\pi) < \alpha$, or (2) produce a bipartite graph between $R$ and $\bar{R}$ that can be embedded in $G(\mathcal{H})$ with congestion $1/\alpha$. In Section~\ref{sec:mainalg}, we will show how to combine these bipartite graphs to embed an expander into $G(\mathcal{H})$.

\subsection{Maximum flows in an auxiliary graph}
Fix $\alpha > 0$ and a partition $\{R, \bar{R}\}$ satisfying $\pi(R) \leq \pi(\bar{R})$, and set $\eta = \pi(R)/\pi(\bar{R})$. We will solve a maximum $s$-$t$ flow problem on an auxiliary graph $G(\mathcal{H},R,\alpha)$ (see Figure~\ref{fig:ghr}) constructed as follows:
\begin{itemize}
	\item Construct a CB-gadget cut preserver $G(\mathcal{H})$ for $\mathcal{H}$~\cite{veldt2022hypercuts,veldt2021approximate}, and scale all edge weights by $\frac{1}{\alpha}$.
	\item Add an extra source node $s$ and a sink node $t$.
	\item For each $r \in R$, add a directed edge $(s,r)$ with weight $\pi(r)$.
	\item For each $v \in \bar{R}$, add a directed edge $(v,t)$ with weight $\eta \pi(v)$.
\end{itemize}
We will use a solution to the maximum $s$-$t$ flow problem to either find a set $S$ with small expansion (by considering the dual minimum $s$-$t$ cut problem), or find an embeddable bipartite graph between $R$ and $\bar{R}$ with congestion $\frac1{\alpha}$ (by considering a flow decomposition). This mirrors the same strategy that is used in cut-matching games for graph expansion~\cite{khandekar2009graph}, though the construction and proofs are more involved for generalized hypergraph cut functions. 



\textbf{Finding a small $\pi$-expansion set.}
The minimum $s$-$t$ cut problem in $G(\mathcal{H},R,\alpha)$ is equivalent to solving the following optimization problem over the hypergraph $\mathcal{H}$:
\begin{equation}
	\label{eq:stcut}
	\minimize_{S \subseteq V} \; \frac{1}{\alpha} \cut_\mathcal{H}(S) + \pi(R \cap \bar{S}) + \eta \pi(\bar{R} \cap S).
\end{equation}
If we set $S = \emptyset$, this corresponds to separating node $s$ from all other nodes in $G(\mathcal{H},R,\alpha)$, which is a valid $s$-$t$ cut with value $\pi(R)$. 
If we can find an $s$-$t$ cut value that is strictly smaller than $\pi(R)$ in $G(\mathcal{H},R, \alpha)$, it will provide a set $S$ with small expansion in $\mathcal{H}$.
\begin{lemma}
	\label{lem:cut}
	If the minimum $s$-$t$ cut value in ${G}(\mathcal{H},R,\alpha)$ is less than $\pi(R)$, then the set $S$ minimizing objective~\eqref{eq:stcut} satisfies $\phi_{\mathcal{H}}(S,\pi) < \alpha$.
\end{lemma}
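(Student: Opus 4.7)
The plan is to identify the minimum $s$-$t$ cut value in $G(\mathcal{H},R,\alpha)$ with the minimum of the hypergraph-level objective~\eqref{eq:stcut}, and then combine the assumption with a short algebraic identity to obtain both halves of the bound $\phi_\mathcal{H}(S,\pi) < \alpha$.

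For the equivalence, any $s$-$t$ cut is determined by a set $\hat{S} \subseteq \hat{V}$ with $s \in \hat{S}$ and $t \notin \hat{S}$. Writing $S = \hat{S} \cap V$, the source edges $(s,r)$ for $r \in R$ contribute $\pi(R \cap \bar{S})$ to the cut value, the sink edges $(v,t)$ for $v \in \bar{R}$ contribute $\eta\, \pi(\bar{R} \cap S)$, and by Definition~\ref{def:augpres} the scaled cut-preserver edges contribute $\min_{U \subseteq \mathcal{A}} \tfrac{1}{\alpha}\cut_{G(\mathcal{H})}(S \cup U) = \tfrac{1}{\alpha}\cut_\mathcal{H}(S)$ once we minimize over auxiliary nodes. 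Summing shows that the minimum $s$-$t$ cut value equals $\min_{S \subseteq V}$ of~\eqref{eq:stcut}. Since $S = \emptyset$ already achieves objective value $\pi(R)$ in~\eqref{eq:stcut}, the hypothesis implies that the minimizing $S$ satisfies
\[ \tfrac{1}{\alpha}\cut_\mathcal{H}(S) + \pi(R \cap \bar{S}) + \eta\, \pi(\bar{R} \cap S) < \pi(R) = \pi(R \cap S) + \pi(R \cap \bar{S}), \]
which rearranges to $\cut_\mathcal{H}(S) < \alpha\bigl[\pi(R \cap S) - \eta\, \pi(\bar{R} \cap S)\bigr]$.

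One half of the desired bound is immediate: dropping the nonnegative term $\eta\, \pi(\bar{R} \cap S)$ yields $\cut_\mathcal{H}(S) < \alpha\, \pi(R \cap S) \leq \alpha\, \pi(S)$. The main obstacle is the complementary bound $\cut_\mathcal{H}(S) < \alpha\, \pi(\bar{S})$, since on its face the expression $\pi(R \cap S) - \eta\, \pi(\bar{R} \cap S)$ appears unrelated to $\pi(\bar{S})$. The key observation is an algebraic identity coming from $\eta = \pi(R)/\pi(\bar{R})$: expanding $\pi(R) = \pi(R \cap S) + \pi(R \cap \bar{S})$ and $\pi(\bar{R}) = \pi(\bar{R}\cap S) + \pi(\bar{R}\cap \bar{S})$ in $\pi(R) = \eta\, \pi(\bar{R})$ and rearranging gives
\[ \pi(R \cap S) - \eta\, \pi(\bar{R} \cap S) = \eta\, \pi(\bar{R} \cap \bar{S}) - \pi(R \cap \bar{S}). \]
Using $\eta \leq 1$ (which follows from $\pi(R) \leq \pi(\bar{R})$) on the right-hand side then yields $\pi(R \cap S) - \eta\, \pi(\bar{R} \cap S) \leq \eta\, \pi(\bar{R} \cap \bar{S}) \leq \pi(\bar{R} \cap \bar{S}) \leq \pi(\bar{S})$, so $\cut_\mathcal{H}(S) < \alpha\, \pi(\bar{S})$. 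Combining both inequalities gives $\cut_\mathcal{H}(S) < \alpha \min\{\pi(S), \pi(\bar{S})\}$, which is exactly $\phi_\mathcal{H}(S,\pi) < \alpha$.
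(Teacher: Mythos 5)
Your proof is correct and follows essentially the same route as the paper's: the same reduction of the minimum $s$-$t$ cut to objective~\eqref{eq:stcut}, the same rearrangement to $\cut_\mathcal{H}(S) < \alpha\bigl[\pi(R\cap S) - \eta\,\pi(\bar{R}\cap S)\bigr]$, and the same algebraic identity $\pi(R\cap S) - \eta\,\pi(\bar{R}\cap S) = \eta\,\pi(\bar{R}\cap\bar{S}) - \pi(R\cap\bar{S})$ to bound the denominator by both $\pi(S)$ and $\pi(\bar{S})$. You are in fact slightly more explicit than the paper in justifying the cut-objective equivalence via Definition~\ref{def:augpres} and in spelling out the use of $\eta \leq 1$.
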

\begin{proof}
The minimum $s$-$t$ cut set in $G(\mathcal{H},R,\alpha)$ is some set of nodes $\{s\} \cup S \cup \mathcal{A}_S$ where $S \subseteq V$ is a set of nodes from the original hypergraph $\mathcal{H}$ and $\mathcal{A}_S$ is a subset of the auxiliary nodes from the cut preserver $G(\mathcal{H})$, designed so that the cut function in ${G}(\mathcal{H})$ matches the cut function in $\mathcal{H}$. If the minimum $s$-$t$ cut value in $G(\mathcal{H},R,\alpha)$ is less than $\pi(R)$, this means that for the set $S$:
\begin{align*}
	&\frac{1}{\alpha} \cut_\mathcal{H}(S) + \pi(R \cap \bar{S}) + \eta\pi(\bar{R} \cap S) < \pi(R) \\
	\implies & \cut_\mathcal{H}(S) + \alpha \pi(R \cap \bar{S}) + \alpha \eta\pi(\bar{R} \cap S) < \alpha\pi(R)\\
	\implies &	\cut_\mathcal{H}(S) < \alpha \pi(R \cap S) - \alpha \eta\pi(\bar{R} \cap S)\\
	\implies &\frac{\cut_\mathcal{H}(S)}{\pi(R \cap S) - \eta\pi(\bar{R} \cap S)} < \alpha.
\end{align*}
The result will hold as long as we can show that $\min\{\pi(S), \pi(\bar{S})\} \geq \pi(R \cap S) - \eta\pi(\bar{R} \cap S)$. First note that $\pi(S) \geq \pi(R \cap S) \geq \pi(R \cap S) - \eta \pi(\bar{R} \cap S)$. Additionally, we have
\begin{align*}
	\pi(R \cap S) - \eta\pi(\bar{R} \cap S) &= \pi(R \cap S) + \eta\pi(\bar{R} \cap \bar{S}) - \eta \pi(\bar{R})\\  &= \eta\pi(\bar{R} \cap \bar{S}) - \pi(R \cap \bar{S}) \leq \pi(\bar{S}).
\end{align*}
\end{proof}
Max-flow subroutines for solving an objective closely related to~\eqref{eq:stcut} have already been used to minimize localized ratio cut objectives in hypergraphs~\cite{veldt2020hyperlocal}. This generalizes earlier work on localized ratio cuts in graphs~\cite{Andersen:2008:AIG:1347082.1347154,orecchia2014flow,veldt16simple,Veldt2019flow}. The present work differs in that we will use max-flow solutions to approximate \emph{global} ratio cut objectives. A key step in doing so is showing how to embed a bipartite graph into $G(\mathcal{H})$ when Lemma~\ref{lem:cut} does not apply.

\textbf{Embedding a bipartite graph.}
\label{sec:embedbip}
If the min $s$-$t$ cut in $G(\mathcal{H},R,\alpha)$ is $\pi(R)$, then the max $s$-$t$ flow solution in $G(\mathcal{H},R,\alpha)$ saturates every edge touching $s$ or $t$. We can then define a bipartite graph $M_R$ between $R$ and $\bar{R}$ that can be embedded in $G(\mathcal{H})$ with congestion $\frac{1}{\alpha}$. 
Let $f$ be a maximum $s$-$t$ flow on $G(\mathcal{H},R,\alpha)$ with value $|f| = \pi(R)$. 
Letting $n = |V|$, define a matrix $\mM_R \in \mathbb{R}^{n \times n}$ that is initialized to be the all zeros matrix. Using Lemma~\ref{lem:flowdeomp},
we can decompose the flow $f$ into $f = \sum_{i = 1}^\ell f_i$, where for each $i$ the edges that have a positive flow in $f_i$ form either a cycle or a simple $s$-$t$ path. For our purposes we can ignore the cycles and focus on the $s$-$t$ paths. An $s$-$t$ flow path $f_i$ always starts by sending $|f_i|$ units of flow from $s$ to some node $r \in R$, and eventually ends by sending the same amount of flow through an edge $(v,t)$ where $v \in \bar{R}$. For each such flow path, we perform the update $\mM_R(r,v) \leftarrow \mM_R(r,v) + |f_i|$. After iterating through all $\ell$ flow paths, we define $M_R$ to be the bipartite graph whose adjacency matrix is $\mM_R$ (see Fig.~\ref{fig:mr}).
The construction of $G(\mathcal{H},R,\alpha)$ and the fact that $f$ saturates all edges touching $s$ and $t$ implies the following degree properties for $M_R$:
\begin{itemize}
	\item For each $r \in R$, the weighted degree of $r$ in $M_R$ is $\pi(r)$.
	\item For each $u \in \bar{R}$, the weighted degree of $u$ in $M_R$ is $\eta \pi(u)$.
\end{itemize}
Following previous terminology~\cite{orecchia2022practical}, we refer to a bipartite graph satisfying these properties as a $\pi$-regular bipartite graph. To relate this to previous cut-\emph{matching} games for graph expansion, note that if $|R| = |\bar{R}|$ and $\pi(u) = 1$ for all $u \in V$, then $M_R$ will be a fractional \emph{matching} and $\mM_R$ will be a doubly stochastic matrix. 
It is also worth noting that if $\pi(R) = \pi(\bar{R})$, then the weighted degree of every node $u \in V$ in $M_R$ will be equal to $\pi(u)$, leading to a correspondence between the conductance in $M_R$ and the $\pi$-expansion of $\mathcal{H}$. In practice, $\pi(R)$ will not always be exactly equal to $\pi(\bar{R})$, but we will tend to choose partitions that are roughly balanced in size.

\textbf{Proving a bound on congestion.}
When $|f| = \pi(R)$, the maximum $s$-$t$ flow in $G(\mathcal{H},R,\alpha)$ can be viewed as a directed multicommodity flow that is routed through a $\frac1{\alpha}$-scaled copy of $G(\mathcal{H})$. However, there is one subtle issue we must overcome in order to confirm that $M_R$ can be embedded in $G(\mathcal{H})$ with congestion $\frac1{\alpha}$. Definition~\ref{def:embed} requires that for every bisection $\{S, \bar{S}\}$, there must be a way to route $\mM_R(u,v)$ units of flow from $u$ to $v$ if $u \in S$ and $v \in \bar{S}$. However, $f$ only routes flow from $R$ to $\bar{R}$ in the \emph{directed} graph $G(\mathcal{H},R,\alpha)$. This issue does not arise in cut-matching games for undirected graphs, as each undirected edge $\{u,v\}$ can be viewed as a pair of directed edges $(u,v)$ and $(v,u)$, making it easy to send flow in two directions simultaneously. However, in the directed graph $G(\mathcal{H},R,\alpha)$, it is possible that for a given bipartition $\{S, \bar{S}\}$, the flow $f$ will send flow from a node $r \in R \cap \bar{S}$ to a node $u \in \bar{R} \cap S$, which does not directly satisfy the requirement in Definition~\ref{def:embed}. The following lemma confirms that we can overcome this. Its proof relies on carefully considering the edge structure in $G(\mathcal{H})$ and showing how to use an implicit flow-reversing procedure when necessary in order to satisfy Definition~\ref{def:embed}. A proof is included in the appendix.
\begin{lemma}
	\label{lem:match}
	If $|f| = \pi(R)$, the graph $M_R$ can be embedded in $G(\mathcal{H})$ with congestion $\frac1{\alpha}$ in the sense of Definition~\ref{def:embed}.
\end{lemma}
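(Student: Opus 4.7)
The plan is to build, for each bisection $\{S,\bar S\}$, an explicit collection of embedding flows $\{f^{(u,v)}\}$ directly from the path decomposition of $f$, and then to bound the total embedding congestion on each directed edge of $G(\mathcal{H})$ by $1/\alpha$.

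First I would apply Lemma~\ref{lem:flowdeomp} to write $f = \sum_i f_i$ as a collection of simple $s$-$t$ paths plus cycles, and discard the cycles. Each surviving path $f_i$ travels $s \to r_i \to \cdots \to v_i \to t$ for some $r_i \in R$, $v_i \in \bar R$, and by construction contributes $|f_i|$ to $\mM_R(r_i,v_i)$. Fix a bisection $\{S,\bar S\}$ and classify each path by the sides of its endpoints: \emph{forward} if $r_i\in S$ and $v_i\in\bar S$, \emph{reverse} if $r_i\in\bar S$ and $v_i\in S$, and \emph{non-crossing} otherwise. A forward path directly supplies its $r_i$-to-$v_i$ sub-flow as part of $f^{(r_i,v_i)}$. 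A reverse path is walked backwards to provide a $v_i$-to-$r_i$ sub-flow of amount $|f_i|$: wherever the original path crossed a CB-gadget as $u\to e'\to e''\to u'$, the reverse uses $u'\to e'\to e''\to u$, which is a valid directed path in $G(\mathcal{H})$ precisely because every $x\in e$ comes equipped with both $(x,e')$ and $(e'',x)$. Non-crossing paths contribute nothing.

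Next I would bound the congestion on each directed edge of $G(\mathcal{H})$. On a middle edge $(e',e'')$ a forward and a reverse traversal of a path through gadget $e$ both use $(e',e'')$ in the same direction, so the total embedding flow on $(e',e'')$ never exceeds the amount of $f$ through gadget $e$, which is at most $ab/\alpha = w_{G(\mathcal{H})}(e',e'')/\alpha$; this immediately gives congestion $1/\alpha$. The edges of $G(\mathcal{H},R,\alpha)$ incident to $s$ or $t$ are absent from $G(\mathcal{H})$ and require no analysis.

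The delicate case, which I expect to be the main obstacle, is the outer edges $(v,e')$ and $(e'',v)$ of each CB-gadget. A forward path entering $e$ at $v$ loads $(v,e')$, and so does the reversal of any path that originally exited $e$ at $v$; a naive summation only yields $2a/\alpha$. To tighten this to the required $a/\alpha$, I would carry out the implicit flow-reversing procedure: because the CB-gadget equips each $v\in e$ with a paired \emph{in} edge $(v,e')$ and \emph{out} edge $(e'',v)$, and because the hypothesis $|f|=\pi(R)$ saturates every source and sink edge and so forces flow conservation at each $V$-node, the ``forward-in'' contributions at $v$ and the ``reverse-out'' contributions at $v$ can be rerouted so that they are accounted against the two paired directed edges rather than competing for a single one. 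Carrying out this pairing carefully, gadget by gadget, is the technical core of the proof; it is the CB-gadget analogue of the fact that in an undirected cut-matching game for graphs a single edge can simultaneously carry flow in both directions.
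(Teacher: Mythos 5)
Your setup matches the paper's: decompose the saturating flow into simple $s$-$t$ paths, classify paths by how their endpoints fall relative to $\{S,\bar S\}$, and reverse the ``wrong-direction'' paths through each CB-gadget by swapping the traversal $u \to e' \to e'' \to u'$ for $u' \to e' \to e'' \to u$, which exists because every $x \in e$ has both edges $(x,e')$ and $(e'',x)$. You also correctly isolate the one delicate point, namely the congestion on the outer gadget edges. But that is exactly where the proposal has a genuine gap: you observe that a naive count gives $2a/\alpha$ on $(v,e')$ when a forward path enters the gadget at $v$ while a reversed path also wants to enter at $v$, and then you assert that a ``pairing'' or ``rerouting'' resolves it, justified by ``flow conservation at each $V$-node'' forced by $|f|=\pi(R)$. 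That justification does not work: conservation at internal nodes holds for any $s$-$t$ flow regardless of saturation (saturation is only needed for the $\pi$-regularity of $M_R$), and conservation at $v$ does not prevent both $(v,e')$ and $(e'',v)$ from carrying flow simultaneously. You have deferred the technical core rather than supplied it, and the mechanism you name is not the one that closes it.

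The paper closes it differently and more cleanly: assume without loss of generality that $f$ is cycle-free (cancelling cycles never increases flow on any edge or changes $|f|$), so the induced multicommodity flow $f_R$ is cycle-free. Then for each $v \in e$, at most one of the two equal-weight edges $(v,e')$ and $(e'',v)$ carries positive flow in $f_R$ --- otherwise $v \to e' \to e'' \to v$ would be a flow-carrying directed cycle, since $(e',e'')$ is the unique edge out of $e'$. Consequently the conflict you worry about never occurs: the reversal only ever transfers flow from the occupied edge onto its empty partner of identical capacity, so congestion stays at most $1/\alpha$; the middle edge $(e',e'')$ is traversed in the same direction either way, as you note. A secondary but real point: merely ``discarding the cycles'' from a path decomposition, as you propose, does not make the residual path system cycle-free as a flow (two simple paths can jointly use $(a,b)$ and $(b,a)$), so you need the stronger global cycle-freeness assumption on $f$ itself for this argument to go through.
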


\subsection{The flow-embedding algorithm}
We combine Lemmas~\ref{lem:cut} and~\ref{lem:match} into a method \textsc{HyperCutOrEmbed} (Algorithm~\ref{alg:flowmatch}) for obtaining both a good cut and an embeddable bipartite graph for any input partition $\{R,\bar{R}\}$ with $\pi(R) \leq \pi(\bar{R})$.
We assume that the hypergraph $\mathcal{H}$ is connected, 
and that the hypergraph weights are scaled so that there is a minimum penalty of 1 when cutting any hyperedge. This implies a lower bound of $2/\pi(V)$ on the minimum $\pi$-expansion, which is achieved if there is a set $S$ with $\pi(S) = \pi(V)/2$ and $\cut_\mathcal{H}(S) = 1$. We also assume that node weights are scaled so that $\pi(v) \geq 1$ for each $v \in V$. \textsc{HyperCutOrEmbed} repeatedly solves maximum $s$-$t$ flow problems to find either a bipartite graph or a cut with bounded $\pi$-expansion. The algorithm uses black-box subroutines for maximum $s$-$t$ flows and minimum $s$-$t$ cuts, and a procedure \textsc{FlowEmbed} that decomposes a flow into a $\pi$-regular bipartite graph as outlined in Section~\ref{sec:embedbip}.
\begin{theorem}
	\label{thm:cutmatch}
	\textsc{HyperCutOrEmbed} returns a $\pi$-regular bipartite graph $M_R$ that can be embedded in $G(\mathcal{H})$ with congestion $1/\alpha$ and a set $S$ with $\phi_\mathcal{H}(S,\pi) < 2\alpha$ for some $\alpha \geq 2/\pi(V)$. The algorithm terminates in $O(\log |\mathcal{E}| + \log U + \log \pi(V))$ iterations where $U$ is the maximum hyperedge cut penalty.
\end{theorem}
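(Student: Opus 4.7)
My plan treats \textsc{HyperCutOrEmbed} as a binary search on $\alpha$ that exploits a simple monotonicity of the minimum $s$-$t$ cut value in $G(\mathcal{H},R,\alpha)$. Scaling the interior edges of $G(\mathcal{H})$ by $1/\alpha$ only shrinks their capacities as $\alpha$ increases, while the source/sink edges have fixed capacities whose total weight leaving $s$ is exactly $\pi(R)$. Hence the minimum $s$-$t$ cut value is a nonincreasing function of $\alpha$ that is always at most $\pi(R)$. The algorithm maintains an interval $[\alpha_{\text{lo}},\alpha_{\text{hi}}]$ with the invariant that at $\alpha_{\text{lo}}$ the min cut equals $\pi(R)$ (so a max flow of value $\pi(R)$ exists and Lemma~\ref{lem:match} yields a $\pi$-regular bipartite graph embeddable with congestion $1/\alpha_{\text{lo}}$) and at $\alpha_{\text{hi}}$ the min cut is strictly less than $\pi(R)$ (so Lemma~\ref{lem:cut} yields a set $S$ with $\phi_\mathcal{H}(S,\pi)<\alpha_{\text{hi}}$). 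Each iteration tests a geometric midpoint $\alpha = \sqrt{\alpha_{\text{lo}}\alpha_{\text{hi}}}$ with a single max-flow computation, replaces exactly one endpoint according to which of the two lemmas fires, and terminates when $\alpha_{\text{hi}}\leq 2\alpha_{\text{lo}}$. Outputting $M_R$ from the $\alpha_{\text{lo}}$-flow and $S$ from the $\alpha_{\text{hi}}$-cut then gives congestion $1/\alpha_{\text{lo}}$ and $\phi_\mathcal{H}(S,\pi)<\alpha_{\text{hi}}\leq 2\alpha_{\text{lo}}$, which matches the theorem statement with $\alpha:=\alpha_{\text{lo}}$.

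The second step is to justify valid initial endpoints. For $\alpha_{\text{lo}}=2/\pi(V)$, I verify that objective~\eqref{eq:stcut} is at least $\pi(R)$ for every $S\subseteq V$. The trivial choices $S\in\{\emptyset,V\}$ achieve value exactly $\pi(R)$. For any nontrivial $S$, connectivity of $\mathcal{H}$ and the scaling assumption give $\cut_\mathcal{H}(S)\geq 1$, while the rearrangement of~\eqref{eq:stcut} reduces the required inequality to $\cut_\mathcal{H}(S)\geq \alpha_{\text{lo}}\bigl[\pi(R\cap S)-\eta\pi(\bar R\cap S)\bigr]$, whose right-hand side is bounded above by $\alpha_{\text{lo}}\pi(R)\leq \alpha_{\text{lo}}\pi(V)/2=1$. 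This also delivers the $\alpha\geq 2/\pi(V)$ clause because $\alpha_{\text{lo}}$ is never decreased below its initialization. For $\alpha_{\text{hi}}$, any value exceeding $\cut_\mathcal{H}(R)/\pi(R)$ works: plugging $S=R$ into~\eqref{eq:stcut} gives objective $\cut_\mathcal{H}(R)/\alpha_{\text{hi}}<\pi(R)$, forcing the min cut below $\pi(R)$. Under the paper's polynomial-boundedness assumption, $\cut_\mathcal{H}(R)\leq |\mathcal{E}|U$ and $\pi(R)\geq 1$, so $\alpha_{\text{hi}}:=|\mathcal{E}|U+1$ suffices. The initial geometric ratio is thus $O(|\mathcal{E}|\cdot U\cdot \pi(V))$, and the standard binary search bound gives $O(\log|\mathcal{E}|+\log U+\log \pi(V))$ iterations to shrink the ratio below $2$.

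The potential obstacle I anticipate is the boundary case where the max-flow value equals $\pi(R)$ exactly: Lemma~\ref{lem:match} requires $|f|=\pi(R)$ and Lemma~\ref{lem:cut} requires the min cut to be strictly below $\pi(R)$, so the algorithm must interpret the tie consistently. I would resolve this by defining the probe outcome via a strict comparison: if the computed max-flow value equals $\pi(R)$, treat the probe $\alpha$ as the new $\alpha_{\text{lo}}$ (the embedding case), and otherwise as the new $\alpha_{\text{hi}}$. This preserves the loop invariant and ensures each iteration strictly shrinks the interval. Beyond this, the argument is mechanical: monotonicity gives a well-defined threshold $\alpha^{*}$ separating the two regimes, Lemmas~\ref{lem:cut} and~\ref{lem:match} cover the two sides of the threshold, and the iteration count follows from the logarithm of the initial ratio.
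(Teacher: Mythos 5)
Your proposal is correct and rests on exactly the same ingredients as the paper's proof --- the dichotomy given by Lemmas~\ref{lem:cut} and~\ref{lem:match} at each probe value of $\alpha$, the certificate that $\alpha = 2/\pi(V)$ always yields a saturating flow (because no set can have $\pi$-expansion below $2/\pi(V)$ once cut penalties are scaled to be at least $1$), and the certificate that any $\alpha > U|\mathcal{E}|$ forces a cut via the feasible $s$-$t$ cut $\{s\}\cup R$ of value $\cut_\mathcal{H}(R)/\alpha < \pi(R)$. Where you diverge is the search structure: the paper's Algorithm~\ref{alg:flowmatch} is a pure doubling procedure (start at $2/\pi(V)$, double $\alpha$ after every saturating flow, stop at the first non-saturating one), whereas you run a geometric bisection on a maintained interval $[\alpha_{\text{lo}},\alpha_{\text{hi}}]$. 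The doubling algorithm gets the relation $\alpha_{\text{hi}} = 2\alpha_{\text{lo}}$ for free from consecutive probes, which is why the paper needs no explicit termination test, and it uses only $O(\log(U|\mathcal{E}|\pi(V)))$ max-flows; your bisection needs the (correct) monotonicity observation that the min $s$-$t$ cut value of $G(\mathcal{H},R,\alpha)$ is nonincreasing in $\alpha$ to justify the interval invariant, but in exchange it would terminate in $O(\log\log(U|\mathcal{E}|\pi(V)))$ probes --- strictly fewer than the bound you (conservatively) claim and fewer than the paper's algorithm uses. Since the theorem is a statement about the specific Algorithm~\ref{alg:flowmatch}, you should note that your two endpoint certificates transfer verbatim to the doubling schedule (the first probe embeds, any probe past $U|\mathcal{E}|$ cuts, hence at most $O(\log|\mathcal{E}| + \log U + \log\pi(V))$ doublings occur), which closes the gap between your variant and the stated algorithm. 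One small bonus of your write-up: your direct verification that objective~\eqref{eq:stcut} is at least $\pi(R)$ for every $S$ when $\alpha = 2/\pi(V)$ is more explicit than the paper's one-line appeal to the expansion lower bound.
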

\begin{proof}
In each iteration the algorithm computes a maximum $s$-$t$ flow in $G(\mathcal{H},R,\alpha)$. By Lemma~\ref{lem:match}, if the flow value is $\pi(R)$ then we can return a $\pi$-regular bipartite graph $M_R$ that can be embedded with congestion $1/\alpha$. Otherwise, Lemma~\ref{lem:cut} guarantees that the minimum $s$-$t$ cut set has expansion less than $\alpha$. The algorithm is guaranteed to return a bipartite graph on the first iteration, because it is impossible to find a set with $\pi$-expansion less than the minimum value $2/\pi(V)$. If the algorithm finds a set $S$ with $\phi_\mathcal{H}(S,\pi) < \alpha$ and then terminates, this means that in the previous iteration it found a bipartite graph $M_R$ that can be embedded with congestion $\alpha/2$. 

For every $\alpha > U|\mathcal{E}|$, finding a maximum $s$-$t$ flow in $G(\mathcal{H},R,\alpha)$ is guaranteed to return a cut set $S$ with $\phi_\mathcal{H}(S,\pi) < \alpha$. This is because $\cut_\mathcal{H}(R) \leq  {U |\mathcal{E}|}$, so the node set $\{s \cup R\}$ is an $s$-$t$ cut set in the auxiliary graph with cut value
\begin{equation*} 
	\frac{\cut_\mathcal{H}(R)}{\alpha} \leq \frac{U|\mathcal{E}|}{\alpha} < 1 \leq \pi(R).
\end{equation*}
Since \textsc{HyperCutOrEmbed} starts at $\alpha = 2/\pi(V)$ and doubles $\alpha$ at every iteration, it will take at most $O(\log U |\mathcal{E}| \pi(V))$ iterations before returning a cut set instead of a matching.
\end{proof}
For the unweighted all-or-nothing cut function, \textsc{HyperCutOrEmbed} terminates in $O(\log |V| + \log |\mathcal{E}|)$ rounds when applied to expansion or conductance. In the more general setting, the iteration bound
only depends logarithmically on the hypergraph size and weights.
\begin{algorithm}[tb]
	\caption{ $\textsc{HyperCutOrEmbed}(\mathcal{H}, R,\pi)$}
	\begin{algorithmic}[5]
		\State{\bfseries Input:} $\mathcal{H} = (V, \mathcal{E})$, node weights $\pi$, bisection $\{R, \bar{R}\}$
		\State {\bfseries Output:} $M_R$ with congestion $1/\alpha$; $S$ with $\phi_\mathcal{H}(S,\pi) \leq 2\alpha$
		\State Set $\alpha = 2/\pi(V)$, \textsc{NoCutFound} = true
		\While{ \textsc{NoCutFound}}
		\State $f$ = $\textsc{MaxSTflow}(G(\mathcal{H},R,\alpha))$.
		\If{$|f| = \pi(R)$}
		\State $M_R \leftarrow \textsc{FlowEmbed}(G(\mathcal{H}),R,f)$; $\alpha \leftarrow 2\alpha$
		\Else
		\State $S = \textsc{MinSTcut}(G(\mathcal{H},R,\alpha), f)$
		\State  \textsc{NoCutFound}  = false
		\EndIf
		\EndWhile
		\State Return $M_R$, $S$, $\alpha$
	\end{algorithmic}
	\label{alg:flowmatch}
	\vspace{-0.2\baselineskip} 
\end{algorithm}

\section{Hypergraph Ratio Cut Algorithms}
\label{sec:mainalg}
\textsc{HyperCutOrEmbed} finds a node set whose $\pi$-expansion is related to a graph that can be embedded in $\mathcal{H}$. Applying Lemma~\ref{lem:hypbound} directly does not imply a useful lower bound or approximation algorithm for $\pi$-expansion in $\mathcal{H}$, as the bipartite graph itself does not have a large $\pi$-expansion. In this section we show how to combine \textsc{HyperCutOrEmbed} with existing strategies for building an expander graph in order to design an approximation algorithm for hypergraph $\pi$-expansion. 

\subsection{Expander building subroutines}
The standard cut-matching procedure for expansion in an undirected graph $G = (V,E)$ can be described as a two-player game between a cut player and a matching player. At the start of iteration $i$, the cut player produces a bisection $\{R_i, \bar{R}_i\}$, and the matching player produces a fractional perfect matching $M_i$ between $R_i$ and $\bar{R}_i$, encoded by a doubly stochastic matrix $\mM_i \in [0,1]^{|V| \times |V|}$. 
After $t$ iterations, the union of matchings defines a graph $H_t = \bigcup_{j = 1}^t M_j$ with adjacency matrix $\mA_t = \sum_{j = 1}^t \mM_j$. The goal of the cut player is to choose bisections in a way that minimizes the number of rounds it takes before $H_t$ is an expander, while the goal of the matching player is to choose matchings that maximize the number of rounds. Algorithm~\ref{alg:expbuild} is a generic outline of this procedure, where \textsc{FindBisection} and \textsc{FractionalMatch} are the cut player subroutine and the matching player subroutine respectively. 
Khandekar et al.~\cite{khandekar2009graph} provided a strategy for the cut player which, for any matching player subroutine \textsc{FractionalMatch}, will force $H_t$ to have expansion at least $1/2$ for some $t = O(\log^2 n)$ with high probability. This was used to show an $O(\log^2 n)$ approximation for graph expansion.

\begin{algorithm}[t]
	\caption{Generic cut-matching game for building an expander}
	\begin{algorithmic}
		\State $M_0 = (V,\emptyset)$
		\For{$i = 1$ to $t$}
		\State $R_i = \textsc{FindBisection}(M_0, M_1, M_2, \hdots , M_{i-1})$
		\State $M_i = \textsc{FractionalMatch}(\{R_i, \bar{R}_i\})$
		\State $H_i = \bigcup_{j = 1}^i M_j$
		\EndFor
	\end{algorithmic}
	\label{alg:expbuild}
\end{algorithm}

Since the work of Khandekar et al.~\cite{khandekar2009graph}, the cut-matching framework has been applied, improved, and generalized in many different ways~\cite{orecchia2008partitioning,sherman2009breaking,louis2010cut,orecchia2011fast,orecchia2022practical}. Although previous work on cut-matching games has largely focused on expansion node weights ($\mu(v) = 1$), Orecchia et al.~\cite{orecchia2022practical} recently introduced a setting where the cut player produces a set $R_i$ satisfying $\pi(R_i) \leq \pi(\bar{R}_i)$ at each iteration, and the matching player produces a $\pi$-regular bipartite graph $M_i$ on $\{R_i, \bar{R}_i\}$. Taking the union of all bipartite graphs up through the $i$th iteration produces a graph $H_i = (V,E_{H_i})$. The cut player wishes to make $H_i$ a $\pi$-expander for small $i$. Lemma~\ref{lem:thm7} summarizes a cut player strategy that applies to this setting and leads to an $O(\log n)$-approximation for graph ratio cut objectives.
\begin{lemma}[Theorem 7 in~\cite{orecchia2022practical}]
	\label{lem:thm7}
	There exists a cut player strategy such that, for any matching player strategy, $H_t$ satisfies $\phi_{H_t,\pi} = \Omega (\log n)$ with high probability for some $t = O(\log^2 n)$. In round $i$, the cut player can compute $R_i$ in time $O(|E_{H_i}| \cdot \text{polylog}(\pi(V)))$.
\end{lemma}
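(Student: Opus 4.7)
The plan is to adapt the Khandekar--Rao--Vazirani cut-matching analysis to the $\pi$-weighted setting. I would track, for each node $u \in V$, a vector $\vp_u^{(i)} \in \mathbb{R}^V$ that represents the distribution of an appropriately defined $\pi$-weighted random walk on $H_i$ started at $u$ (equivalently, the $u$-th row of a suitable power of the walk matrix $\mD_\pi^{-1}\mA_{H_i}$ corrected by the non-uniform stationary distribution). Define the potential
$\Phi_i = \sum_{u \in V} \pi(u)\,\|\vp_u^{(i)} - \bar{\vp}^{(i)}\|_2^2$,
where $\bar{\vp}^{(i)}$ is the $\pi$-weighted average of the $\vp_u^{(i)}$. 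A $\pi$-weighted Cheeger-type argument implies that once $\Phi_t \leq 1/\text{poly}(n)$, the graph $H_t$ is a $\pi$-expander with constant expansion; and at $i = 0$, $\Phi_0$ is bounded by a polynomial in $n$ and $\pi(V)$.

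For the cut player strategy, at iteration $i$ I would draw a random Gaussian direction $\vr \in \mathbb{R}^V$, form projections $y_u = \langle \vp_u^{(i-1)} - \bar{\vp}^{(i-1)}, \vr\rangle$, sort the nodes by $y_u$, and take $R_i$ to be the prefix whose cumulative $\pi$-weight just fails to exceed $\pi(V)/2$. Intuitively, a Gaussian projection of a set of vectors with large $\pi$-weighted variance separates them into two halves whose mean projected values differ by a standard deviation, so $R_i$ isolates a large-variance direction that any $\pi$-regular bipartite graph on $\{R_i, \bar{R}_i\}$ must smooth out.

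The core technical step -- and the main obstacle -- is a potential-drop lemma: for any $\pi$-regular bipartite graph $M_i$ returned by the matching player on $\{R_i, \bar{R}_i\}$, the update from $\vp_u^{(i-1)}$ to $\vp_u^{(i)}$ shrinks $\Phi$ in expectation by a factor $1 - \Omega(1/\log n)$. This requires extending the classical contraction property of doubly stochastic matrices to our setting: after rescaling by $\mD_\pi^{1/2}$, a $\pi$-regular bipartite graph becomes (essentially) doubly stochastic, so the averaging operator is a contraction in the $\pi$-weighted $\ell_2$ norm, and the Gaussian-projection lower bound on separation across $\{R_i, \bar{R}_i\}$ forces the contraction to be strict in expectation. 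Care is needed because $\pi(R_i)$ and $\pi(\bar{R}_i)$ need not be exactly equal, so the rescaling argument must tolerate constant imbalance without losing the contraction constant.

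Iterating the potential drop yields $\Phi_t \leq \Phi_0 \cdot (1-\Omega(1/\log n))^t$, so $t = O(\log^2 n)$ rounds suffice to cross the $\pi$-expander threshold with high probability (after a union bound over rounds and over the randomness of $\vr$). For the runtime, I would avoid an exact eigendecomposition and instead approximate the projections $y_u$ using $O(\text{polylog}(\pi(V)))$ steps of power iteration / Johnson--Lindenstrauss sketching against the walk matrix of $H_i$; each step costs one matrix-vector product of size $O(|E_{H_i}|)$, delivering the claimed $O(|E_{H_i}| \cdot \text{polylog}(\pi(V)))$ per-round bound. The detailed weighted contraction inequalities and the precise potential calculation would follow the framework of Orecchia et al.~\cite{orecchia2022practical}.
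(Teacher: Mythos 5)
There is a genuine gap: your argument proves a strictly weaker conclusion than the lemma claims. The KRV-style potential $\Phi_i = \sum_u \pi(u)\|\vp_u^{(i)} - \bar{\vp}^{(i)}\|_2^2$ with a per-round drop of $1-\Omega(1/\log n)$ drives $\Phi_t$ below $1/\mathrm{poly}(n)$ in $t = O(\log^2 n)$ rounds, and — as you yourself state — this certifies only that $H_t$ is a $\pi$-expander with \emph{constant} expansion. The lemma asserts $\phi_{H_t,\pi} = \Omega(\log n)$, and this stronger bound is load-bearing: in the paper's main approximation theorem the final guarantee is $t/\phi_{H_t,\pi} = O(\log^2 n)/\Omega(\log n) = O(\log n)$. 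Substituting constant expansion would degrade the whole algorithm to an $O(\log^2 n)$ approximation, i.e., you would have reproved the weighted analogue of Khandekar--Rao--Vazirani rather than the cited result.

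The paper does not prove this lemma itself — it imports it as Theorem 7 of Orecchia et al.\ and explicitly notes that the cut player strategy ``is based on using a heat-kernel random walk in the graph $H_t$,'' following Orecchia--Schulman--Vazirani--Vishnoi and Orecchia's thesis. That analysis replaces your Euclidean variance potential with a spectral one built from the matrix exponential $\exp(-c\,\mathcal{L}_t)$ (a trace/log-trace potential), and the bisection is obtained by Gaussian projection of the heat-kernel embedding rather than of plain random-walk distributions. The point of the finer potential is precisely that it tracks $\lambda_2$ of the normalized Laplacian of $H_t$ and shows it grows to $\Omega(\log n)$ after $O(\log^2 n)$ rounds, which a variance-contraction argument cannot see. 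Your runtime sketch (approximating the projections by repeated matrix-vector products against $H_i$, at cost $O(|E_{H_i}|)$ each) is consistent with how the heat-kernel vectors are actually approximated, so that part survives; it is the choice of potential and the resulting expansion guarantee that need to change.
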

This strategy is based on using a heat-kernel random walk in the graph $H_t$. The approach was first used for expansion weights ($\mu(v) = 1$) by Orecchia et al.~\cite{orecchia2008partitioning} and considered in more depth in Orecchia's PhD thesis~\cite{orecchia2011fast}. Lemma~\ref{lem:thm7} generalizes this to general node weights and was presented in recent work on overlapping graph clustering~\cite{orecchia2022practical}. More technical details on cut-matching games with general node weights were presented recently by an overlapping group of authors~\cite{ameranis2023efficient}. We refer to the cut player strategy of Lemma~\ref{lem:thm7} as $\textsc{HeatKernelPartition}$. In the first iteration, $H_0$ is empty, so the cut player starts with any balanced bipartition $\{R, \bar{R}\}$.

\subsection{The approximation algorithm}
The notion of expander embedding that we have introduced involves embedding an expander {graph} into a hypergraph $\mathcal{H}$. Therefore, we can use existing cut player strategies to build a $\pi$-expander, which we embed into $\mathcal{H}$ in the sense of Definition~\ref{def:embed} using algorithm \textsc{HyperCutOrEmbed}. Our cut-matching approximation algorithm for generalized hypergraph cut expansion (Algorithm~\ref{alg:cutmatch}) is obtained by replacing the generic fractional matching subroutine in Algorithm~\ref{alg:expbuild} with \textsc{HyperCutOrEmbed}, and using \textsc{HeatKernelPartition} for the cut player step.
\begin{algorithm}[t]
	\caption{$O(\log n)$ approximation for hypergraph $\pi$-expansion}
	\begin{algorithmic}
		\State{\bfseries Input:} $\mathcal{H} = (V, \mathcal{E})$, generalized $\cut_\mathcal{H}$, bisection $\{R, \bar{R}\}$
		\State {\bfseries Output:} Set $S$ with small $\pi$-expansion
		\State $M_0 = \emptyset$
		\For{$i = 1$ to $t$}
		\State $R_i = \textsc{HeatKernelPartition}(M_0,M_1, M_2, \hdots , M_{i-1})$
		\State $M_i, S_i, \alpha_i = \textsc{HyperCutOrEmbed}(\mathcal{H},R_i)$
		\EndFor
		\State Return $S^* = \argmin_{i = 1,2, \hdots t} \phi_\mathcal{H}(S_i,\pi)$
	\end{algorithmic}
	\label{alg:cutmatch}
\end{algorithm}
\begin{theorem}
	For some $t = O(\log^2 n)$, Algorithm~\ref{alg:cutmatch} is an $O(\log n)$-approximation algorithm for hypergraph $\pi$-expansion where $\cut_\mathcal{H}$ is any submodular cardinality-based hypergraph cut function.
\end{theorem}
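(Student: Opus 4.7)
The plan is to combine the cut-player guarantee of Lemma~\ref{lem:thm7} with the hypergraph expander-embedding lower bound from Lemma~\ref{lem:hypbound}, using the bipartite graphs and small-expansion sets produced by repeated calls to \textsc{HyperCutOrEmbed}. Run the game for $t = O(\log^2 n)$ rounds and let $H_t = \bigcup_{i=1}^t M_i$ be the union of the $\pi$-regular bipartite graphs. By Lemma~\ref{lem:thm7}, with high probability $\phi_{H_t,\pi} = \Omega(\log n)$. By Theorem~\ref{thm:cutmatch}, each $M_i$ embeds in $G(\mathcal{H})$ with congestion $1/\alpha_i$ and comes bundled with a set $S_i$ satisfying $\phi_\mathcal{H}(S_i,\pi) < 2\alpha_i$. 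Set $\alpha^* = \min_i \alpha_i$, so the algorithm's output $S^*$ satisfies $\phi_\mathcal{H}(S^*,\pi) < 2\alpha^*$.

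The central step is to show that $H_t$ itself embeds in $G(\mathcal{H})$ with congestion $\gamma = \sum_{i=1}^t 1/\alpha_i$, so that Lemma~\ref{lem:hypbound} can be applied. Fix a bisection $\{S,\bar S\}$. For each pair $(u,v) \in \mathcal{D}(S)$, the weight $w_{H_t}(u,v)$ is the sum over matchings $M_i$ containing $\{u,v\}$ of $w_{M_i}(u,v)$, each of which is already routed between $u$ and $v$ in $G(\mathcal{H})$ by the embedding of $M_i$. Summing these per-matching flows yields a valid $u$-$v$ flow of value $w_{H_t}(u,v)$, and summing over all pairs gives a combined flow whose congestion on any edge of $G(\mathcal{H})$ is at most $\sum_i 1/\alpha_i$ by the subadditivity of congestion noted after the flow-sum definition. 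Lemma~\ref{lem:hypbound} therefore yields
\begin{equation*}
\phi_{\mathcal{H},\pi} \;\geq\; \frac{\phi_{H_t,\pi}}{\gamma} \;=\; \frac{\Omega(\log n)}{\sum_{i=1}^t 1/\alpha_i}.
\end{equation*}

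To close the loop, use the trivial bound $1/\alpha_i \leq 1/\alpha^*$ to get $\gamma \leq t/\alpha^* = O(\log^2 n)/\alpha^*$, so
\begin{equation*}
\phi_{\mathcal{H},\pi} \;\geq\; \Omega\!\left(\frac{\alpha^*}{\log n}\right) \quad \Longrightarrow \quad \alpha^* \;\leq\; O(\log n)\cdot \phi_{\mathcal{H},\pi},
\end{equation*}
and hence $\phi_\mathcal{H}(S^*,\pi) < 2\alpha^* \leq O(\log n)\cdot \phi_{\mathcal{H},\pi}$. The hardest part to write out cleanly is the embedding-composition argument in the middle paragraph: although the subadditivity of congestion is stated for single flows, Definition~\ref{def:embed} is quantified uniformly over bisections, and each matching $M_i$ comes with its own per-bisection flow functions. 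One has to be careful, for a fixed $\{S,\bar S\}$, to pick one embedding flow from each $M_i$ and sum them into a single flow whose congestion bound holds edge-by-edge on $G(\mathcal{H})$. Once this bookkeeping is set up, the rest of the argument is essentially arithmetic on the bounds supplied by Lemmas~\ref{lem:thm7}, \ref{lem:hypbound}, and Theorem~\ref{thm:cutmatch}.
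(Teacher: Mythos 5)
Your proposal is correct and follows essentially the same route as the paper's own proof: embed each $M_i$ with congestion $1/\alpha_i$, sum to embed $H_t$ with congestion $\sum_i 1/\alpha_i \leq t/\alpha^*$, and combine Lemma~\ref{lem:thm7} with Lemma~\ref{lem:hypbound} to bound $\alpha^*$ by $O(\log n)\cdot\phi_{\mathcal{H},\pi}$. The only difference is that you spell out the per-bisection flow-summation bookkeeping for the union embedding, which the paper asserts without elaboration; this is a harmless (indeed helpful) addition rather than a departure.
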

\begin{proof}
	At iteration $i$, \textsc{HyperCutOrEmbed} produces a $\pi$-regular bipartite graph $M_i$ that can be embedded in $G(\mathcal{H})$ with expansion $\frac{1}{\alpha_i}$ and a set $S_i$ with expansion $\phi_{\mathcal{H}}(S_i,\pi) \leq 2\alpha_i$.
	Define $i^* = \argmin_i \alpha_i$ and note that $\phi_\mathcal{H}(S^*,\pi) \leq \phi_\mathcal{H}(S_{i^*},\pi)$.
	The union of bipartite graphs $H_t$ can be embedded in $G(\mathcal{H})$ with congestion $\sum_{i = 1}^t \frac{1}{\alpha_i} \leq \frac{t}{\alpha_{i^*}}$. From Lemma~\ref{lem:thm7}, we have $t \leq c_1 \log^2 n$ and $\phi_{H_t,\pi} \geq c_2 \log n$ with high probability, where $c_1$ and $c_2$ are positive constants. Combining this with Lemma~\ref{lem:hypbound} we have
	\begin{equation*}
		\frac{\alpha_i^*}{\log n} \leq \frac{c_1 \alpha_i^*}{t}\log n \leq \frac{c_1 \alpha_i^*}{c_2t} \phi_{H_t,\pi} \leq \frac{c_1}{c_2} \phi_{\mathcal{H},\pi}. 
	\end{equation*}
So $\phi_{\mathcal{H},\pi} = \Omega\left({\alpha_i^*}/{\log n}\right)$, and the algorithm returns a set $S^*$ with $\phi_\mathcal{H}(S^*,\pi) \leq 2\alpha_i^*$, proving the $O(\log n)$ approximation guarantee.
%
%
\end{proof}

\subsection{Runtime Analysis}
For our runtime analysis, we assume the hypergraph is connected and that all hyperedge weights and node weights are scaled to be integers. Let $n = |V|$, $m = |\mathcal{E}|$, and $\mu = \sum_{e \in \mathcal{E}} |e|$. In order to focus on the main terms in the runtime, will use $\tilde{O}$ notation to hide logarithmic factors of $m$ and $n$. For our analysis we also assume that the maximum edge penalty $U$ and sum of node weights are small enough that $O(\log U)$ and $O(\log \pi(V))$ are both $\tilde{O}(1)$, and hence the maximum number of iterations of~\textsc{HyperCutOrEmbed} is $\tilde{O}(1)$. When choosing a node weight function corresponding to conductance we have $\log \pi(V) = O(\log mn)$, and choosing the $\pi$ corresponding to standard expansion we have $\log \pi(V) = \log n$. 

In order to speed up the runtime for Algorithm~\ref{alg:cutmatch}, we can apply existing sparsification techniques for hypergraph-to-graph reduction~\cite{veldt2021approximate}. This allows us to model the generalized cut function of $\mathcal{H}$ to within a factor of $(1+\varepsilon)$ for a small constant $\varepsilon > 0$ with an augmented graph $G(\mathcal{H})$ with $N = O(n + \sum_{e \in \mathcal{E}} \log |e|) = \tilde{O}(n + m)$ nodes and $M = O(\sum_{e \in \mathcal{E}} |e| \log |e|) = \tilde{O}(\mu)$ edges. 
Constructing this graph takes $O(M)$ time.  For $\alpha > 0$, the graph $G(\mathcal{H}, R, \alpha)$ also has $O(N)$ nodes and $O(M)$ edges. 

The flow decomposition step in \textsc{HyperCutOrEmbed} can be accomplished in $O(M \log N) = \tilde{O}(\mu)$ time using dynamic trees~\cite{sleator1983data}; note that for this step we do not need to explicitly return the entire flow decomposition but simply must identify the endpoints in each directed path for the bipartite graph we are embedding. Lemma~\ref{lem:thm7} indicates that the total time spent on the cut player strategy will be $\tilde{O}(|E_{H_t}|)$, which we can bound above by $\tilde{O}(\mu)$. To see why, observe that the number of edges we add to the bipartite graph constructed by \textsc{FlowEmbed} will be bounded above by the number of different directed flow paths, which by Lemma~\ref{lem:flowdeomp} will be bounded above by $O(M) = O(\mu)$. Combining the edges from all $O(\log^2 n)$ bipartite graphs shows $\tilde{O}(|E_{H_t}|) = \tilde{O}(\mu)$. The overall runtime of our algorithm is dominated by the time it takes to solve a maximum $s$-$t$ flow in $G(\mathcal{H},R,\alpha)$. This overall runtime is $\tilde{O}(\mu + (n+m)^{3/2})$ if using the algorithm of van den Brand et al.~\cite{brand2021minimum}. The recent algorithm of Chen et al.~\cite{chen2022maximum} brings the runtime down to $\tilde{O}(\mu^{1+o(1)})$, nearly linear in terms of the hypergraph size $\mu$. 

For comparison, the existing LP relaxation~\cite{kapralov2020towards} and SDP relaxation~\cite{louis2016approximation}, which only apply to all-or-nothing hypergraph cuts, both involve $\Omega(n^2 + m)$ variables and $\Omega(n^3 + \sum_{e \in E} |e|^2)$ linear constraints. For many choices of $m$, $n$, and $\mu$, simply {writing down} these relaxations has a slower runtime than running our method, while actually solving them is of course far worse. When written in the form $\min_{\textbf{A}\textbf{x} = \textbf{b}} \textbf{c}^T\vx$, the LP has $\Omega(n^3 + \sum_{e \in E} |e|^2)$ constraints and variables. For problems of this size, even recent breakthrough theoretical results in LP solvers~\cite{jiang2020faster,cohen2021solving} lead to runtimes significantly worse than $\Omega(n^6 + n^3\sum_{e} |e|^2 + \mu^2)$.


\subsection{Practical improvements} 
\label{sec:practical}
We incorporate a number of practical updates to simplify our algorithm's implementation and improve approximation guarantees.

\textbf{Practical flow algorithms.}
First of all, our implementation uses the push-relabel max-flow algorithm~\cite{cherkassky1997implementing}, which has a worse (but still fast) theoretical runtime and comes with various heuristics that make it very fast in practice. For the flow decomposition, we use a standard decomposition technique (see Theorem 3.5 in~\cite{ahuja1988network}) that does not require dynamic trees. This is again theoretically slower but still fast and is simpler to work with in practice. 

\textbf{Update to \textsc{HyperFlowEmbed}.} 
We also use an altered version of \textsc{HyperFlowEmbed} that returns a bipartite graph $M_R$ that can be embedded with congestion $1/\alpha$ as well as a set $S$ that has $\pi$-expansion equal to $\alpha$, rather than just a set $S$ with $\phi_\mathcal{H}(S,\pi) < 2\alpha$.  This improves the approximation by a factor of 2, which does not change the theoretical worst case $O(\log n)$ approximation but can make a substantial difference in practice.
To accomplish this, we set $\alpha = \phi_\mathcal{H}(R,\pi)$ in the first iteration and solve a maximum $s$-$t$ flow problem on $G(\mathcal{H},R,\alpha)$ to search for a set $S$ with $\pi$ expansion better than $\phi_\mathcal{H}(R,\pi)$. In each iteration, we update $\alpha$ to equal the $\pi$-expansion of the improved set found in the previous iteration, until no more improvement is found. This iterative refinement approach is standard and typically used in practice by related ratio cut improvement algorithms~\cite{LangRao2004,veldt16simple,veldt2020hyperlocal,orecchia2022practical}. Although performing a bisection method over $\alpha$ leads to better theoretical runtimes, in practice it typically takes only a few iterations of cut improvement before the iterative refinement procedure converges. Thus, this is often faster in practice in addition to improving the approximation guarantee by a factor 2. 

\textbf{Improved lower bounds and a posteriori approximations.}
Finally, we establish more precise lower bounds on the approximation guarantee satisfied by the algorithm in each iteration. This allows us to obtain improved a posteriori approximation guarantees. For node weight function $\pi$, let $\mD_\pi$ be the diagonal matrix where $\mD_\pi(i,i) = \pi(i)$ and define $\mathcal{L}_t = \mD_\pi^{-1/2}\mL_t \mD_\pi^{-1/2}$. 
Let $\lambda_2(\mathcal{L}_t)$ be the 2nd smallest eigenvalue of this matrix. 
%
The following theorem presents a precise and easy-to-compute lower bound on the approximation factor achieved by our approximation algorithm at each iteration. For this result, we use the updated version of $\textsc{HyperFlowEmbed}$  that returns a set $S$ with $\phi_{\mathcal{H}}(S,\pi) = \alpha$.

\begin{theorem}
	\label{thm:bounds}
	After $t$ iterations, Algorithm~\ref{alg:cutmatch} will return a set $S_t^*$ satisfying $\phi_\mathcal{H}(S_t^*,\pi) \leq \rho_t \phi_\mathcal{H,\pi}$, where $\rho_t = \frac{2 \gamma_t \phi_\mathcal{H}(S_t^*,\pi) }{\lambda_2(\mathcal{L}_t)} \leq \frac{2 t }{\lambda_2(\mathcal{L}_t)}$ and $\gamma_t = \sum_{i = 1}^t \frac{1}{\alpha_i}$.
\end{theorem}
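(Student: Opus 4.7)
The plan is to combine three ingredients in sequence: additivity of congestion for embeddings across iterations, the hypergraph expander embedding lower bound of Lemma~\ref{lem:hypbound}, and a $\pi$-weighted Cheeger-style inequality for the normalized Laplacian $\mathcal{L}_t$. At iteration $i$, by Theorem~\ref{thm:cutmatch} (with the practical update of Section~\ref{sec:practical}), \textsc{HyperCutOrEmbed} returns a $\pi$-regular bipartite graph $M_i$ that embeds in $G(\mathcal{H})$ with congestion $1/\alpha_i$, together with a set $S_i$ satisfying $\phi_\mathcal{H}(S_i,\pi) = \alpha_i$. First I would argue that the union $H_t = \bigcup_{i \leq t} M_i$ embeds in $G(\mathcal{H})$ with congestion $\gamma_t = \sum_{i=1}^t 1/\alpha_i$: for any bisection $\{S,\bar{S}\}$, the per-iteration flows for the demand pairs of $M_i$ can be added, and the congestion sub-additivity property noted in the preliminaries gives $\textbf{congestion}(\sum_i f^{(i)}) \leq \sum_i \textbf{congestion}(f^{(i)}) \leq \gamma_t$.

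Next, I would apply Lemma~\ref{lem:hypbound} to the embedding of $H_t$ into $G(\mathcal{H})$ to conclude
\begin{equation*}
\phi_{\mathcal{H},\pi} \geq \frac{1}{\gamma_t}\, \phi_{H_t,\pi}.
\end{equation*}
The key remaining step is a Cheeger-type lower bound for $\pi$-expansion on the undirected graph $H_t$. Because $\mathcal{L}_t = \mD_\pi^{-1/2} \mL_t \mD_\pi^{-1/2}$ is the $\pi$-normalized Laplacian of $H_t$, the substitution $\vx = \mD_\pi^{1/2}\vy$ shows that $\lambda_2(\mathcal{L}_t)$ equals the minimum of the generalized Rayleigh quotient $\vy^\top \mL_t \vy / \vy^\top \mD_\pi \vy$ over vectors $\vy \perp_{\mD_\pi} \mathbf{1}$. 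The standard Cheeger argument applied in this generalized setting (as in~\cite{chung1997spectral} adapted to node weights $\pi$, and exactly the form used in~\cite{orecchia2022practical}) then yields
\begin{equation*}
\phi_{H_t,\pi} \geq \lambda_2(\mathcal{L}_t)/2.
\end{equation*}
Chaining the two inequalities gives $\phi_{\mathcal{H},\pi} \geq \lambda_2(\mathcal{L}_t)/(2\gamma_t)$, so
\begin{equation*}
\frac{\phi_\mathcal{H}(S_t^*,\pi)}{\phi_{\mathcal{H},\pi}} \leq \frac{2 \gamma_t\, \phi_\mathcal{H}(S_t^*,\pi)}{\lambda_2(\mathcal{L}_t)} = \rho_t.
\end{equation*}

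Finally, for the simplified bound $\rho_t \leq 2t/\lambda_2(\mathcal{L}_t)$, I would use that $S_t^* = \argmin_i \phi_\mathcal{H}(S_i,\pi)$, hence $\phi_\mathcal{H}(S_t^*,\pi) \leq \alpha_i$ for every $i \leq t$. This gives
\begin{equation*}
\gamma_t\, \phi_\mathcal{H}(S_t^*,\pi) = \sum_{i=1}^t \frac{\phi_\mathcal{H}(S_t^*,\pi)}{\alpha_i} \leq \sum_{i=1}^t 1 = t,
\end{equation*}
and substituting into the definition of $\rho_t$ produces the desired bound.

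The main obstacle is the second step: carefully justifying the $\pi$-weighted Cheeger inequality $\phi_{H_t,\pi} \geq \lambda_2(\mathcal{L}_t)/2$ in the form needed here. The standard statement is for conductance (where node weights are degrees in $H_t$), but here $\pi$ is arbitrary and unrelated to the degrees of $H_t$. I would need to state the generalized inequality precisely (as a lower bound on $\min_S \cut_{H_t}(S) / \min\{\pi(S),\pi(\bar{S})\}$ in terms of the second eigenvalue of $\mD_\pi^{-1/2} \mL_t \mD_\pi^{-1/2}$), and cite the generalized Cheeger argument from the cut-matching literature with node weights~\cite{orecchia2022practical,ameranis2023efficient}. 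Everything else is a direct assembly of Lemma~\ref{lem:hypbound}, congestion additivity, and the optimality of $S_t^*$ among the $S_i$.
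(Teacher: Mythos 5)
Your proposal is correct and follows essentially the same route as the paper: congestion additivity for the union $H_t$, Lemma~\ref{lem:hypbound}, the bound $\lambda_2(\mathcal{L}_t) \leq 2\phi_{H_t,\pi}$, and the observation that $\phi_\mathcal{H}(S_t^*,\pi) \leq \alpha_i$ for all $i$ gives $\gamma_t\,\phi_\mathcal{H}(S_t^*,\pi) \leq t$. The one step you flag as the main obstacle is only the \emph{easy} direction of the $\pi$-weighted Cheeger inequality, and the paper dispatches it with the explicit test vector $\vx(i) = 1/\pi(S)$ for $i \in S$ and $\vx(i) = -1/\pi(\bar{S})$ for $i \in \bar{S}$, which satisfies $\vx^T\mD_\pi\ve = 0$ and has Rayleigh quotient $\cut_{H_t}(S)/\pi(S) + \cut_{H_t}(S)/\pi(\bar{S}) \leq 2\phi_{H_t}(S,\pi)$, so no external citation is needed.
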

\begin{proof}
The result follows by expressing the eigenvalue as a Raleigh quotient and using standard spectral graph theory techniques to first show that
\begin{equation}
	\label{eq:lambda2}
	\lambda_2(\mathcal{L}_t) = \min_{\substack{\vx \in \mathbb{R} \\ \vx^T \mD_{\pi} \ve = 0}} \frac{ \vx^T \mL_t \vx}{\vx^T \mD_\pi \vx} \leq 2 \cdot \min_{S \subseteq V} \phi_{H_t}(S, \pi) = 2  \phi_{H_t,\pi}.
\end{equation}
This lower bound on the $\pi$-expansion in $H_t$ can be shown by considering an arbitrary set $S \subseteq V$ and defining a vector $\vx$ by
\begin{equation*}
	\vx(i) = \begin{cases}
		\frac{1}{\pi(S)} & \text{ if $i \in S$} \\
		-\frac{1}{\pi(\bar{S})} & \text{ if $i \in \bar{S}$}.
	\end{cases}
\end{equation*}
Observe that this vector satisfies $\vx^T \mD_{\pi} \ve = 0$, and 
\begin{equation*}
	\frac{ \vx^T \mL_t \vx}{\vx^T \mD_\pi \vx} = \frac{\cut_{H_t}(S)}{\pi(S)} + \frac{\cut_{H_t}(S)}{\pi(\bar{S})} \leq 2 \phi_{H_t}(S,\pi).
\end{equation*}
Combining the bound in~\eqref{eq:lambda2} with the fact that $H_t$ can be embedded in $\mathcal{H}$ with congestion $\gamma_t$ gives the following lower bound on $\phi_{\mathcal{H},\pi}$:
\begin{equation*}
 \frac{1}{2\gamma_t}	\lambda_2(\mathcal{L}_t) \leq \frac{1}{\gamma_t} \phi_{H_t,\pi}  \leq \phi_\mathcal{H,\pi}.
\end{equation*}
Dividing $\phi_\mathcal{H}(S_t^*,\pi)$ by this lower bound proves the approximation.

\end{proof}
This theorem suggests another alternative for the cut player strategy: choose a partition $\{R, \bar{R}\}$ by thresholding entries in the second smallest eigenvector of $\mathcal{L}_t$. Embedding a $\pi$-regular bipartite graph across this partition increases the value of $\lambda_2(\mathcal{L}_t)$ in subsequent iterations. There exist extreme cases where greedily choosing a bipartition based on this eigenvector makes slow progress (see discussion on page 35 of~\cite{orecchia2011fast}). However, this tends to produce good results in practice, and we can use Theorem~\ref{thm:bounds} to compute concrete lower bounds on $\pi$-expansion using this strategy.

	\section{Experiments}
We implement our hypergraph cut matching algorithm (HCM) in Julia using all of the practical improvements from Section~\ref{sec:practical}. We compare it against other hypergraph ratio-cut algorithms in minimizing global ratio cut objectives on hypergraphs from various application domains. All experiments were run on an Macbook Air with an Apple M1 Chip and 16GB of RAM. All data and code is provided at~\url{https://github.com/nveldt/HyperCutMatch}. The appendix provides additional details on implementations and datasets. 

\textbf{Datasets.}
\emph{Trivago} encodes sets of vacation rentals that a single user clicks on during a browsing session on \texttt{Trivago.com}~\cite{chodrow2021generative}. \emph{Mathoverflow} encodes groups of posts on \texttt{mathoverflow.com} that a single user answers~\cite{veldt2020hyperlocal}. \emph{Amazon9}~\cite{veldt2020parameterized} encodes Amazon retail products (from the 9 smallest product categories of a larger dataset~\cite{ni2019justifying}) that are reviewed by the same user. \emph{TripAdvisor} encodes sets of accommodations on \texttt{Tripadvisor.com} that are reviewed by the same user~\cite{veldt2021higher}. See Table~\ref{tab:larger} for hypergraph statistics. We also consider hypergraphs from the UCI repository~\cite{asuncion2007uci} that have been use frequently in previous work on hypergraph learning~\cite{panli_submodular,zhang2017re,zhu2022hypergraph,hein2013total}.


\subsection{Comparison against convex relaxations}
\label{sec:approx}
\begin{figure}[t]
	\centering
	\subfigure[Mathoverflow approximation\label{fig:1} ]
	{\includegraphics[width=.495\linewidth]{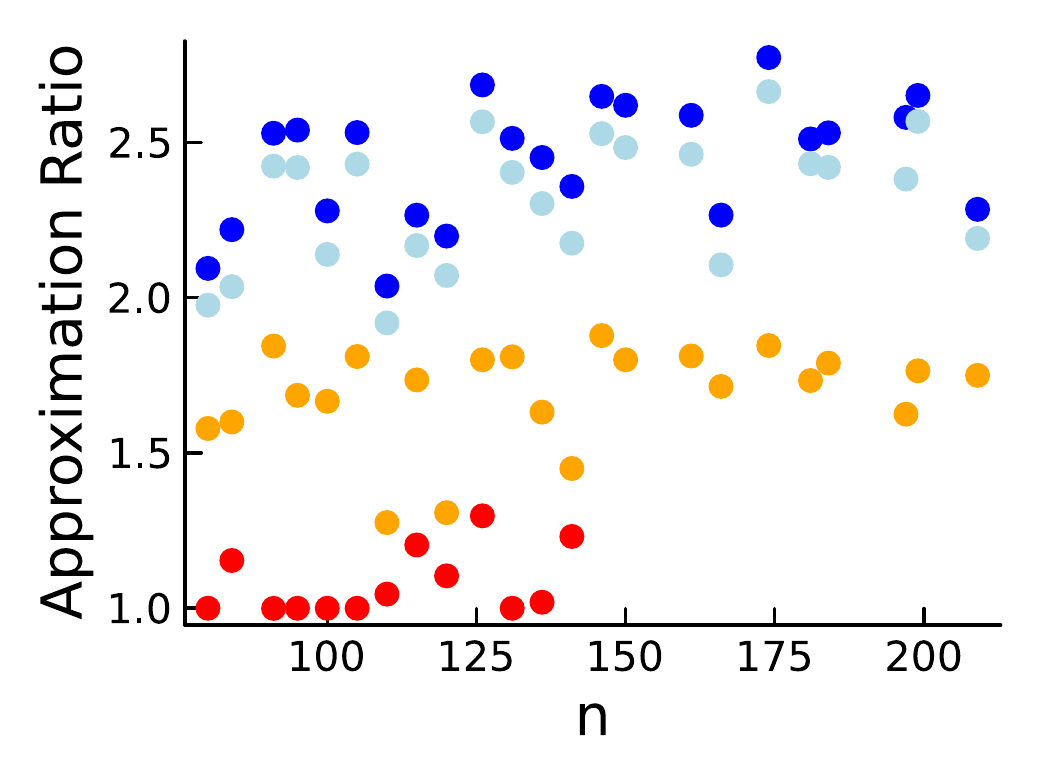}}
	\subfigure[Trivago approximation\label{fig:2} ]
	{\includegraphics[width=.495\linewidth]{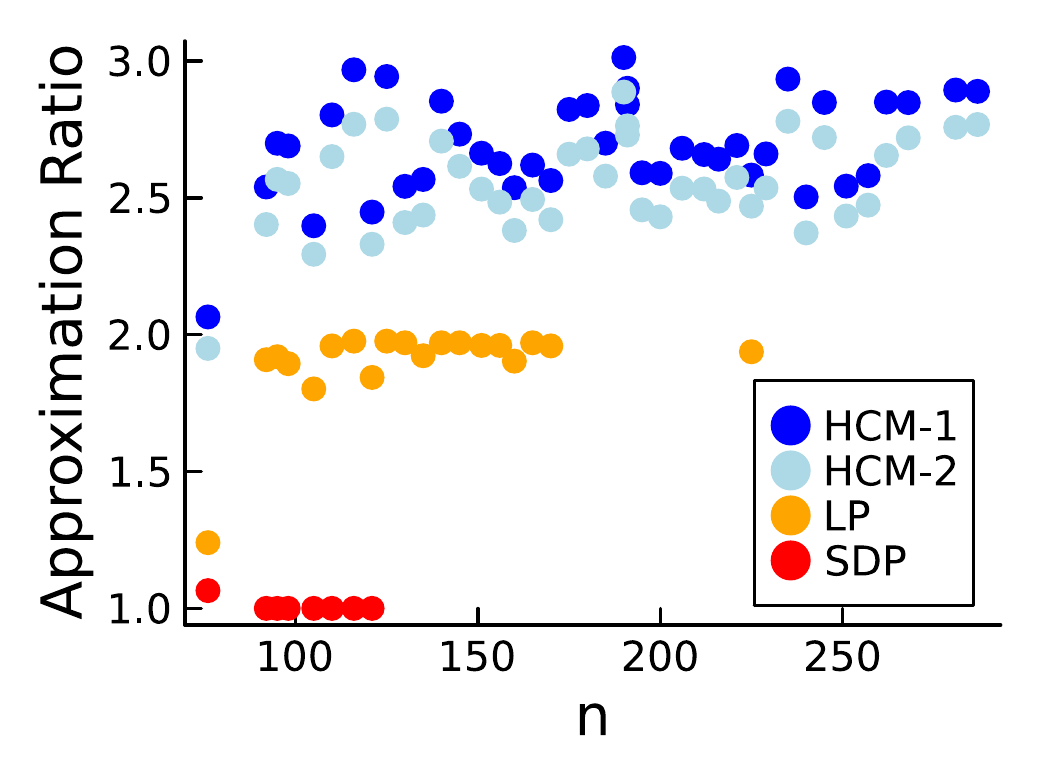}}
	\subfigure[Mathoverflow runtime\label{fig:3} ]
	{\includegraphics[width=.495\linewidth]{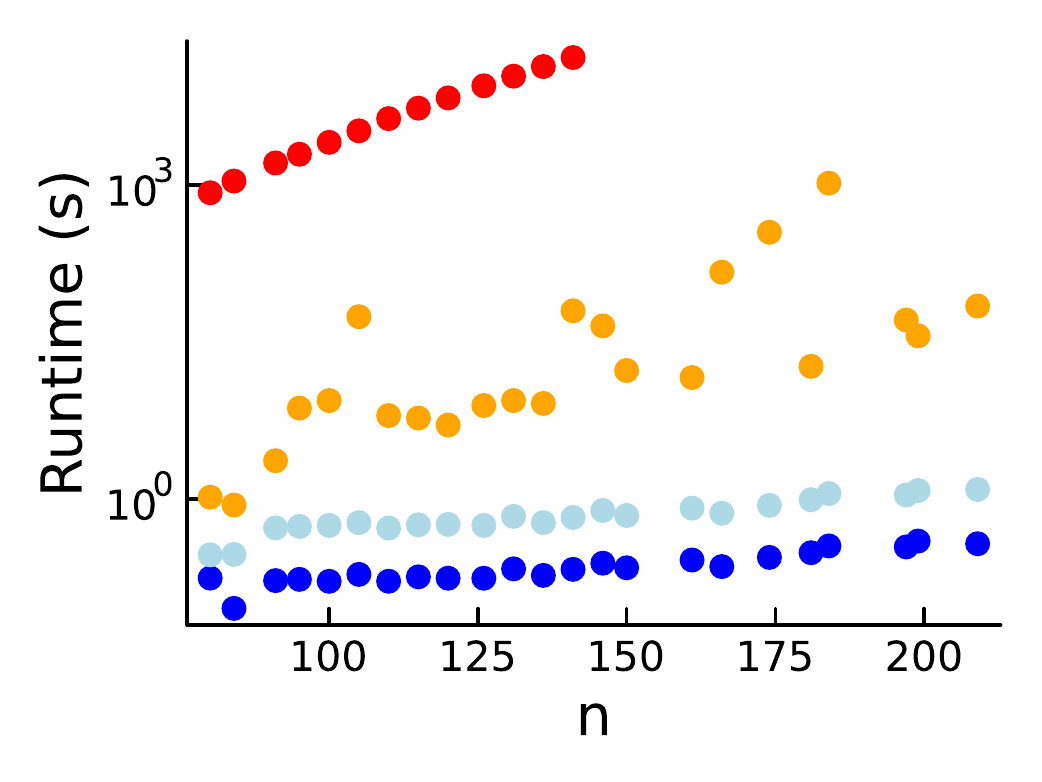}}
	\subfigure[Trivago runtime\label{fig:4} ]
	{\includegraphics[width=.495\linewidth]{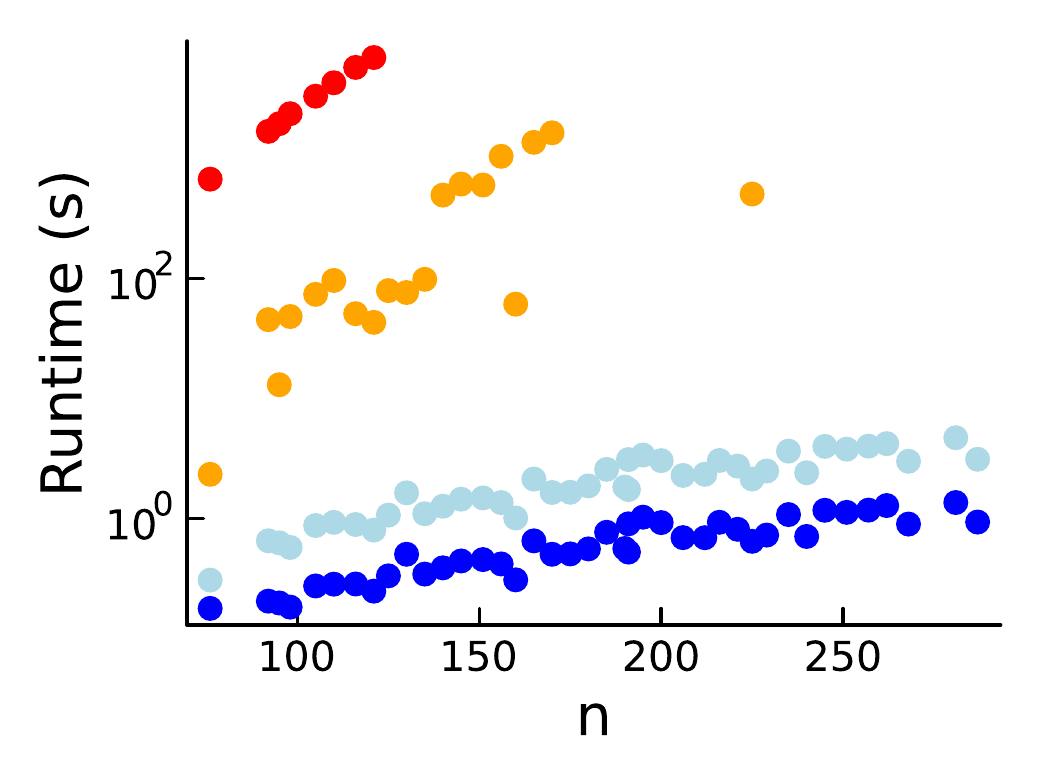}}
	\vspace{-1.0\baselineskip}
	\caption{Results for three hypergraph expansion approximation algorithms on a range of small hypergraphs. Previous approximation algorithms (LP and SDP) cannot scale to larger hypergraphs and often fail even for these small hypergraphs. HCM-1 uses $10 \log_2 n$ rounds of cut-matching, and HCM-2 uses $30 \log_2 n$ rounds.}
	\label{fig:small}
	\vspace{-.95\baselineskip} 
\end{figure}
\begin{table}[t]
	\caption{
		Approximation factor (obtained via Theorem~\ref{thm:bounds}) and runtime for HCM (run for $5 \log_2 n$ iterations) on four hypergraphs, averaged over 5 trials ($\pm$ standard deviation).
	}
	\vspace{-.5\baselineskip} 
	\label{tab:larger}
	\centering
	\scalebox{0.95}{
		\begin{tabular}{lccccc}
			\toprule
			&  $n$ &  $m$ & avg $|e|$ &  Approx. &  Run. (s) \\
			\midrule
		\emph{Amazon9} & 13138 &  31502 &  8.1 & 2.78 {\small $\pm 0.012$} & 254.8 {\small $\pm 10.4$}\\
		\emph{Mathoverfl}&  73851 &  5446 &  24.2 & 3.08 {\small $\pm 0.014$} & 643.8 {\small $\pm 31.1$}\\
		\emph{Tripadvisor}&  8929 &  130568 &  4.1 & 2.71 {\small $\pm 0.025$} & 699.3 {\small $\pm 29.8$}\\
		\emph{Trivago}&  172738 &  233202 &  3.1 & 2.78 {\small $\pm 0.038$} & 2372.6 {\small $\pm 50.8$}\\
			\bottomrule	
		\end{tabular}
	}
	
	\vspace{-.5\baselineskip} 
\end{table} 
\begin{figure}[t]
	\centering
	\subfigure[Approximation\label{fig:app} ]
	{\includegraphics[width=.495\linewidth]{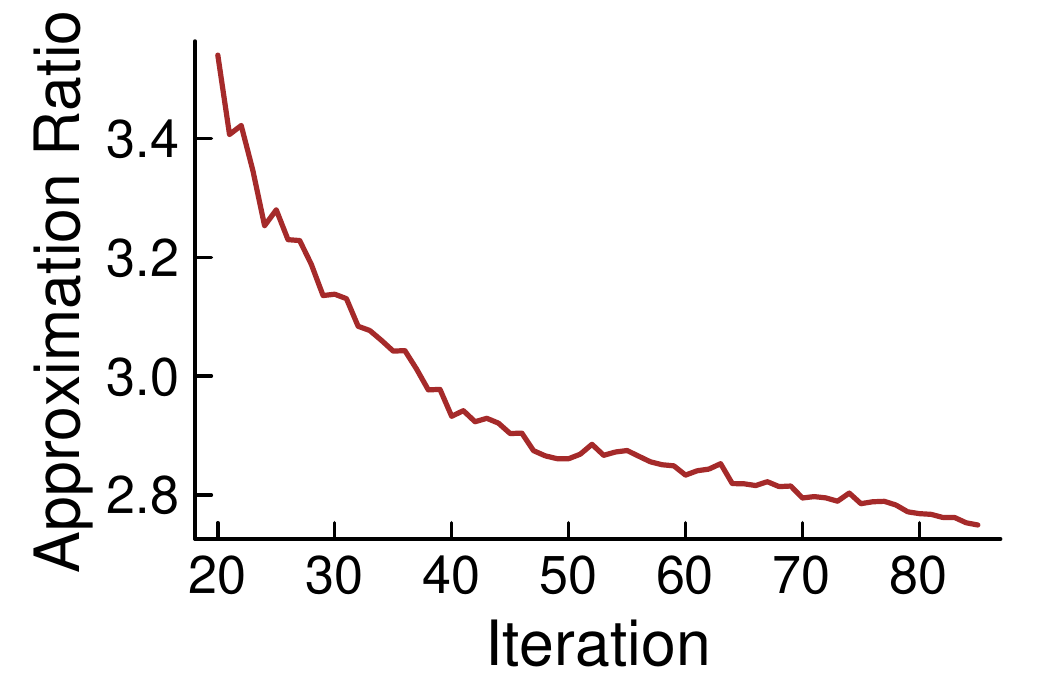}}
	\subfigure[Lower bound\label{fig:lb} ]
	{\includegraphics[width=.495\linewidth]{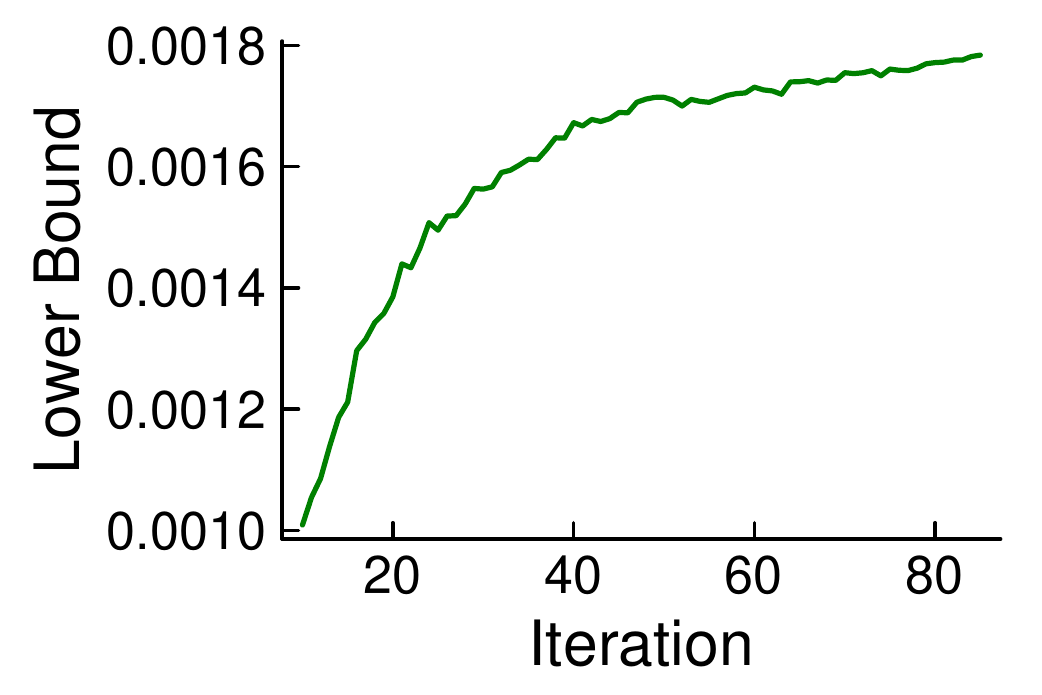}}
	\vspace{-0.5\baselineskip} 
	\caption{The approximation factor and lower bound computed by HCM at each step on the \emph{Trivago} hypergraph.}
	\label{fig:larger}
	\vspace{-0.5\baselineskip} 
\end{figure}

\begin{figure*}[th]
	\centering
	\subfigure[Trivago-Australia Approx Ratios \newline  $n = 1854$, $m = 4328$, avg $|e| = 2.9$ \label{fig:t1} ]
	{\includegraphics[width=.24\linewidth]{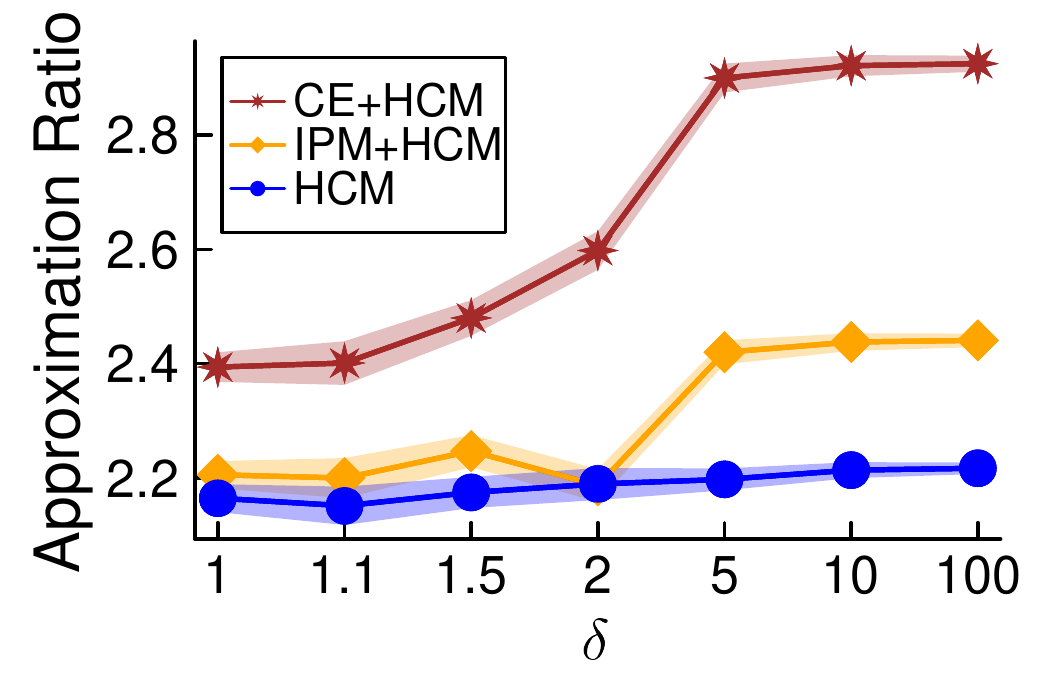}}
	\subfigure[Trivago-UK Approx Ratios \newline $n = 3293$, $m = 9988$, avg $|e| = 3.0$ \label{fig:t2} ]
	{\includegraphics[width=.24\linewidth]{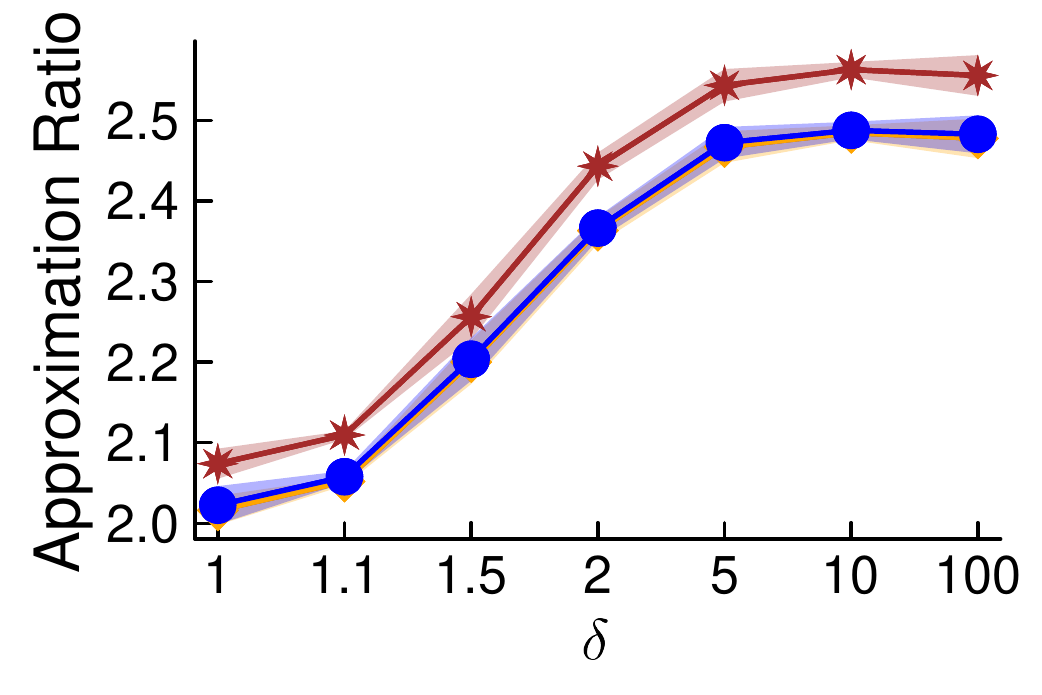}}
	\subfigure[Trivago-Germany Approx Ratios \newline $n = 2869$, $m = 7405$, avg $|e| = 3.0$ \label{fig:t3} ]
	{\includegraphics[width=.24\linewidth]{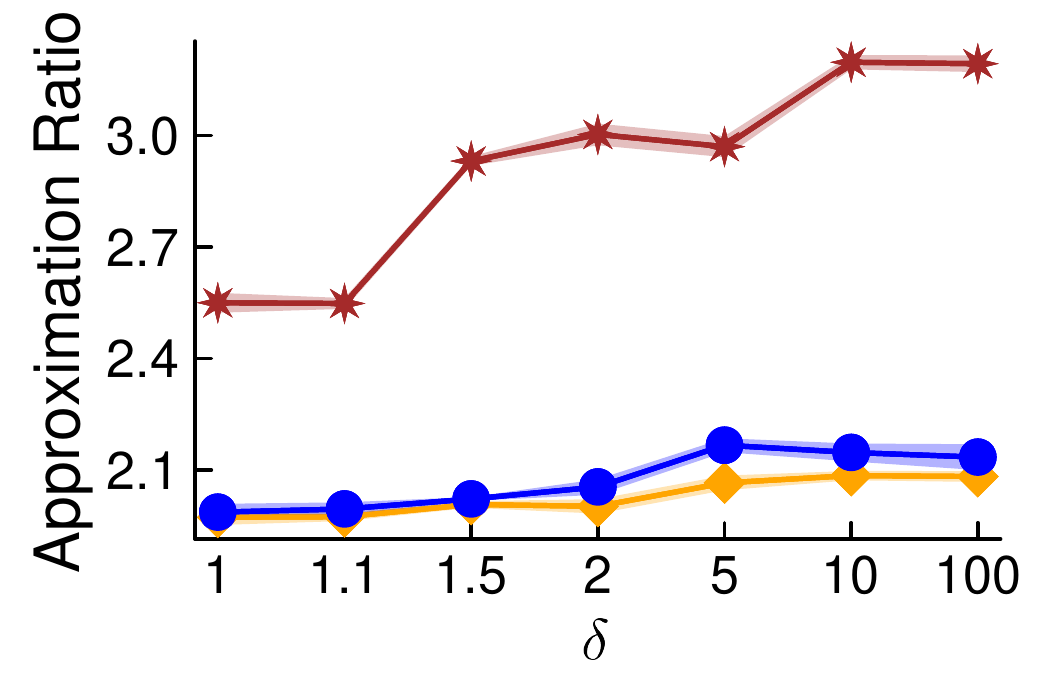}}
	\subfigure[Trivago-Japan Approx Ratios \newline $n = 5598$, $m = 17509$, avg $|e| = 3.0$ \label{fig:t4} ]
	{\includegraphics[width=.24\linewidth]{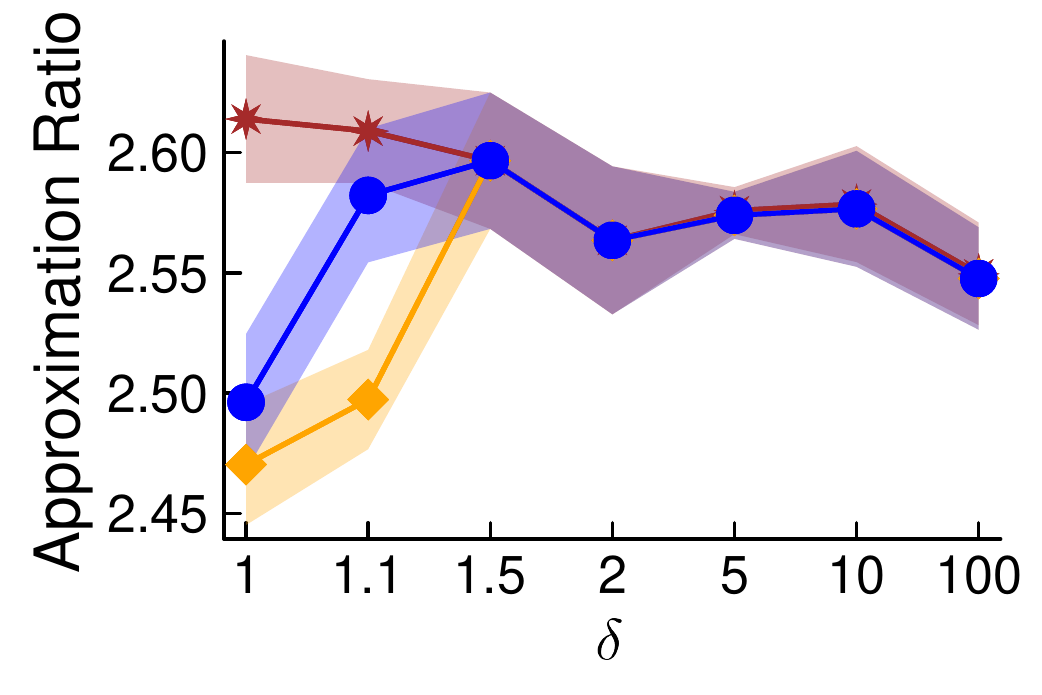}}
	\subfigure[Trivago-Australia Runtimes\label{fig:r1} ]
	{\includegraphics[width=.24\linewidth]{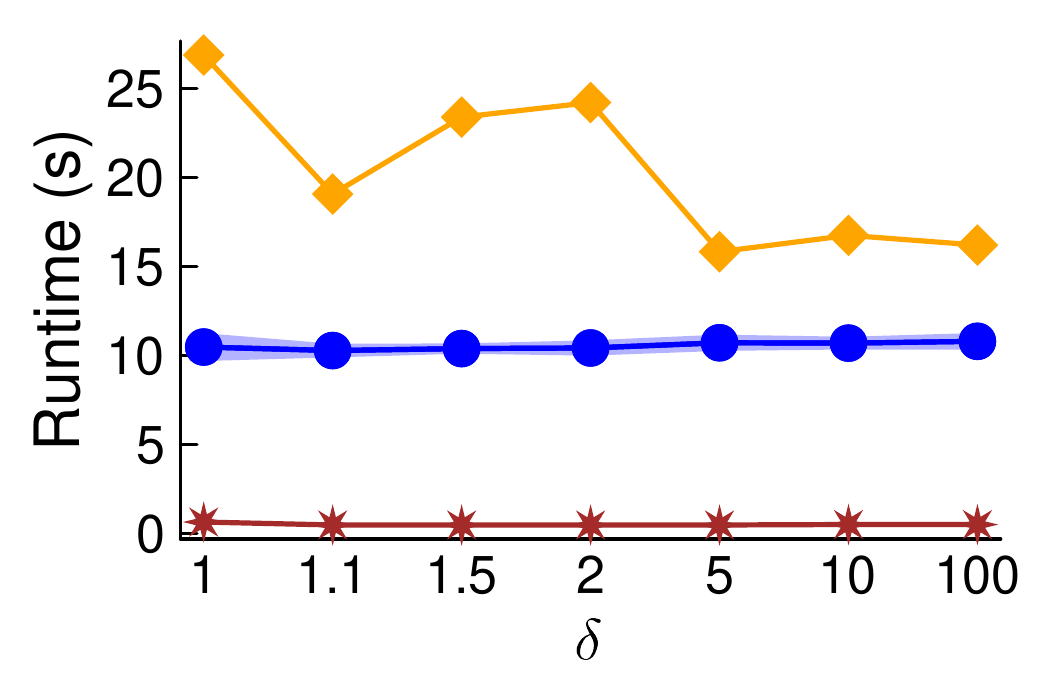}}
	\subfigure[Trivago-UK Runtimes\label{fig:r2} ]
	{\includegraphics[width=.24\linewidth]{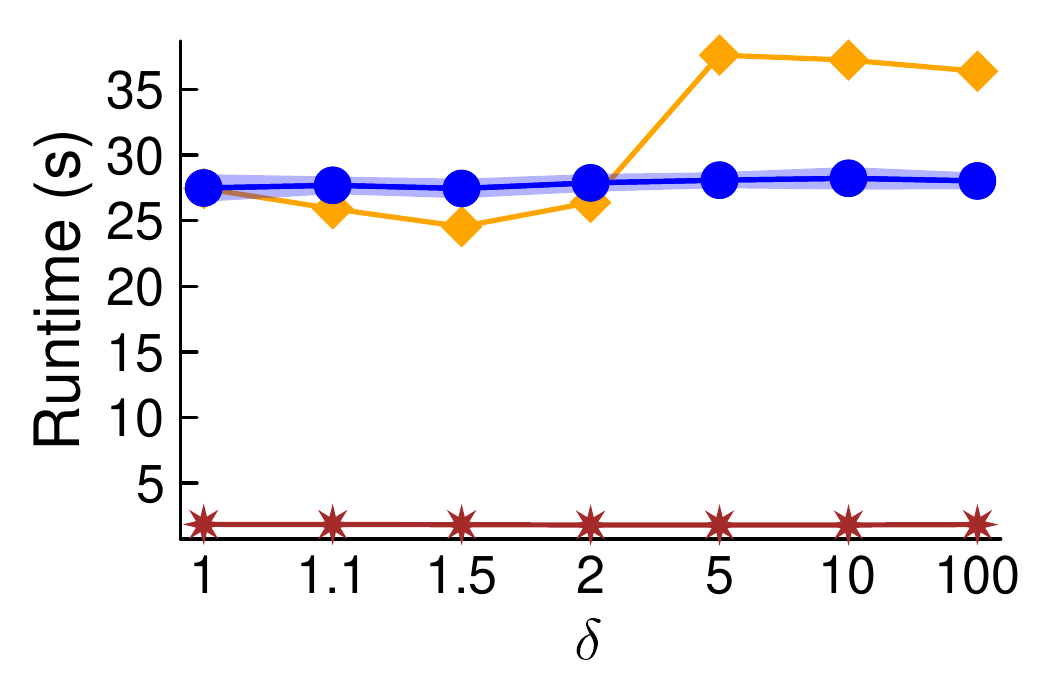}}
	\subfigure[Trivago-Germany Runtimes\label{fig:r3} ]
	{\includegraphics[width=.24\linewidth]{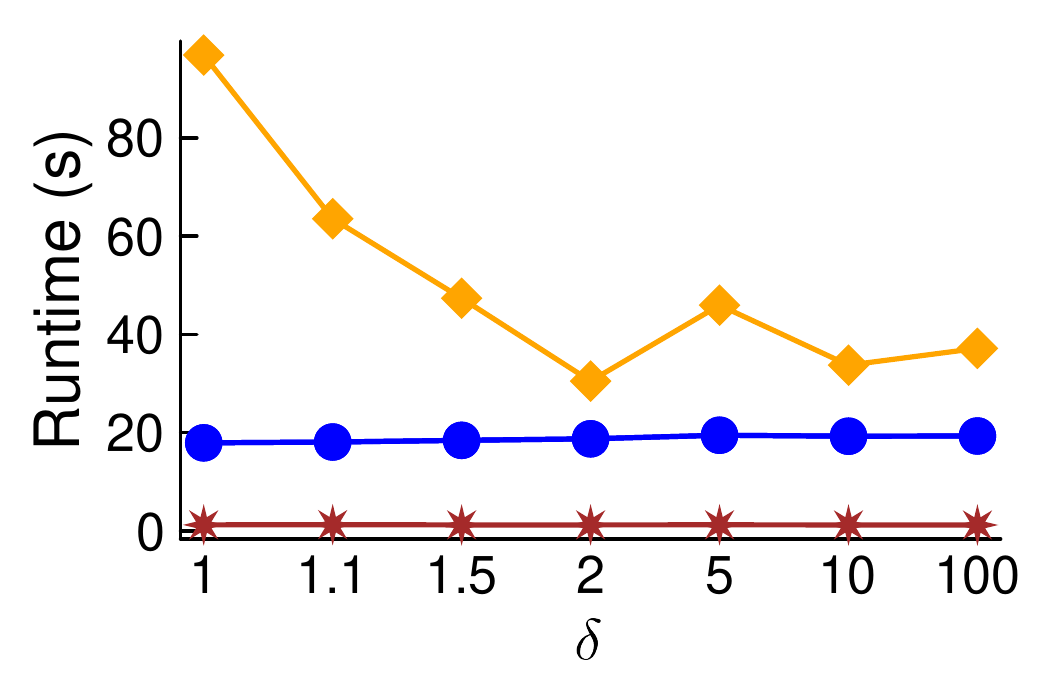}}
	\subfigure[Trivago-Japan Runtimes\label{fig:r4} ]
	{\includegraphics[width=.24\linewidth]{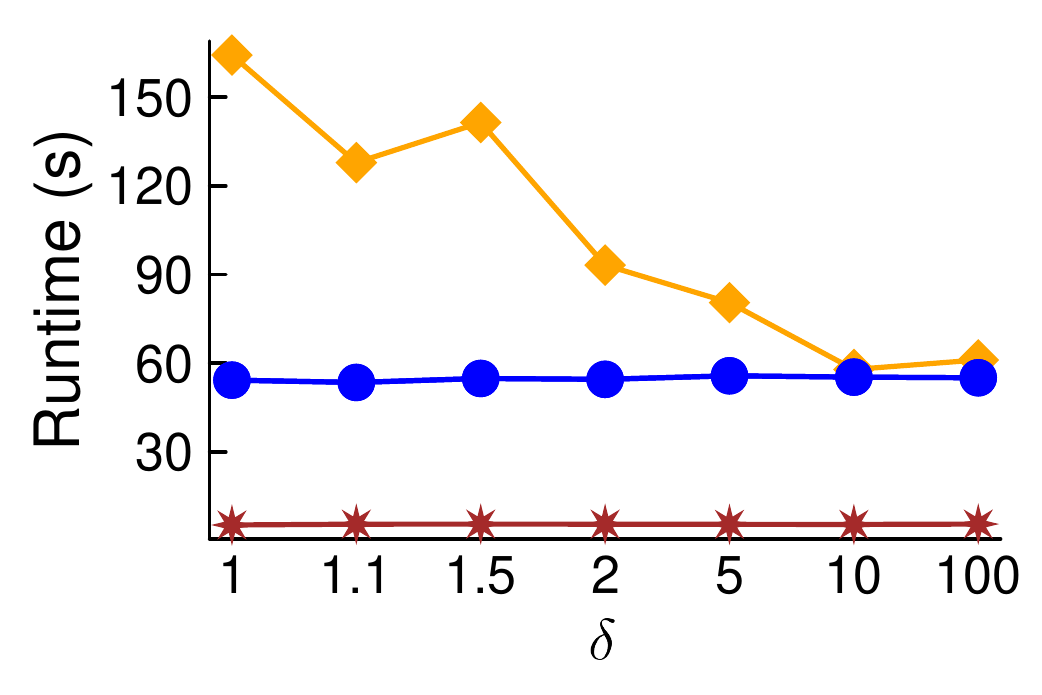}}
	\vspace{-0.5\baselineskip} 
	\caption{Top row: approximations obtained by comparing the best conductance set found by each method against the conductance lower bound computed by HCM. Bottom row: runtimes in seconds. The runtimes for IPM and CE do not include the time it takes to compute the HCM lower bound. We plot mean over 5 runs of HCM; shaded region indicates standard deviation.}
	\label{fig:trivago}
	\vspace{-0.5\baselineskip} 
\end{figure*}

Our method is orders of magnitude more scalable than previous theoretical approximation algorithms for hypergraph ratio cuts and also applies to a much broader class of problems. We compare against the $O(\sqrt{\log n})$ approximation (SDP) based on semidefinite programming~\cite{louis2016approximation}  and the $O(\log n)$-approximation algorithm (LP) based on rounding a linear programming relaxation~\cite{kapralov2020towards}. We focus on the expansion objective ($\pi(v) = 1$ for all $v$) with the standard hypergraph cut function, since these methods do not apply to generalized hypergraph cuts. 

We select a range of small \emph{Mathoverflow}  and \emph{Trivago} hypergraphs of increasing size to use as benchmarks. These correspond to subhypergraphs of the \emph{Mathoverflow}  and \emph{Trivago} hypergraphs in Table~\ref{tab:larger}, each obtained by selecting nodes with a specific metadata label. We chose a set of 23 \emph{Mathoverflow} hypergraphs, each corresponding to a set of posts (nodes) with a certain topic tag (e.g., \emph{graph-theory}). These hypergraphs have between $n = 80$ and $n = 209$ nodes. The number of hyperedges $m$ tends to be between $n/3$ and $n/2$ for these hypergraphs, and the average hyperedge size ranges from 3 to 7. We also consider 41 \emph{Trivago} hypergraphs (each corresponding to a different city tag, e.g., \emph{Austin, USA}), most of which have between 1-3 times as many hyperedges as nodes, and all of which have average hyperedge sizes between $2$ and $4$. 

Figure~\ref{fig:small} reports results for LP, SDP, and our method HCM, all of which compute an explicit lower bound on the optimal expansion which can be used to compute an a posteriori approximation guarantee. We run HCM with two different iteration numbers to illustrate the tradeoff between runtime and approximation guarantee. When SDP and LP converge, they produce very good lower bounds for hypergraph expansion and can be rounded to produce very good a posteriori approximation guarantees. The issue is that these methods do not scale well even to very small instances. We are able to obtain results for all \emph{Mathoverflow} datasets using LP, but the method times out ($>$ 30 minutes) on all but 19 of the 41 \emph{Trivago} hypergraphs. When we remove the time limit, the method sometimes finds good solutions at the expense of much larger runtimes (e.g., over 5 hours for a hypergraph with 229 nodes), but in some cases the underlying LP solver returns without having converged. The SDP method is even less scalable and would not converge for almost any of the small hypergraphs if we set a 30-minute time limit. It took over an hour to run this method for a \emph{Trivago} hypergraph with 121 nodes, and 4.5 hours for a 141-node \emph{Mathoverflow} hypergraph. Given scalability issues, these are the largest hypergraphs of each type that we attempted to address using this method. Overall, even though LP and SDP produce very good lower bounds when they succeed in solving the underlying convex relaxation, they are not practical. Meanwhile, HCM obtains high quality solutions extremely quickly, typically within a matter of a few seconds.

It is worth noting that the difference in approximation ratios in Figures~\ref{fig:1} and~\ref{fig:2} is due almost entirely to the difference in lower bounds computed by each method. These methods return almost identical expansion values, and in many cases find the same output set. This indicates that finding sets with small expansion is not the primary challenge. Rather, the main point of comparison is how close each method's lower bound is to the optimal solution.

\subsection{Expansion on larger hypergraphs}
Table~\ref{tab:larger} lists runtimes and approximations obtained by HCM on four hypergraphs that are orders of magnitude larger than any hypergraph that LP and SDP can handle. Our method is able to quickly obtain solutions within a small factor of the optimal hypergraph expansion solution. Figure~\ref{fig:larger} provides a closer look at how the lower bound and approximation factor maintained by HCM improves at each iteration, specifically for the \emph{Trivago} hypergraph. Plots for the other three datasets look very similar. The lower bound at iteration $t$ (Figure~\ref{fig:lb}) is computed using Theorem~\ref{thm:bounds}, and the approximation factor at iteration $t$ (Figure~\ref{fig:app}) is obtained by dividing the best conductance set found in the first $t$ iterations by this lower bound. The best conductance set returned by HCM is usually found within the first few iterations. Therefore, the main thing accomplished by running the algorithm for many iterations is improving the lower bound certificate. Figure~\ref{fig:larger} shows that the algorithm is still making progress after $5 \log_2 n$ iterations, indicating that a stronger lower bound certificate and approximation guarantee could be obtained by running the algorithm for longer.

\begin{figure*}[t]
	\centering
	\subfigure[Newsgroups Approx Ratios 
	\label{fig:q1} ]
	{\includegraphics[width=.24\linewidth]{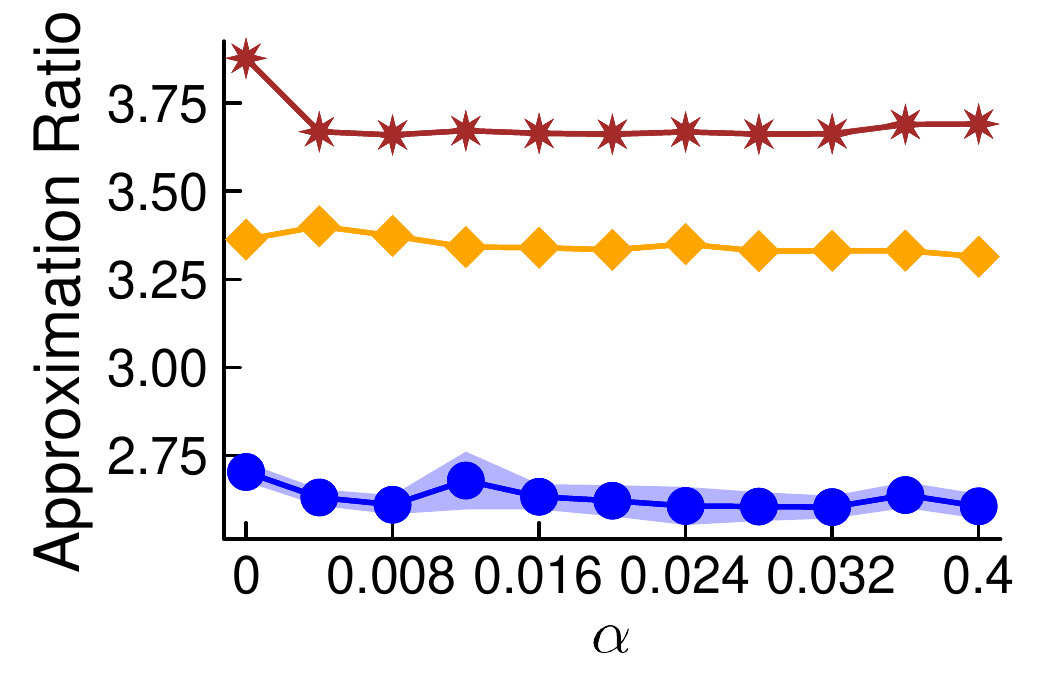}}
	\subfigure[Mushrooms Approx Ratios 
	 \label{fig:q2} ]
	{\includegraphics[width=.24\linewidth]{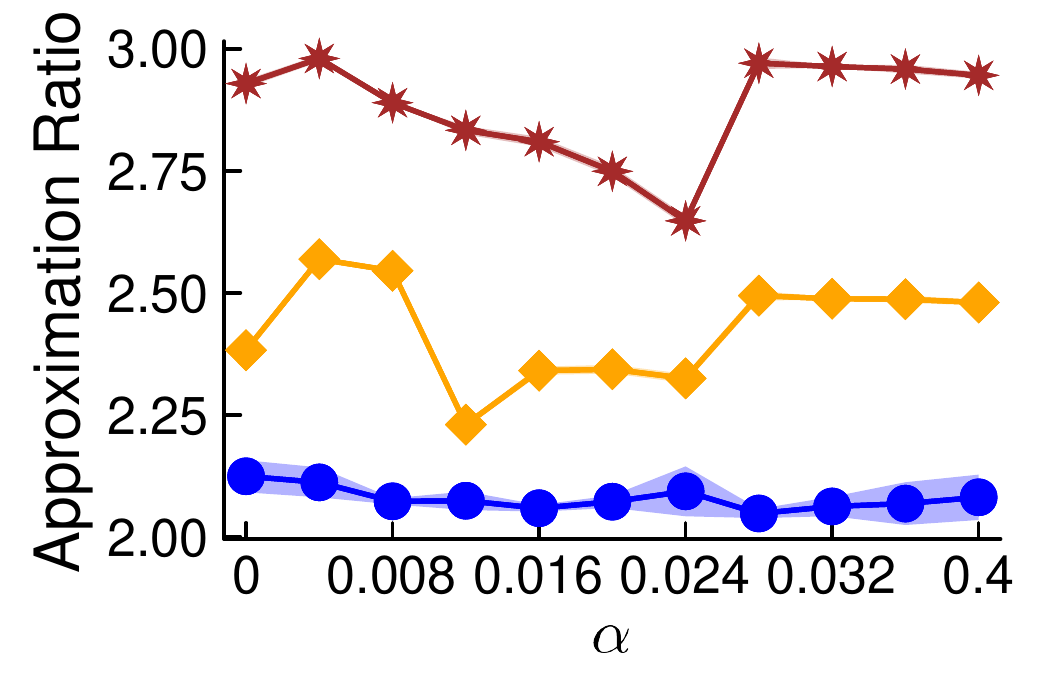}}
	\subfigure[Covertype45 Approx Ratios 
	 \label{fig:q3} ]
	{\includegraphics[width=.24\linewidth]{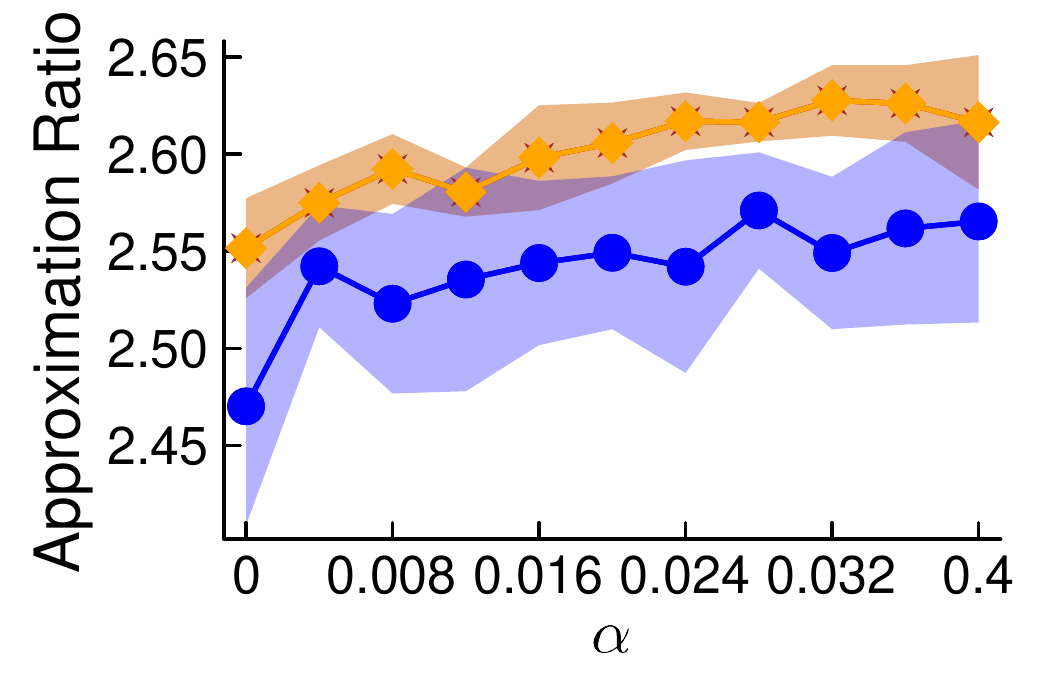}}
	\subfigure[Covertype67 Approx Ratios
\label{fig:q4} ]
	{\includegraphics[width=.24\linewidth]{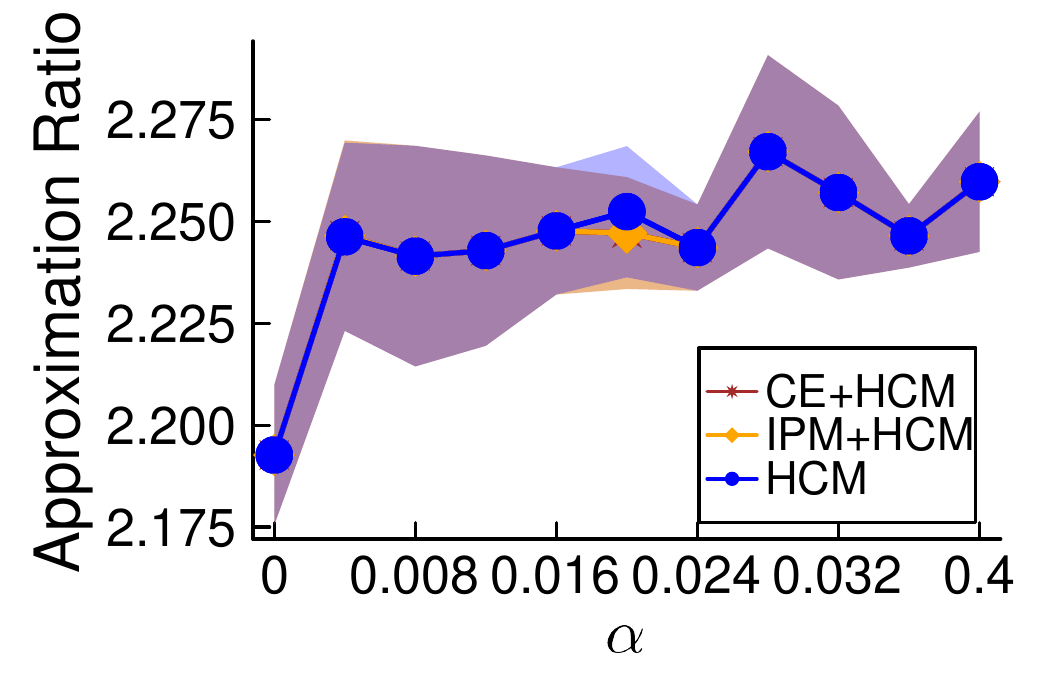}}
	\subfigure[Newsgroups Runtimes\label{fig:p1} ]
	{\includegraphics[width=.24\linewidth]{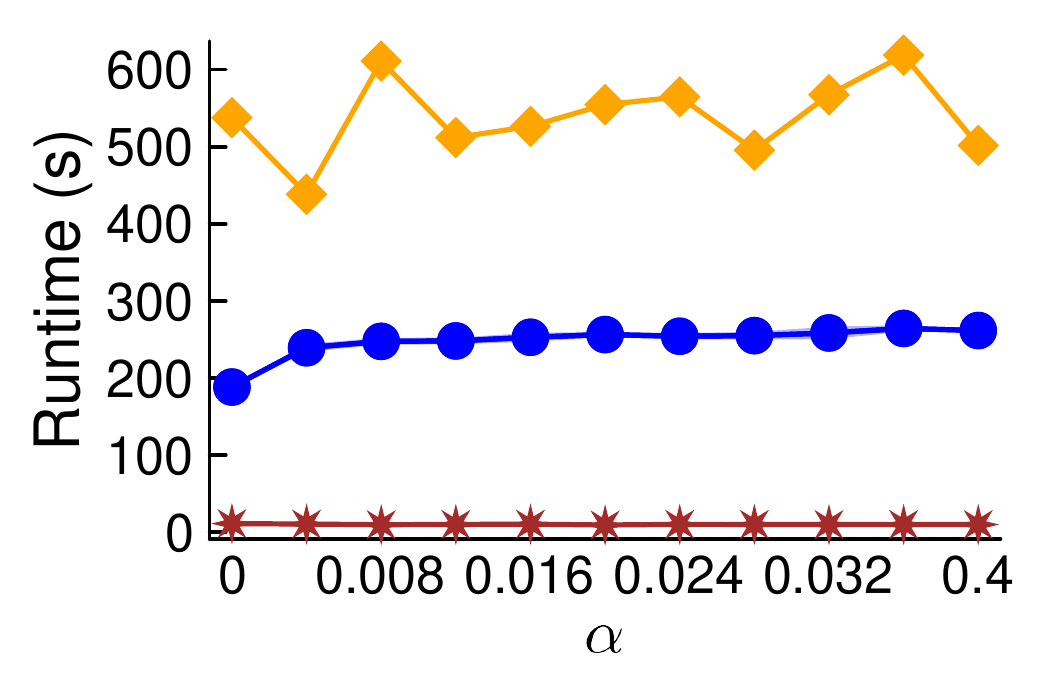}}
	\subfigure[Mushrooms Runtimes\label{fig:p2} ]
	{\includegraphics[width=.24\linewidth]{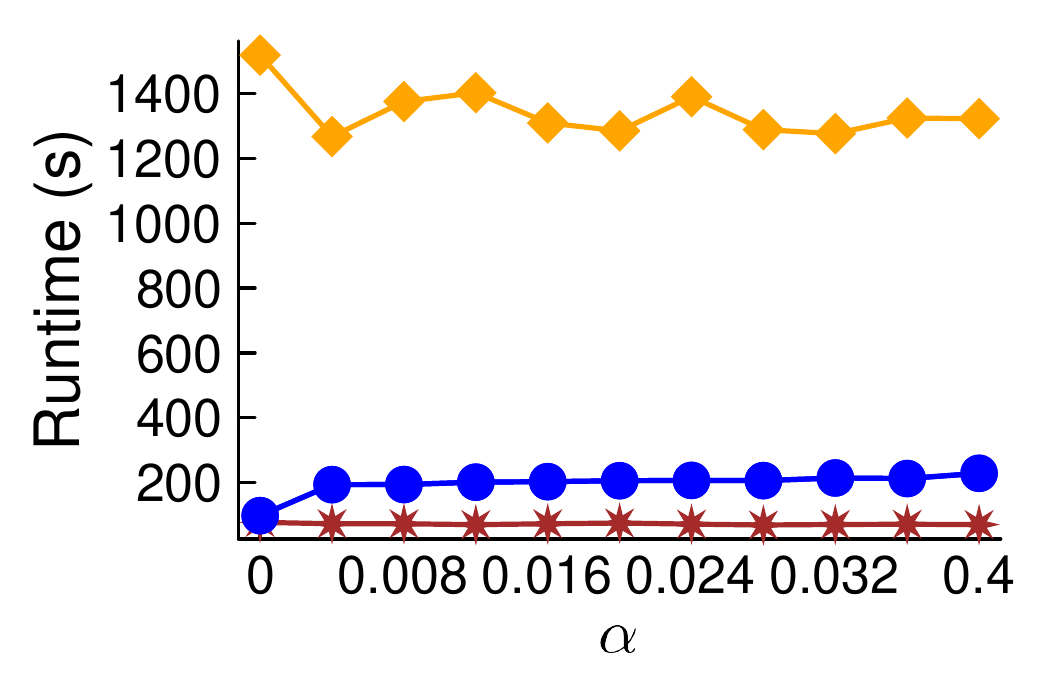}}
	\subfigure[Covertype45 Runtimes\label{fig:p3} ]
	{\includegraphics[width=.24\linewidth]{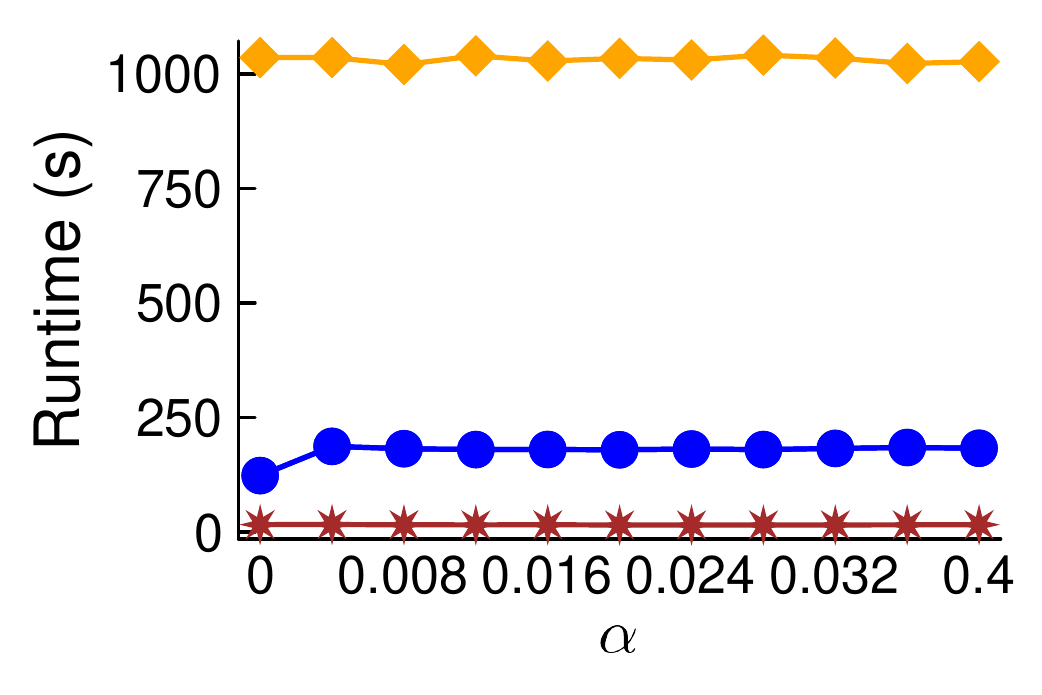}}
	\subfigure[Covertype67 Runtimes\label{fig:p4} ]
	{\includegraphics[width=.24\linewidth]{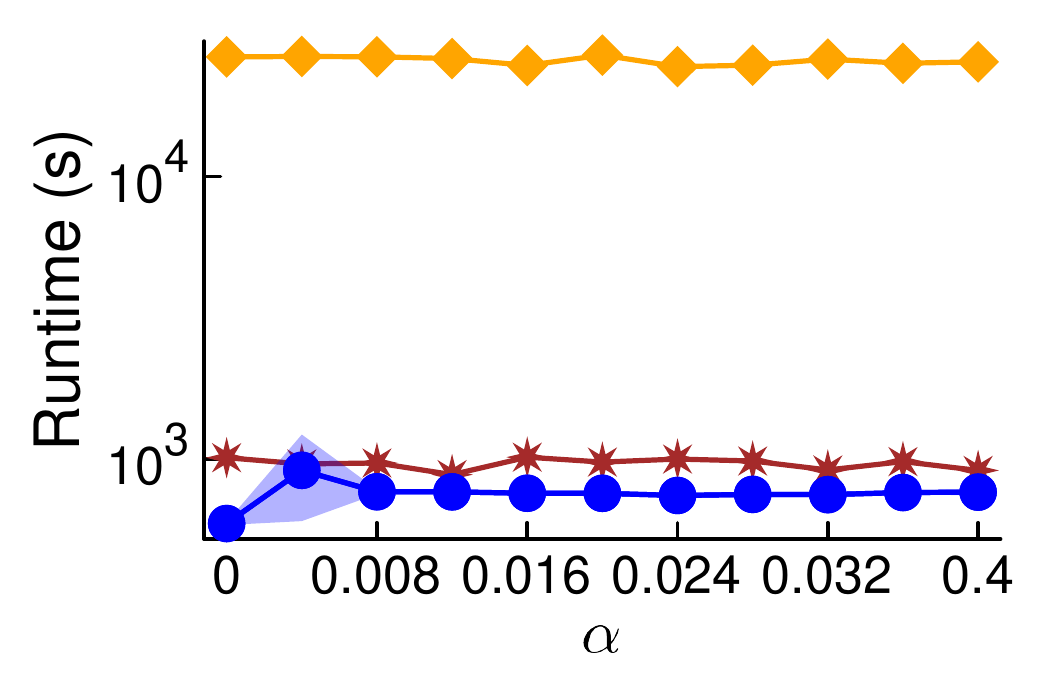}}
	\vspace{-0.5\baselineskip} 
	\caption{Approximation factors and runtimes for running three hypergraph ratio cut algorithms on four benchmark hypergraphs. The runtimes for IPM and CE do not include the time it takes to compute the HCM lower bound. We plot mean over 5 runs of HCM; shaded region indicates standard deviation.}
	\label{fig:benchmark}
	\vspace{-0.5\baselineskip} 
\end{figure*}

\subsection{Trivago hypergraphs and generalized cuts}
\label{sec:trivago}
One distinct advantage of HCM is that it computes explicit lower bounds on hypergraph $\pi$-expansion (via Theorem~\ref{thm:bounds}) that can be used to check a posteriori approximation guarantees in practice, even for other hypergraph ratio cut algorithms that do not come with approximation guarantees of their own. 
To illustrate the power of this feature, we compare HCM against the inverse power method for submodular hypergraphs (IPM) ~\cite{panli_submodular} and the clique expansion method for inhomogeneous hypergraphs (CE)~\cite{panli2017inhomogeneous} in minimizing ratio cut objectives in \emph{Trivago} hypergraphs. We also show how combining algorithms can in some cases lead to the best results.

IPM and CE constitute the current state-of-the-art in minimizing ratio cuts in hypergraphs with generalized cut functions. IPM is a generalization of a previous method that applied only to the standard hypergraph cut~\cite{hein2013total}. CE can be viewed as a generalization of previous clique expansion techniques~\cite{Zhou2006learning,BensonGleichLeskovec2016}, which only applied to more restrictive hypergraph cut functions. Both methods are more practical than LP and SDP, in addition to applying to generalized hypergraph cuts, but they have weaker theoretical guarantees. IPM is a heuristic with no approximation guarantees, while the approximation guarantee for CE scales poorly with the maximum hyperedge size, and can be $O(n)$ in the worst case even for graphs. In our experiments we used the C++ implementation of IPM previously provided by Li and Milenkovic~\cite{panli_submodular}. The performance of IPM depends heavily on which vector it is given as a warm start. We tried several options and found that the best results were obtained by using the eigenvector computed by CE as a warm start. We provide our own new Julia implementation of CE, which is significantly faster than the original proof-of-concept implementation. The appendix includes additional implementation details.

Figure~\ref{fig:trivago} displays runtimes and a posteriori approximation guarantees for HCM, IPM, and CE on four Trivago hypergraphs, corresponding to vacation rentals in Australia, United Kingdom, Germany, and Japan. We specifically consider the 2-core of each hypergraph, as these 2-cores have more interesting and varied cut structure and therefore serve as better case studies for comparing algorithms for generalized hypergraph ratio cuts. For these experiments we are minimizing the conductance objective ($\pi(v) = d_v$), and we use a generalized hypergraph cut function that applies a $\delta$-linear splitting function $\textbf{w}_e(S) = \min \{|S \cap e|, |\bar{S} \cap e|, \delta \}$ at each hyperedge. We run experiments for a range of different choices of $\delta$. We choose this splitting function as previous research has shown that the choice of $\delta$ can significantly affect the size and structure of the output set and influence performance in downstream clustering applications~\cite{veldt2020hyperlocal,veldt2021approximate,liu2021strongly}. The four \emph{Trivago} hypergraphs exhibit a range of different cuts that are found by varying $\delta$, and therefore provide a good case study for how well these algorithms find different types of ratio cuts for generalized splitting functions. 


In terms of finding small conductance sets, HCM trades off in performance with IPM, though they return very similar results. However, HCM is noticeably faster, and it is additionally computing lower bounds that allow it to certify how close its solution is to optimality. 
Our implementation of CE is faster than HCM,
but performs worse in terms of approximation ratio. In Figure~\ref{fig:trivago}, we use the HCM lower bound to obtain a posteriori approximations for IPM and CE. These methods are unable to provide such strong approximation guarantees on their own. Hence, even in cases where IPM finds better conductance sets, the lower bounds computed by HCM provide new information that can be used to prove an approximation guarantee. Since the performance of IPM also relies on using CE as a warm start, we see that often it is a combination of all three algorithms that leads to the best results. 

\subsection{Other benchmark hypergraphs}
\label{sec:benchmark}
We run similar experiments on a set of hypergraphs that have been frequently used as benchmarks for hypergraph learning and clustering~\cite{panli_submodular,Zhou2006learning,hein2013total,zhang2017re}. Statistics for these hypergraphs are shown in Table~\ref{tab:benchmark}.
\begin{table}
	\caption{Statistics for hypergraphs in Figure~\ref{fig:benchmark}.}
	\label{tab:benchmark}
		\begin{tabular}{lcccc}
	\toprule
Hypergraph	&  $n$ &  $m$ & avg $|e|$ & $\sum_e |e|$ \\
	\midrule
\emph{Newsgroups} &  16242 & 100 & 654.5 & 65451 \\ 
\emph{Mushrooms} &  8124 & 112 & 1523.2 & 170604 \\ 
\emph{Covertype45} &  12240 & 128 & 1147.5 & 146880 \\ 
\emph{Covertype67} &  37877 & 136 & 3342.0 & 454514 \\ 
	\bottomrule	
\end{tabular}
	\vspace{-0.5\baselineskip} 
\end{table}
Figure~\ref{fig:benchmark} displays runtimes and approximation factors obtained using the following generalized splitting function
\begin{equation}
	\label{eq:limilen}
	\textbf{w}_e(A) = \frac{1}{2} + \frac12 \min \left\{1, \frac{|S|}{\ceil{\alpha |e|} }, \frac{|e\backslash S|}{\ceil{\alpha |e| }}  \right\}.
\end{equation}
This splitting function was introduced and considered by Li and Milenokovic~\cite{panli_submodular} as a way to better handle categorization errors and outliers. It assigns a smaller penalty to splitting hyperedges in such a way that most of the nodes from a hyperedge are on the same side of the cut. On \emph{Covertype67}, all three methods find essentially the same results, but on the other three hypergraphs HCM shows considerable improvement over IPM and CE in terms of approximation factor. It is also consistently faster than IPM, and on \emph{Covertype67} is even slightly faster than CE.

	\section{Discussion}
	We have presented the first algorithm for minimizing hypergraph ratio cuts that simultaneously (1) has an approximation guarantee that is $O(\log n)$, (2) applies to generalized hypergraph cut functions, and (3) comes with a practical implementation. In practice this algorithm is very successful at finding ratio cuts that are within a small factor (around 2-3) of the lower bound on the optimal solution that the algorithm produces. One open question is to explore how to choose the best generalized hypergraph cut functions to use in different applications of interest. Another open direction is finding improved approximation algorithms for hypergraph ratio cuts that apply to general submodular splitting functions, even those that are not \emph{cardinality-based}. 
	
	\bibliographystyle{plain}
	\bibliography{arxiv-full-updated}
	
	\appendix
	
\section{Proof of Lemma~\ref{lem:match}}
\begin{figure*}[t]
	\centering
	\includegraphics[width=.9\linewidth]{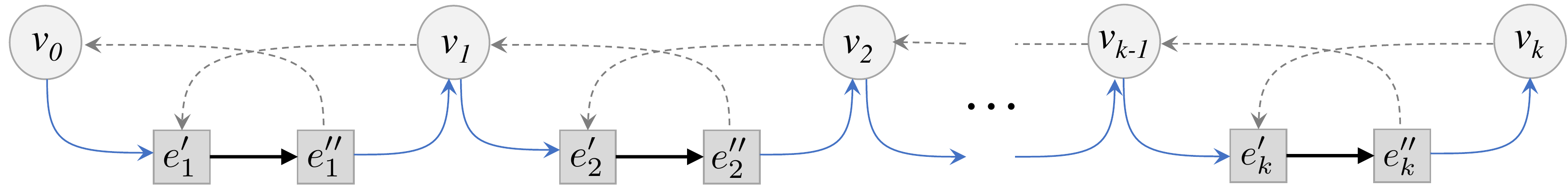}
	\caption{Every directed path in the reduced graph $G(\mathcal{H})$ between two nodes in $V$ alternates between traveling through nodes in $V$ and traveling through pairs of auxiliary nodes $\{e_i', e_i''\}$ from different CB-gadgets. If the original flow function sends flow from $v_0$ to $v_k$, replacing the flow along blue solid edges with flow along the dashed gray edges will reverse the flow direction. It is possible to prove that reversing flow paths in this way will not increase the congestion of the multicommodity flow.}
	\label{fig:path}
\end{figure*}

\noindent \textbf{Proof.}
	Let $f_R = \sum_{(u,v) \in R \times \bar{R}} f_R^{(uv)}$ be the directed multicommodity flow in $G(\mathcal{H})$ obtained from the maximum $s$-$t$ flow $f$ in $G(\mathcal{H},R,\alpha)$. In other words, for each pair $(u,v)$ where $u \in R$ and $v \in \bar{R}$, $f_R^{(uv)}$  is a $u$-$v$ flow function that sends $|f_R^{(uv)}| = \mM_R(u,v)$ flow from $u$ to $v$. Each $u \in R$ will be a \emph{deficit} node, and each $v \in \bar{R}$ will be an \emph{excess} node for this multicommodity flow function.
	
	We can assume without loss of generality that $f$ is cycle-free, which implies that $f_R$ is also cycle-free.
	Let $S \subseteq V$ be an arbitrary set; our goal is to edit $f_R$ to turn it into a flow function $f_S =  \sum_{(u,v) \in S \times \bar{S}} f^{(uv)}_S$ that routes ${\mM_R}(u,v)$ flow from $u$ to $v$ whenever $u \in S$, $v \in \bar{S}$, and $\{u,v\}$ is an edge in $M_R$. In particular, this means we need to find a way to reverse the flow direction for any pair $(u,v) \in (S\cap \bar{R}) \times (\bar{S} \cap R)$, since for this type of pair $f_R$ sends flow from $v$ to $u$ rather than from $u$ to $v$.
	
	We will first consider what it means to reverse flow on a single directed flow path. Consider an arbitrary edge $\{u,v\}$ in $M_R$ such that $u \in S\cap\bar{R}$ and $v \in \bar{S} \cap R$. The original multicommodity flow function $f_R$ includes a flow function $f_R^{(vu)}$ that sends $\mM_R(u,v)$ units of flow from $v$ to $u$. By Lemma~\ref{lem:flowdeomp}, this can be decomposed as $f_R^{(vu)} = \sum_{j = 1}^r f_j^{(vu)}$, where each $f_j$ is a directed flow path from $v$ to $u$.  By the construction of $G(\mathcal{H})$, this flow path will travel through nodes and edges in a sequence of CB-gadgets corresponding to a sequence of hyperedges $\{e_1, e_2, \hdots, e_k\} \subseteq \mathcal{E}$ in the hypergraph $\mathcal{H} = (V, \mathcal{E})$. For the $i$th hyperedge $e_i$ in this sequence and the corresponding $i$th CB-gadget, let $e_i'$ be the first auxiliary node in the gadget for this hyperedge, and $e_i''$ be the second. Recall that there is an edge $(e_i', e_i'')$, and for each node $x \in e_i$ there is a directed edge $(x, e_i')$ and a directed edge $(e_i'',x)$ (see Figures~\ref{fig:gadget} and~\ref{fig:path}). A simple directed flow path $f_j^{(vu)}$ from $v = v_0$ to $u = v_{k+1}$ is therefore is a sequence of the form:
	\begin{align}
		\label{eq:onedir}
		\begin{array}{l}
			v = v_0 \rightarrow (e_1' \rightarrow e_1'')  \rightarrow v_1 \rightarrow (e_2' \rightarrow e_2'')  \rightarrow v_2 \rightarrow \cdots  \\
			\hspace{.3cm} \cdots \rightarrow v_{k-1} \rightarrow (e_k' \rightarrow e_k'')  \rightarrow v_{k} = u,
		\end{array}
	\end{align}
	where $v_t \in V$ for each $t \in \{0,1, \hdots, k\}$, and where the same amount of flow is sent along each edge. By the construction of every CB-gadget, there is a path in the opposite direction with the same exact set of edge weights:
	\begin{align}
		\label{eq:twodir}
		\begin{array}{l}
			u = v_k \rightarrow (e_k' \rightarrow e_k'')  \rightarrow v_{k-1} \rightarrow \cdots \\
			\hspace{.3cm} \cdots \rightarrow v_2 \rightarrow (e_2' \rightarrow e_2'')\rightarrow   v_1 \rightarrow  (e_1' \rightarrow e_1'') \rightarrow  v_0 = v. 
		\end{array}
	\end{align}
	Recall that $f_R$ is cycle-free. This means that no flow is sent along edges of the form $(v_i, e_i')$ for $i \in \{1,2, \hdots, k\}$, because $f_R$ contains a flow path $f_j^{(vu)}$ with positive flow on the edges $(e_i', e_i'')$ and $(e_i'', v_i)$. Therefore, we can replace the flow path $f_j^{(vu)}$ in~\eqref{eq:onedir} with the flow path $\hat{f}_j^{(vu)}$ defined on the edges in~\eqref{eq:twodir}, with the same flow value as $f_j^{(vu)}$ but traveling in the opposite direction. Figure~\ref{fig:path} illustrates this flow reversal process for a single flow path.
	
	We construct a new multicommodity flow function $f_S$ by simultaneously applying this flow reversal process to every flow path in $f_R$ that goes in the wrong direction. Formally, if $p_{vu}$ denotes the set of indices for simple directed flow paths from $v$ to $u$ in $f_R$ (obtained from a flow decomposition as in Lemma~\ref{lem:flowdeomp}), then $f_R = \sum_{(v,u) \in R\times \bar{R} } \sum_{j \in p_{vu}} f_j^{(vu)}$, and we define:
	\begin{equation*}
		f_S = f_R + \sum_{(v,u) \in (\bar{S}\cap R) \times (S \cap \bar{R}) } \sum_{j \in p_{vu}} \hat{f}_j^{(vu)} - f_j^{(vu)}.
	\end{equation*}
	In other words, for every $(u,v) \in (S\cap \bar{R}) \times (\bar{S} \cap R)$ and every directed flow path $f_j^{(vu)}$ from $v$ to $u$ in $f_R$, replace $f_j^{(vu)}$ with $\hat{f}_j^{(vu)}$. 
	
	It remains to prove that $f_S$ has congestion at most $1/\alpha$. We prove this by considering different types of edges $(v, e')$, $(e'', v)$, and $(e', e'')$, where $v \in V$, and where $\{e', e''\}$ are the two auxiliary nodes from a CB-gadget for some hyperedge $e \in \mathcal{E}$. Observe first of all that this flow reversal procedure never changes the flow on edge $(e', e'')$, so the congestion remains the same on edges that go between auxiliary nodes. The other two edges $(v,e')$ or $(e'', v)$ have the same weight, and because $f_R$ is cycle free, at most one of these edges has a positive amount of flow in $f_R$. If $(v,e')$ has a positive amount of flow in $f_R$, then $(e'',v)$ has no flow in $f_R$. If the flow reversal process changes anything, it will take some of the flow through $(v,e')$ and transfer it to $(e'', v)$. In this case, the congestion on edge $(v,e')$ cannot be worse because we are only removing flow. Meanwhile, the edge $(e'',v)$ started with no flow and then received some of the flow from $(v,e')$. After the flow transfer process, the congestion on $(e'',v)$ will not exceed $1/\alpha$ since it has the same weight as edge $(v,e')$ and the congestion on $(v,e')$ was at most $1/\alpha$. We can provide an analogous argument in the case where $(e'',v)$ has a positive flow in $f_R$ but $(v,e')$ does not. Thus, this flow reversal process does not make the congestion worse.\\

\section{Experiments and Implementations}

This section of the appendix provides additional details regarding algorithm implementation and experimental results.

\subsection{Algorithms based on convex relaxations}
In Section~\ref{sec:approx} we compare our method against the previous SDP-based method~\cite{louis2016approximation} and a previous LP-based method~\cite{kapralov2020towards}. For these experiments we focus on the all-or-nothing hypergraph cut penalty
and the standard expansion objective ($\pi(v) = 1$ for every $v \in V$). All hypergraphs we consider are unweighted.

\paragraph{Implementation details for the LP method.} Kapralov et al.~\cite{kapralov2021towards} primarily focused on spectral sparsifiers for hypergraphs; details for the $O(\log n)$ approximation algorithm for hypergraph sparsest cut are included only in the full arXiv version of the paper~\cite{kapralov2020towards}. The algorithm is presented for the conductance objective ($\pi(v) = d_v$ for each $v \in V$) but can be easily adapted to expansion ($\pi(v) = 1$ for each $v \in V$) in order to compare more directly with the SDP approach~\cite{louis2016approximation}. This adaptation requires solving the following LP relaxation over a hypergraph $\mathcal{H} = (V,\mathcal{E})$:
\begin{equation}\label{lp}
	\begin{array}{rll}
		\min & \sum_{e \in \mathcal{E}} z_{e} & \\
		\text{subject to } &  \sum_{u,v \in V} x_{uv} = n & \\
		& z_e \geq x_{uv} & \forall e \in \mathcal{E}; u,v \in e \\
		& x_{uv} \leq x_{uv} + x_{vw} & \forall u,v,w \in V \\
		& x_{uv} = x_{vu} \geq 0 &\forall u,v \in V
	\end{array}		
\end{equation}
One can check that the solution to this LP is a lower bound for the sparsest cut objective:
\begin{equation*}
	\min_{S \subseteq V} \frac{\cut_\mathcal{H}(S)}{|S|} + \frac{\cut_\mathcal{H}(S)}{|\bar{S}|}.
\end{equation*}
Dividing the optimal LP solution value by 2 provides a lower bound for hypergraph expansion:
\begin{equation*}
	\min_{S \subseteq V; |S| \leq |V|/2} \frac{\cut_\mathcal{H}(S)}{|S|}.
\end{equation*}
Recall that this applies only when $\cut_\mathcal{H}$ is the all-or-nothing hypergraph cut penalty. The variables $\{x_{uv}\}$ obtained from solving the LP define a pseudo-metric on the nodes of the hypergraph. These can be rounded into a cut with $O(\log n)$ distortion using an embedding result of Bourgain~\cite{bourgain1985lipschitz}. We implement the algorithm in Julia using Gurobi optimization software to solve the LP relaxation. Chapter 11 of Trevisan's lecture notes~\cite{trevisan2017lecture} provides a clear and convenient overview of Bourgain's embedding theorem, which we follow in our implementation of the rounding method.

\paragraph{Implementation details for the SDP method.}
We solve the SDP relaxation for hypergraph expansion~\cite{louis2016approximation} with Mosek optimization software, using CVX~\cite{cvx} in Matlab as a front end. For better efficiency, CVX automatically converts the primal program into a dual SDP before solving the dual. The most time consuming aspect of this procedure by far is the time it takes to convert to the dual. We also tried calling the Mosek SDP solver using Julia as a front end \emph{without} first converting the problem to the dual, but this performs significantly worse. We also tried using the SDPT3 solver in Matlab, but this also performed worse and was unable to find a solution for many problems that Mosek was able to solve.

The rounding procedure of Louis and Makarychev~\cite{louis2016approximation} for the SDP relaxation is very complicated and not designed for practical implementation. The details of this algorithm span multiple technical lemmas and theorems within the paper, and also depend on black-box results from previous papers, which in turn depend on unspecified constants that are needed for the rounding procedure. We therefore used a simple rounding scheme that still produced good results in practice. Solving the SDP relaxation produces a vector $\textbf{x}_u$ for each node $u \in V$, which allows us to compute distance variables $d(u,v) = \| \textbf{x}_u - \textbf{x}_v\|_2^2$ and similarity scores $s(u,v) = \textbf{x}_u^T \textbf{x}_v$ for each pair of nodes $(u,v) \in V \times V$. For each node $v \in V$, we then checked the expansion of the set of nodes \emph{closest} or \emph{most similar} to $v$ with respect to the functions $d$ and $s$. More precisely, for each $v \in V$, we checked the expansion of sets of the form
\begin{align*}
	B(v,r) = \{u \colon d(u,v) < r\} \text{ and } M(v,r) = \{u: s(u,v) > r\},
\end{align*}
for each value of $r \in \mathbb{R}^+$ that produced a different set. Although this is much more simplistic than the theoretical rounding scheme, it often produced expansion values that matched the SDP lower bound, or were at least below a factor of 1.3 from this lower bound. 

\subsection{Generalized hypergraph cut experiments}
In Sections~\ref{sec:trivago} and~\ref{sec:benchmark} we compare our method against IPM and CE, two other methods that handle generalized hypergraph cuts in ratio objectives. 
Both methods apply to the following definition of hypergraph degree  for a node $v \in V$:
\begin{equation}
	\label{gendegree}
	d_v = \sum_{e \colon v \in e} \textbf{w}_e(\{v\} ).
\end{equation}
For the $\delta$-linear splitting function (with $\delta \geq 1$), this definition of hyperedge degree is equivalent to counting the number of hyperedges that a node participates in, though this is not the case for the splitting functions in~\eqref{eq:limilen}. Our method is able to accommodate any node weights, so we also use~\eqref{gendegree} to facilitate comparisons.

The benchmark hypergraphs we consider in Section~\ref{sec:benchmark} all come from the UCI machine learning repository (\url{https://archive.ics.uci.edu/ml/datasets.php}). 
There are a few minor variations in the number of nodes, number of hyperedges, and the sum of hyperedge sizes for these hypergraphs reported in different papers~\cite{panli_submodular,hein2013total,zhang2017re,zhu2022hypergraph}, likely due to minor differences in how the original dataset was converted into a hypergraph. We were unable to find processed versions of these hypergraphs, so we reconstructed them from the original UCI datasets. Our hypergraphs are included in our GitHub repository.


\paragraph{Implementation details for IPM and CE}
We use the previous C++ implementation of IPM~\cite{panli_submodular}. We compute warm start vectors separately using our implementation of CE, and this is not included in the runtimes reported for IPM. 

The clique expansion method for inhomogeneous hypergraphs (CE) works by replacing each hyperedge by a weighted clique in a way that minimizes the difference between cut penalties in the clique and generalized cut penalties of the original hyperedge splitting function~\cite{panli2017inhomogeneous}. If $\textbf{w}_e$ denotes the hyperedge splitting function for $e \in \mathcal{E}$ and $\hat{\textbf{w}}_e$ denotes the cut penalties in the projected clique, the goal is to choose edge weights in the clique so that
\begin{align}
	\label{eq:ce}
	\textbf{w}_e(A) \leq \hat{\textbf{w}}_e(A) \leq C \textbf{w}_e(A) \text{ for all $A \subseteq e$},
\end{align}
for the smallest possible value of $C$. Spectral clustering is then applied to the reduced graph. The first implementation of this method for generalized hypergraph cuts was designed for specific cut penalties that are not cardinality-based (\url{https://github.com/lipan00123/InHclustering}, accompanying~\cite{panli2017inhomogeneous}). A later implementation, in Matlab, was designed for specific cardinality-based submodular splitting functions (\url{https://github.com/lipan00123/IPM-for-submodular-hypergraphs/clique_exp.m}~\cite{panli_submodular}). 

We provide an implementation of CE in Julia that works for arbitrary cardinality-based submodular splitting functions. For cardinality-based functions, all edges in the clique expansion of a hyperedge are given the same weight $p$, so that $\hat{\textbf{w}}_e(A) = p |A| |e \backslash A|$. By checking all possible values of $|A|$ from $1$ to $|e|$, one can easily find the optimal value of $p$ such that~\eqref{eq:ce} holds for the minimum value of $C$. Once the hypergraph is projected into a graph, there are several slight variants of spectral clustering that can be applied in the reduced graph. When computing a normalized Laplacian matrix, one can normalize based on degrees in the reduced graph, or based on the hypergraph degrees. Our code allows for either. Figures~\ref{fig:trivago} and~\ref{fig:benchmark} display results for normalizing based on graph degrees, which more closely follows the original theory of Li and Milenkovic~\cite{panli2017inhomogeneous}. Results are very similar when normalizing based on hypergraph degrees. After computing the second smallest eigenvector $\textbf{v}_2$ of the normalized Laplacian, spectral clustering checks sweep cuts of the form $\{u \in V \colon \textbf{v}_2(u) > r\}$, for all values $r$ leading to a different cut set. The theoretical guarantees for CE are based on taking the sweep cut minimizing conductance in the reduced graph. Following the previous implementation of Li and Milenkovic~\cite{panli_submodular}, we select the sweep cut with smallest conductance in the original hypergraph. Checking conductance in the graph is faster, as it is easier to update the graph cut function when iterating through all sweep cuts, but checking conductance in the original hypergraph leads to better results in practice. For the Trivago experiments, our Julia implementation of CE is 40-50 times faster than the previous Matlab implementation. We did not check the runtime for the Matlab implementation on the hypergraphs in Section~\ref{sec:benchmark}.


\end{document}